\newcommand{\id}{\mathsf{id}}
\newcommand{\bad}{\mathsf{B}}
\newcommand{\CONGEST}{\textsc{congest}}
\newtheorem{ourclaim}{Claim}
\newtheorem{theorem}{Theorem}
\newtheorem{lemma}{Lemma}
\newtheorem{proposition}{Proposition}
\newenvironment{claimproof}[1]{\par\noindent\emph{Proof of claim.}\space#1}{\hfill$\diamond$\medskip}
\title{On the Power of Threshold-Based Algorithms for Detecting Cycles in the \textsc{CONGEST} Model}
\date{\vspace{-5ex}}
\author{
Pierre Fraigniaud$^1$, Maël Luce$^1$, and Ioan Todinca$^2$
}
\begin{document}

\maketitle

\begin{center}

\vspace{5mm}

$^1$Institut de Recherche en Informatique Fondamentale (IRIF) \\ CNRS and Université Paris Cité, France \\ 
$^2$Laboratoire d'Informatique Fondamentale d'Orléans (LIFO) \\ Université d'Orléans, France
\end{center}


\begin{abstract}
It is known that, for every $k\geq 2$, $C_{2k}$-freeness can be decided by a generic Monte-Carlo algorithm running in $n^{1-1/\Theta(k^2)}$ rounds in the \CONGEST\/ model. For $2\leq k\leq 5$, faster Monte-Carlo algorithms do exist, running in $O(n^{1-1/k})$ rounds, based on upper bounding the number of messages to be forwarded, and aborting search sub-routines for which this number exceeds certain thresholds. We investigate the possible extension of these \emph{threshold-based} algorithms, for the detection of larger cycles. We first show that, for every $k\geq 6$, there exists an infinite family of graphs containing a $2k$-cycle for which any threshold-based algorithm fails to detect that cycle. Hence, in particular, neither $C_{12}$-freeness nor $C_{14}$-freeness can be decided by threshold-based algorithms. Nevertheless, we show that $\{C_{12},C_{14}\}$-freeness can still be decided by a threshold-based algorithm, running in $O(n^{1-1/7})= O(n^{0.857\dots})$ rounds, which is faster than using the generic algorithm, which would run in $O(n^{1-1/22})\simeq O(n^{0.954\dots})$ rounds. Moreover, we exhibit an infinite collection of families of cycles such that threshold-based algorithms can decide $\mathcal{F}$-freeness for every $\mathcal{F}$ in this collection. 

\end{abstract}


\section{Introduction}

\subsection{Objective}

Graphs excluding a fixed family $\mathcal{F}$ of graphs, whether it be as subgraphs, induced subgraphs, topological subgraphs, or minors, play a huge role in theoretical computer science, especially in graph theory as well as in algorithm design and complexity, from standard and parametrized complexity, to the design of approximation and exact algorithms. Famous examples in structural graph theory are \emph{Wagner's theorem} stating that  a finite graph is planar if and only if it does not have $K_5$ or $K_{3,3}$ as a minor,  and the  \emph{forbidden subgraph problem} which looks for the maximum number of edges in any $n$-vertex graph excluding a given graph~$G$ as induced subgraph. In the algorithm and complexity framework, it is known that the vertex coloring problem is NP-hard in triangle-free (i.e., $C_3$-free) graphs, but many families $\mathcal{F}$ have been identified, for which computing the chromatic number of graphs excluding every graph in~$\mathcal{F}$ as induced subgraphs can be done in polynomial time. For instance, it is known that, for a graph $H$ of at most six vertices, vertex coloring for $\{C_3,H\}$-free graphs is
polynomial-time solvable if $H$ is a forest not isomorphic to $K_{1,5}$, and NP-hard otherwise~\cite{BroersmaGPS12}. Another recent illustration of the importance of $\mathcal{F}$-free graphs, is graph   isomorphism, which can be tested in time $n^{\mbox{\scriptsize polylog}(k)}$ on all $n$-node graphs excluding an arbitrary $k$-node graph as a topological subgraph~\cite{Neuen2022}. 

In the context of distributed computing for networks however, still very little is known about $\mathcal{F}$-free graphs, even for the most basic case where the graphs in~$\mathcal{F}$ must be excluded as mere subgraphs (not necessarily induced). In fact, up to our knowledge, most of the work in this domain has focused on the standard \CONGEST\/ model, and its variants. Recall that the \CONGEST\/ model is a distributed computing model for networks where the nodes of a graph execute the same algorithm, as a sequence of synchronous rounds, during which every node is bounded to exchange messages of $O(\log n)$ bits with each of its neighbors (see~\cite{Peleg2000}). Also recall that a distributed algorithm $\mathcal{A}$  \emph{decides} a graph property $P$ if, for every input graph~$G$, the following holds: $G$ satisfies $P$ if and only if $\mathcal{A}$ accepts at every node of $G$. Deciding $H$-freeness is a fruitful playground for inventing new techniques for the design of efficient \CONGEST\/ algorithms. Indeed, the problem itself is \emph{local}, yet the limited bandwidth of the links imposes severe limitations on the ability of every node to gather information about nodes at distance more than one from it. 

An important case is checking the absence of a cycle of given size as a subgraph. On the negative side, for every $k\geq 2$, deciding $C_{2k+1}$-freeness requires $\tilde{\Omega}(n)$ rounds in \CONGEST, even for randomized algorithms~\cite{drucker2014}. However, for every $k\geq 2$, $C_{2k}$-freeness can be solved by Monte-Carlo algorithms performing in a \emph{sub-linear} number of rounds. For instance $C_4$-freeness can be decided (deterministically) in $O(\sqrt{n})$ rounds~\cite{drucker2014}, and, for every $k\geq 3$, the round-complexity of $C_{2k}$-freeness is at most 
$O(n^{1-2/(k^2-2k+4)})$ if $k$ is even, and 
$O(n^{1-2/(k^2-k+2)})$ if $k$ is odd (see~\cite{fischer2018}).

 The round-complexity of deciding $C_{2k}$-freeness has been recently improved (see \cite{Censor-HillelFG20}), for small values of~$k$, by an elegant algorithm which, for every $2\leq k \leq 5$, runs in   $O(n^{1-1/k})$ rounds. For $k=2$ the (randomized)  algorithms in~\cite{Censor-HillelFG20,fischer2018} runs with the same asymptotic complexity as the (deterministic) algorithm in~\cite{drucker2014}, i.e., in $O(\sqrt{n})$ rounds, and this cannot be improved, up to a  logarithmic  multiplicative factor~\cite{drucker2014}. However, for $k\in\{3,4,5\}$, the current best-known upper bound on the round-complexity of deciding $C_{2k}$-freeness is   $O(n^{1-1/k})$.  Interestingly, the algorithm in~\cite{Censor-HillelFG20} also allows to decide whether the girth of a network is at most~$g$,  in $\tilde{O}(n^{1-2/g})$ rounds. In other words, the algorithm decides  $\{C_k,\,3\leq k \leq g\}$-freeness for any given~$g$.

In a nutshell, the algorithm in~\cite{Censor-HillelFG20} is based on the notion of \emph{light} and \emph{heavy} nodes, where a node is light if its degree is at most $n^{1/k}$, and heavy otherwise. Cycles of length $2k$ composed of light nodes only can be found in at most $\sum_{i=0}^{k-1}n^{i/k}=\Theta(n^{1-1/k})$ rounds, by brute-force search, using color-coding~\cite{AlonYZ95}. For finding cycles containing at least one heavy node, it is noticed that, by picking a node~$s$ uniformly at random,  the probability that $s$ is neighbor of a heavy node is at least $n^{1/k}/n$, and thus, by repeating the experiment $\Theta(n^{1-1/k})$ times, a neighbor of a heavy node belonging to some $2k$-cycle will be found with constant probability, if it exists. The node~$s$ chosen at a given time of the algorithm initiates  brute-force searches from all its heavy neighbors in parallel, each one searching for a cycle containing it, using color coding.  The main point in the algorithm is the following. It is proved that, for every $k\in\{2,3,4,5\}$, and every $i\in\{1,\dots,k-1\}$, there is a \emph{constant} threshold~$T_k(i)$ such that, if a node colored~$i$ or $2k-i$ has to forward more than $T_k(i)$ searches initiated from the heavy neighbors of~$s$, then that node can safely abort the search, without preventing the algorithm from eventually detecting a $2k$-cycle, if it exists. It follows that the parallel searches initiated by the random source~$s$ run in $O(1)$-rounds, and thus the ``threshold-based'' algorithm in~\cite{Censor-HillelFG20}  runs in $O(n^{1-1/k})$ rounds overall. 

The objective of this paper is to determine under which condition, and for which graph family~$\mathcal{F}$,  threshold-based algorithms can be used for deciding $\mathcal{F}$-freeness.

\subsection{Our Results}

Our first contribution is a negative result.  For every $k\geq 6$, we exhibit an infinite family of graphs in which any threshold-based algorithm fails to decide $C_{2k}$-freeness.  That is, we show that, for $k\geq 6$, a threshold-based algorithm must forward a non-constant amount of messages at some step to guarantee that the parallel searches initiated by the random source $s$ detect a $2k$-cycle. More specifically, we show the following.

\begin{theorem}\label{theo:does-not-extend}
For every $k\geq 6$, there exists an infinite family $\mathcal{G}$ of graphs containing a unique $2k$-cycle $C=(u_0,u_1,\dots,u_{2k-1})$ such that, for every $T \in o(n^{1/6}/\log n)$, the threshold-based algorithm fails to detect~$C$ in at least one $n$-node graph in~$\mathcal{G}$ if the thresholds are set to~$T$. 
\end{theorem}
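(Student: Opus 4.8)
The plan is to construct, for each $k \geq 6$, a family $\mathcal{G}$ of graphs in which the unique $2k$-cycle $C = (u_0, u_1, \dots, u_{2k-1})$ is ``hidden'' behind a vertex that, under the threshold-based algorithm's search rules, is forced to forward $\omega(T)$ distinct searches before the cycle can be traced out. The key design principle: attach to some vertex $u_j$ of the cycle a large gadget — a bipartite-like structure or a union of many short paths — so that the number of searches arriving at $u_j$ (or at one of its cycle-neighbors colored $i$ or $2k-i$) that must be propagated in order not to miss $C$ grows polynomially in $n$. Because $k \geq 6$, we have at least $12$ cycle vertices, which gives enough ``room'' between the heavy source's neighborhood and the far side of the cycle to insert such a gadget without creating any other $2k$-cycle; this is precisely why the construction fails for $k \leq 5$, where the earlier constant thresholds $T_k(i)$ suffice.

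First I would recall the exact semantics of a threshold-based algorithm from the setup in the excerpt: a random source $s$ is picked, brute-force color-coded searches are launched from each heavy neighbor of $s$, and a vertex colored $i$ or $2k-i$ that receives more than $T$ searches aborts (forwards nothing further). For the lower bound it suffices to fix one favourable coloring (the one in which $C$ is rainbow-colored) and one favourable choice of $s$ — namely $s$ adjacent to the heavy vertex $h = u_0$ on the cycle — since if the algorithm is to succeed it must succeed for *these* choices too (color-coding only helps; we are showing even the best case fails). So I would argue: with thresholds set to $T = o(n^{1/6}/\log n)$, along the path from $h$ around to $u_k$ there is a vertex that, to keep $C$ alive, would need to forward more than $T$ partial searches, hence aborts, hence $C$ is never reported.

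The heart of the argument is the gadget and the counting. I would take $u_0$ heavy with degree $\Theta(n^{1/k})$ (or whatever the ``heavy'' threshold is in the referenced algorithm), and then build a gadget hanging off, say, $u_3$ and $u_{2k-3}$ that forces a vertex on the path to receive $\Theta(n^{1/6})$ searches: concretely, create $\Theta(n^{1/6})$ internally-disjoint short paths of the appropriate color-pattern from the heavy neighborhood into a single ``bottleneck'' vertex $w$ that lies on the only route to $u_k$, so that every one of those $\Theta(n^{1/6})$ searches is a legitimate partial copy of a color-prefix of $C$ and must be forwarded through $w$. Then I would verify the two non-triviality conditions: (i) the graph has $n$ vertices and the gadget is consistent with the degree/heaviness constraints (balancing the exponents: the gadget's $\Theta(n^{1/6})$ paths, together with the heavy node's $\Theta(n^{1/k})$ neighbors and the padding needed to reach $n$ vertices, must all fit, which is where $k \geq 6$ and the $1/6$ exponent enter); and (ii) $C$ is the *unique* $2k$-cycle — no shorter or equal-length cycle is created by the gadget, which I would ensure by making the gadget a forest-like attachment (e.g. disjoint paths sharing only their endpoints with the cycle and with $w$) and checking girth by hand.

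The main obstacle I anticipate is condition (ii) together with the exactness of the bound: one must place the gadget so that it genuinely forces $\omega(T)$ forwarded searches at a vertex colored $i$ or $2k-i$ *for every possible setting of the thresholds $T_k(i)$ and every reindexing/relabeling the algorithm might use*, while simultaneously not introducing any parasitic $2k$-cycle (which would let the algorithm ``accept for the wrong reason'' — though since we want it to *detect* $C$, a parasitic cycle would actually only make detection easier, so really the uniqueness is about the *family* being a clean counterexample for $C_{2k}$-freeness and for the stronger claims about $C_{12}, C_{14}$) and no shorter even cycle (which the algorithm is not looking for but which could shorten search paths). Getting the exponent to be exactly $1/6$ — rather than something smaller — will require a careful optimization of how the $\Theta(n^{1/6})$ bottleneck paths are distributed across the $\geq 6$ ``free'' color classes strictly between the heavy endpoint and the midpoint $u_k$; I expect this optimization, and the accompanying verification that no single intermediate vertex can shortcut the counting, to be the technically delicate part.
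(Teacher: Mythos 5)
Your core construction idea matches the paper's: hang a gadget off the cycle so that the searches launched from the heavy neighbors of a prospective source all funnel, as well-colored partial paths, into a single cycle vertex, which then exceeds the threshold and aborts, killing the only route by which $u_0$'s identifier could close the cycle. But as written the argument has genuine gaps. First, the logic ``fix one favourable coloring and one favourable $s$; if the algorithm succeeds it must succeed for these choices'' is backwards for a Monte-Carlo algorithm that repeats $\Theta(n^{1-1/k})$ independent choices of $s$ (each with fresh colorings): to prove failure you must upper-bound the detection probability over \emph{all} iterations, including those where $s$ lands on the cycle itself (detected by $\textsf{color-BFS}(s)$ with no thresholds involved). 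Moreover the congestion at your bottleneck is not deterministic -- it is a sum of indicators of randomly well-colored gadget paths -- so ``the best case fails'' cannot be established outright; one needs a concentration bound (in the paper, $\Pr[X\leq T]=N^{T}e^{-\Theta(N)}$ for a Binomial $X$) strong enough to survive a union bound over the polynomially many iterations, plus the separate accounting that $s\in C$ happens with total probability $\tilde{O}(n^{-1/k})$. Second, your gadget is described once, ``from the heavy neighborhood,'' but the congestion must be triggered for \emph{every} potential source adjacent to the heavy cycle vertex (there are at least $n^{1/k}$ of them, since that vertex must be heavy), and symmetrically for the other heavy cycle vertex; each such source needs its own batch of heavy neighbors with paths converging on the bottleneck. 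This per-source replication is what actually drives the sizing in the paper ($N$ sources, $N$ paths each, private neighbors to make the path-starts heavy, hence $n=\Theta(N^3)$), and it is absent from your plan.

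Third, the uniqueness of the $2k$-cycle is not a ``check girth by hand'' afterthought, and your aside that a parasitic cycle ``would only make detection easier'' misreads the goal: easier detection destroys the lower bound (the algorithm would then legitimately reject), and it also breaks the requirement that the chosen $s$ lie on no $2k$-cycle. Once you replicate the gadget per source, the paths between the source side and the bottleneck side form a dense bipartite pattern, and for $k=6$ this pattern inevitably threatens to contain $12$-cycles; the paper has to replace the complete bipartite pattern by a dense $C_4$-free bipartite graph (a $d$-regular incidence-type graph with parts of size $d^2$), which reduces the per-source path count from $N=\Theta(n^{1/3})$ to $\sqrt{N}=\Theta(n^{1/6})$ -- this, and not an optimization of how paths are ``distributed across color classes,'' is where the exponent $1/6$ in the theorem comes from (for $k\geq 7$ the construction in fact rules out thresholds up to $o(n^{1/3}/\log n)$). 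So the proposal captures the right mechanism but omits the probabilistic failure analysis over the whole run, the per-source replication that dictates the graph's parameters, and the $C_4$-free bipartite trick that makes the $k=6$ case (and hence the stated $n^{1/6}$ bound) work.
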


In other words, Theorem~\ref{theo:does-not-extend} says that, for every $k\geq 6$,  there are no efficient threshold-based algorithms capable to decide $C_{2k}$-freeness. In particular, neither $C_{12}$-freeness nor $C_{14}$-freeness can be decided by a threshold-based algorithm. Nevertheless, our second contribution states that this is not the case of determining whether a graph is free of both $C_{12}$ and $C_{14}$.

\begin{theorem}\label{theo:small-families}
$\{C_{12},C_{14}\}$-freeness can be decided by a threshold-based algorithm running in  $O(n^{1-\frac{1}{7}})$ rounds.
\end{theorem}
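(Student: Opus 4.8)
The plan is to reduce the detection of $\{C_{12},C_{14}\}$ to a threshold-based search rooted at the heavy neighbors of a random source $s$, exactly in the spirit of the algorithm of~\cite{Censor-HillelFG20}, but tuned to the value $k=7$. We declare a node \emph{light} if its degree is at most $n^{1/7}$ and \emph{heavy} otherwise. First, I would handle cycles all of whose vertices are light: a standard color-coding brute-force search finds any $C_{12}$ or $C_{14}$ consisting only of light vertices in $\sum_{i=0}^{6} n^{i/7} = O(n^{1-1/7})$ rounds, since each step of the search at most multiplies the number of partial paths by the degree bound $n^{1/7}$, and the search depth is at most $7$ on each "side" of the cycle (a $12$- or $14$-cycle is split into two paths of length at most $7$). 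For cycles containing at least one heavy vertex, sampling a node $s$ uniformly at random hits a neighbor of a heavy vertex with probability $\ge n^{1/7}/n$, so $O(n^{1-1/7}\log n)$ independent repetitions suffice to find such a source with high probability; the chosen $s$ then launches, in parallel, color-coded searches from each of its heavy neighbors, each one looking for a $C_{12}$ or $C_{14}$ through that neighbor.

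The heart of the argument — and the step I expect to be the main obstacle — is establishing the \emph{threshold lemma}: that there is a constant $T = T(7)$ such that if any node, at layer $i$ of the color-coded search, is asked to forward more than $T$ distinct search tokens, it may abort the forwarding of the surplus ones without losing completeness. In the case $2\le k\le 5$ this is what~\cite{Censor-HillelFG20} proves, and by Theorem~\ref{theo:does-not-extend} no such constant exists for a \emph{single} $C_{2k}$ with $k\ge 6$. The point of restricting to the \emph{pair} $\{C_{12},C_{14}\}$ is that the bad configurations witnessing Theorem~\ref{theo:does-not-extend} for $k=6$ (i.e.\ for $C_{12}$) and for $k=7$ (i.e.\ for $C_{14}$) cannot coexist without creating the \emph{other} cycle of the pair: a graph that forces super-constant forwarding while searching for a $C_{12}$ will be shown to contain a $C_{14}$ (and symmetrically), so that in any graph that is, say, $C_{14}$-free, the $C_{12}$-search is automatically well-behaved, and vice versa. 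Concretely, I would analyze, for a node $v$ that must forward $t$ tokens at layer $i$, the structure of the $t$ partial paths arriving at $v$ and the $t'$ partial paths on the complementary side: these give many short $v$–$v'$ paths of controlled lengths, and a counting/pigeonhole argument (à la Bondy–Simonovits, on the number of vertices reachable by short walks) produces two internally disjoint such paths whose lengths sum to $12$ or to $14$, hence one of the two forbidden cycles. Thus whenever $t$ exceeds a suitable constant, the graph already contains a member of the family, and aborting is safe.

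With the threshold lemma in hand, the complexity bookkeeping is routine. Each color-coded search from a heavy neighbor of $s$ has depth $O(1)$ (at most $7$ layers per side) and, by the threshold lemma, every node forwards $O(1)$ tokens per layer, so one search-round-trip costs $O(1)$ rounds; summing over the $O(\log n)$ color-coding trials and the $O(n^{1-1/7}\log n)$ choices of $s$ gives $\tilde O(n^{1-1/7})$ rounds, and absorbing polylogarithmic factors (or, as in~\cite{Censor-HillelFG20}, a more careful amortization that pipelines the searches from different sources) yields the claimed $O(n^{1-1/7})$ bound. Correctness then follows by combining the light-only case, the heavy case with the threshold lemma, and the standard success-amplification of color coding.

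Finally, I would remark that the same scheme will reappear, suitably generalized, in the construction of the infinite collection of families $\mathcal{F}$ announced in the abstract: the essential ingredient is always a pair (or larger set) of even cycle lengths for which the $C_{2k}$-obstructions of Theorem~\ref{theo:does-not-extend} are mutually exclusive, forcing the threshold bound back down to a constant.
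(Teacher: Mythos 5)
Your overall skeleton (light/heavy split at degree $n^{1/7}$, random source $s$, parallel color-coded searches from the heavy neighbors of $s$, constant thresholds) matches the paper, and the complexity bookkeeping is fine. The gap is in your central ``threshold lemma''. You claim that there is a constant $T$ such that any node asked to forward more than $T$ tokens can abort because ``the graph already contains a member of the family''. This implication is false: heavy congestion at a node colored $i$ arises from many node-disjoint well-colored paths of length $i+1$ from $s$, and two such disjoint paths only create a cycle of length $2i+2\in\{4,6,\dots,14\}$ --- typically a $C_4$, $C_6$, $C_8$ or $C_{10}$, none of which is forbidden here. So congestion by itself certifies nothing, and a node cannot simply reject or abort on that basis; indeed the constructions behind Theorem~\ref{theo:does-not-extend} are exactly graphs with huge legitimate congestion and no short forbidden cycle. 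Moreover, even when a forbidden cycle does exist somewhere, ``aborting is safe'' requires showing the algorithm still \emph{detects} a cycle with constant probability, which your sketch never addresses.

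What the paper actually proves is a statement about the \emph{sources}, not about the congested nodes: fixing a heavy $14$-cycle $C^\star=(u_0,\dots,u_{13})$ with $u_0$ of maximum degree, it defines for each $i$ the set $\bad(i)$ of neighbors $s$ of $u_0$ that are colored $-1$, lie in no $12$- or $14$-cycle, and have more than $f(i)$ node-disjoint well-colored paths to $u_i$; via Lemma~\ref{lem-paths-from-single-s} these are the only sources that can push $u_i$ over the threshold $T_7(i)$. The heart of the proof (Proposition~\ref{prop_res_1214}, with $f(1)=60,\dots,f(6)=6$) bounds $\bigl|\bigcup_i \bad(i)\bigr|$ by a constant fraction of $\deg(u_0)$ plus a constant, by showing that if several such bad sources had merging path systems, one could assemble a $12$- or $14$-cycle passing \emph{through those bad sources} (often using segments $u_0,u_1,\dots$ of $C^\star$ itself), contradicting their definition --- this is where the lengths are forced to be exactly $12$ or $14$, something your Bondy--Simonovits-style counting does not deliver. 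Finally, the $C_{12}$ case is handled in a separate second phase: only after the $14$-cycle search has (statistically) certified $C_{14}$-freeness does a threshold of $1$ suffice, because two identifiers arriving at a node of a well-colored $12$-cycle would close a $14$-cycle. Your proposal collapses these two asymmetric phases and the good-fraction-of-neighbors argument into a single unproved (and, as stated, incorrect) lemma, so the proof as written does not go through.
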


Note that the generic algorithm from~\cite{fischer2018} would run in $O(n^{1-\frac{1}{22}})= O(n^{0.954\dots})$ rounds for deciding $\{C_{12},C_{14}\}$-freeness by checking separately whether the graph contains a $C_{12}$, and whether the graph contains a $C_{14}$. Instead, our algorithm performs in $O(n^{1-1/7})= O(n^{0.857\dots})$ rounds. Note that establishing that $\{C_{10},C_{12}\}$-freeness can be decided by a threshold-based algorithm running in  $O(n^{1-\frac{1}{6}})$ rounds is rather easy because $C_{10}$-freeness can be decided by such an algorithm. The point is that, again, thanks to Theorem~\ref{theo:does-not-extend}, neither  $C_{12}$-freeness nor  $C_{14}$-freeness can be decided by a threshold-based algorithm. 

Finally, note that, by construction, threshold-based algorithms can decide $\mathcal{F}_k$-freeness, for every $k\geq 2$, where $\mathcal{F}_k={\{C_{2\ell}\mid 2\leq \ell \leq k\}}$. This raises the question of identifying infinite collections of smaller families $\mathcal{F}$ of cycles for which threshold-based algorithms succeed to decide $\mathcal{F}$-freeness. We identify two such families. 

\begin{theorem}\label{theo:one-every-four}
Let $k\geq 2$,  $\mathcal{F}'_k=\{C_{4\ell}\mid 1\leq \ell \leq k\}$, and ${\mathcal{F}''_k=\{C_{4\ell+2} \mid 1\leq \ell \leq k\}}$. Both $\mathcal{F}'_k$-freeness and $\mathcal{F}''_k$-freeness can be  decided by threshold-based algorithms running in $\tilde{O}(n^{1-1/2k})$ rounds, and $\tilde{O}(n^{1-1/(2k+1)})$ rounds, respectively.
\end{theorem}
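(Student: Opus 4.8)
The plan is to instantiate the threshold-based template of~\cite{Censor-HillelFG20} with a single parameter $p$ chosen so that the longest cycle of the target family is exactly $C_{2p}$: $p=2k$ for $\mathcal{F}'_k$ (longest member $C_{4k}$) and $p=2k+1$ for $\mathcal{F}''_k$ (longest member $C_{4k+2}$). Call a node \emph{light} if its degree is at most $n^{1/p}$, and \emph{heavy} otherwise. The algorithm has two parts. In the first part, for every $C_{2m}\in\mathcal{F}$ it brute-forces (with color-coding~\cite{AlonYZ95}) the existence of a $2m$-cycle using light nodes only; since the longest such cycle is $C_{2p}$ and light degrees are at most $n^{1/p}$, this costs $\tilde{O}\big(\sum_{i=0}^{p-1}n^{i/p}\big)=\tilde{O}(n^{1-1/p})$ rounds, exactly as in~\cite{Censor-HillelFG20}, which already gives the announced running times $\tilde{O}(n^{1-1/2k})$ and $\tilde{O}(n^{1-1/(2k+1)})$. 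In the second part, for every $C_{2m}\in\mathcal{F}$ it searches for a $2m$-cycle containing a heavy node: it repeats $\tilde{O}(n^{1-1/p})$ times the experiment of picking a source $s$ uniformly at random and launching, from every heavy neighbor of $s$, a color-coded search for a properly coloured $2m$-cycle rooted at that neighbor, aborting at any node that would have to forward more than a threshold $T=T(k)$ search tokens. Because $T$ depends on $k$ only, each experiment takes $O(1)$ rounds, so the second part also costs $\tilde{O}(n^{1-1/p})$ rounds, and the color-coding and amplification overheads are absorbed into the $\tilde{O}(\cdot)$.

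Soundness is clear (a reported cycle is a genuine member of $\mathcal{F}$ in $G$), so the work is to prove completeness of the second part, i.e., that aborting is safe; I expect this to parallel the proof of Theorem~\ref{theo:small-families}, now tuned to the arithmetic structure of the two families. I would argue by contradiction: suppose $G$ contains a cycle of $\mathcal{F}$ and let $C^\star$, of length $2m^\star$, be a shortest one; consider the experiment (right source, right colouring) meant to detect $C^\star$, and suppose a node $x$ of $C^\star$, coloured $i$, aborts. Then $x$ receives more than $T$ tokens, so there are more than $T$ distinct heavy neighbours $h_1,\dots,h_T$ of the current source $s$, each joined to $x$ by a properly coloured path of length exactly $i$. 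Pigeonholing on the vertex where these paths first coincide when traced back from $x$, and using the detour $h_a-s-h_b$, yields short cycles through $\{s,h_a,h_b\}$; I would show that with $T=T(k)$ large enough, and with an appropriate restriction on which colour classes may forward tokens, at least one of these cycles has length in $\mathcal{F}$ and is strictly shorter than $2m^\star$, contradicting minimality. A final bookkeeping step checks that such a strictly shorter member of $\mathcal{F}$ is itself detected — either in the first part (if all its nodes are light) or by the experiment targeting its own length (if it contains a heavy node) — which closes the induction on cycle length.

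The hard part will be exactly this structural claim, and it is precisely where the ``one in every four'' shape of $\mathcal{F}'_k$ and $\mathcal{F}''_k$ is used. Merging two search paths adds their lengths, so from the over-abundant paths at $x$ one can synthesise a set of even cycle lengths, and the goal is to force this set to meet $4\mathbb{Z}$ (for $\mathcal{F}'_k$) or $4\mathbb{Z}+2$ (for $\mathcal{F}''_k$) below the cutoff $4k$, resp.\ $4k+2$. I would push two complementary levers. First, let tokens be forwarded only at colour classes of a fixed parity, so that an abort is possible only at a colour $i$ for which the basic cycle $h_a-x-h_b-s-h_a$ (length $2i+2$ when the two paths meet only at $x$) automatically has the correct residue modulo~$4$; the cases where the paths meet earlier, or where $s$ sits on one of them, are then absorbed into the constant $T(k)$ and into running, in parallel, the experiments for all lengths of the family. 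Second, use a cycle-space argument: more than $T$ internally (nearly) disjoint paths between two vertices form a generalised theta graph, and a theta graph with sufficiently many branches contains cycles of many consecutive even lengths, so one of them necessarily lies in the target arithmetic progression $\{4,8,\dots,4k\}$ or $\{6,10,\dots,4k+2\}$; quantifying ``sufficiently many'' is what fixes $T(k)$. Making these two ingredients cooperate — in particular, ruling out the degenerate configurations in which all synthesised lengths share the wrong residue modulo~$4$ — is the crux of the argument.
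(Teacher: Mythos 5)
Your algorithmic shell (light/heavy split at $n^{1/p}$ with $p=2k$ or $2k+1$, color-coded brute force on light nodes, random sources with thresholded searches from heavy neighbors, and an induction in which shorter members of the family are assumed already handled) matches the paper, which implements the induction by deciding $C_4$-, $C_8$-, $\dots$-freeness sequentially and, when looking for $C_{4k}$, assumes the graph is $\{C_{4\ell}\mid \ell<k\}$-free. The gap is in your completeness crux. All well-colored paths from the source $s$ to a node $x$ colored $i$ have the \emph{same} length $i+1$, because a shared vertex of two such paths has the same color and hence the same position on both. Consequently the ``generalised theta graph'' they form only contains cycles of length $2(b-a)$ where $b,a$ are colors at which two branches diverge and reconverge; in particular, when the branches are internally disjoint every cycle has length exactly $2i+2$. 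Your claim that a theta graph with many branches contains cycles of many consecutive even lengths is false in exactly this equal-length situation, so you cannot force a cycle length into the residue class $0\bmod 4$ (resp.\ $2\bmod 4$). Worse, at the positions $i$ where $2i+2$ has the \emph{wrong} residue (even $i$ for $\mathcal{F}'_k$, odd $i$ for $\mathcal{F}''_k$), a single source can legitimately have arbitrarily many internally disjoint well-colored paths to $u_i$ without creating any forbidden cycle, so per-source congestion there cannot be bounded by any constant $T(k)$; neither your parity restriction on forwarding (which would break the color-BFS anyway, since every color class must relay) nor ``absorbing the degenerate cases into $T(k)$'' repairs this.

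The paper's resolution is genuinely different at these wrong-parity positions: it does not bound the congestion caused by one source, but the \emph{number of sources} that can cause excessive congestion. It sets position-dependent thresholds ($T_{2k}(i)=T_{2k}(i-1)$ at the good-parity positions, where two disjoint paths from one source would already close a $(2i+2)$-cycle of the family, and $T_{2k}(i)=(i+1)\,T_{2k}(i-1)$ at the others), and shows via Lemma~\ref{lem-paths-from-single-s} that if \emph{two} distinct neighbors $s,s'$ of $u_0$ both made $u_i$ exceed its threshold, each would have a well-colored path to $u_i$ avoiding $u_0$ (and, for $s'$, avoiding $s$'s path), so $s,u_0,s'$ together with these paths close a $(2i+4)$-cycle — which \emph{does} have the right residue, contradicting the inductive freeness assumption (or putting $s$ on a $4k$-cycle, detected by $\textsf{color-BFS}(s)$). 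Hence at most one bad source per wrong-parity position, i.e.\ at most $k-1$ bad neighbors of $u_0$ in total, and the $\Theta(n^{1-1/2k})$ samples hit a good neighbor colored $-1$ with constant probability. This ``count the bad sources through cycles via $u_0$'' step is the missing idea in your proposal; without it (or a correct substitute), the argument does not go through.
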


Due to lack of space, the proof of this latter theorem in placed in Appendix~\ref{app:theo:one-every-four}.


\subsection{Related Work}

Deciding $H$-freeness for a given graph~$H$ has been considered in~\cite{eden2022}, which describes an algorithm running in $\tilde{O}(n^{2-2/(3k+1)+o(1)})$ rounds for $k$-node graphs~$H$. This round complexity is nearly matching the general lower bound $\tilde{\Omega}(n^{2-\Theta(1/k)})$ established in~\cite{fischer2018}. This latter bound can be overcome for specific graphs~$H$, and typically when $H$ is a cycle. 

For every $k\geq 3$, $C_k$-freeness can be decided in $O(n)$ rounds (see, e.g., \cite{korhonen2017}). However, the exact round-complexity of deciding $C_k$-freeness varies a lot depending on whether $k$ is even or odd. It was proved in~\cite{drucker2014} that deciding $C_{2k+1}$-freeness requires $\tilde{\Omega}(n)$ rounds for $k\geq 2$. Nevertheless, sub-linear algorithms are known for even cycles. In particular, the round-complexity of deciding $C_4$-freeness was established as $\tilde{\Theta}(n^{1/2})$ in~\cite{drucker2014}. The lower bound $\tilde{\Omega}(n^{1/2})$ rounds is also known to hold for deciding $C_{2k}$-freeness, for every $k\geq 3$~\cite{korhonen2017}. The best generic upper bound for deciding $C_{2k}$-freeness is $\tilde{O}(n^{1-\Theta(1/k^2)})$ rounds~\cite{fischer2018}. Faster algorithms are known, but for specific values of $k$ only. Specifically,  for every $k\in \{3,4,5\}$, $C_{2k}$-freeness can be decided in  $O(n^{1-1/k})$ rounds~\cite{Censor-HillelFG20}. The special case of triangle detection, i.e., deciding $C_3$-freeness is widely open. 

It may also be worth mentioning the study of cycle-detection in the context of a model stronger than \CONGEST, namely in the \textsc{congested clique} model. In this model, efficient algorithms have been designed. In particular, it was shown in~\cite{censor2015} that $C_3$-freeness can be decided in $O(n^{0.158})$ rounds, $C_4$-freeness can be decided in~$O(1)$ rounds, and $C_k$-freeness can be decided in $O(n^{0.158})$ rounds for any $k\geq 5$. 

The interested reader is referred to~\cite{censor2022} for a recent survey on subgraph detection, and related problems, in  \CONGEST\/ or  similar models.

%
%


\section{Preliminaries}
\label{sec:preliminaries}

In this section, we recall the main techniques used for deciding whether the graph contain a cycle of a given length as a subgraph, and we summarize the threshold-based algorithms defined in~\cite{Censor-HillelFG20}. 

\subsection{Subgraph Detection} 

Recall that a graph $H$ is a \emph{subgraph} of a graph $G$ if $V(H)\subseteq V(G)$ and $E(H)\subseteq E(G)$. Given a graph~$H$, a deterministic distributed algorithm $\mathcal{A}$ for the CONGEST model decides $H$-freeness in $R(\cdot)$ rounds if, for every $n$-node graph~$G$, whenever $\mathcal{A}$ runs in~$G$, each node outputs ``accept'' or ``reject'' after $R(n)$ rounds, and 
\[
\mbox{$G$ contains $H$ as a subgraph} \iff \mbox{at least one node of $G$ rejects.}
\]
A randomized Monte-Carlo algorithm decides $H$-freeness if, for every $n$-node graph~$G$, 
\[
\left\{\begin{array}{l}
\mbox{$G$ contains $H$ as a subgraph}  \Longrightarrow  \Pr[\mbox{at least one node of $G$ rejects}]\geq 2/3.\\
\mbox{$G$ does not contain $H$ as a subgraph}  \Longrightarrow  \Pr[\mbox{all nodes accept}]\geq 2/3.
\end{array}\right.
\]
In fact, most algorithms for deciding $H$-freeness are 1-sided, i.e., they alway accept $H$-free graphs, and may err only by failing to detect an existing copy of $H$ in~$G$. By repeating the execution of 1-sided error algorithms for sufficiently many times, one can make the error probability as small as desired. 

In the case $H=C_{2k}$, which is the framework of this paper, one standard technique, called \emph{color-coding}~\cite{AlonYZ95}, plays a crucial role and was used in many algorithms for detecting cycles in various contexts (see, e.g., \cite{Censor-HillelFG20,EvenFFGLMMOORT17,FraigniaudO19}). 

\paragraph{Color Coding.} 

Let $G=(V,E)$, and $W\subseteq V$. For deciding whether there is a $2k$-cycle including one node in $W$, let every node of $G$ pick a color in $\{0,\dots,2k-1\}$ uniformly at random. Then every node $w\in W$ colored~0 launches a search, called $\textsf{color-BFS}(k,w)$,  by sending its identifier to all its neighbors colored~1 and~$2k-1$. Every node colored~1 receiving an identifier from a node colored~0 forwards that identifier to all its neighbors colored~2, while every node colored~$2k-1$ receiving an identifier from a node colored~0 forwards it to all its neighbors colored~$2k-2$. More generally, for every $i=2,\dots,k-1$, every node colored~$i$ receiving an identifier from a node colored~$i-1$ forwards it to all its neighbors colored~$i+1$, and, for every $i=2k-2,\dots,k+1$, every node colored~$i$ receiving an identifier from a node colored~$i+1$ forwards it to all its neighbors colored~$i-1$. If a node colored~$k$ receives a same identifier from a neighbor colored~$k-1$, and from a neighbor colored~$k+1$, then it rejects. 

The number of rounds required by $\textsf{color-BFS}(k,W)$ is at most $k\, |W|$. Also, if the nodes in the graphs have maximum degree~$\Delta$, then the number of rounds is at most~$O(\Delta^{k-1})$. Overall, we have 
\begin{equation}\label{eq:colBFS}
\mbox{\#rounds \sf color-BFS}(k,W)=O(\min\{k\,|W|,\; \Delta^{k-1}\}).
\end{equation}
If there is a $2k$-cycle passing through a node in~$W$, then the probability that this cycle is colored appropriately is at least $\rho=1/(2k)^{2k}$, and therefore the cycle is found with probability at least~$\rho$. By repeating the procedure a constant number of times proportional to $(2k)^{2k}$, the cycle is found with probability at least~$2/3$.  


\subsection{Threshold-Based Algorithms}
\label{subsec:threshold-based-algo}

We denote the algorithm defined in~\cite{Censor-HillelFG20} by  $\mathcal{A}^\star$. This algorithm, summarized in Algorithm~\ref{algo}, heavily uses color-coding. A node~$u$ of $G$ is called \emph{light} if $\deg(u)\leq n^{1/k}$, and \emph{heavy} otherwise. A $2k$-cycle $C$ containing only light nodes is called \emph{light cycle}, and is heavy otherwise. 

\paragraph{Detecting light cycles.} 

Detecting whether there is a light $2k$-cycle is easy by applying color-coding in the subgraph $G[U]$ of $G$ induced by light nodes~$U$ (i.e.,  only light nodes participate). By Eq.~\eqref{eq:colBFS} with $\Delta=n^{1/k}$, we get that the detection of light cycles  takes $O(n^{1-1/k})$ rounds. If a light $2k$-cycle exists in the graph, some light node rejects with constant probability, and we are done. 

\paragraph{Detecting heavy cycles.} 

For detecting heavy cycles, $\mathcal{A}^\star$ picks a node $s$ uniformly at random in the graph\footnote{This can be done by letting each node choosing an integer value in $\{1,\dots,n^3\}$ uniformly at random; The node $s$ with smallest value is the chosen node. With high probability, this node is unique.}. The idea is that if there is a heavy $2k$-cycle in the graph, say $C=(u_0,u_1,...,u_{2k-1})$ where $u_0$ is heavy, then the probability that a neighbor $s$ of $u_0$ is picked is at least $n^{-(1-1/k)}$ since $\deg(u_0)\geq n^{1/k}$. Therefore, by repeating $\Theta(n^{1-1/k})$ times the choice of~$s$, a neighbor of $u_0$ will be picked with constant probability. For each choice of~$s$, the goal is to proceed with searching a $2k$-cycle in a constant number of rounds.  

The chosen node~$s$ launches $\textsf{color-BFS}(k,s)$ for figuring out whether there is a $2k$-cycle passing through~$s$. By Eq.~\eqref{eq:colBFS} with $|W|=1$, this takes $O(1)$ rounds. If a $2k$-cycle is detected, some node rejects, and we are done. 

We therefore assume from now that $s$ does not belong to a $2k$-cycle. The source node $s$ then sends a message to all its heavy neighbors~$W$, and each of these neighbors $w$ launches $\textsf{color-BFS}(k,w)$, in parallel. At this point, one cannot simply rely on Eq.~\eqref{eq:colBFS} with $|W|\leq\deg(s)$ to bound the round-complexity of $\textsf{color-BFS}(k,W)$ because $s$ may have non-constant degree. The central trick used in~\cite{Censor-HillelFG20} consists to provide each node with a threshold for the number of messages the node can forward at a given step of a $\textsf{color-BFS}$. In case the number of messages to be transmitted exceeds the threshold, then the node aborts, i.e., it stops participating to the current $\textsf{color-BFS}$. It is shown that such threshold-based approach may prevent the nodes to detect $2k$-cycles, but not too often, and that a $2k$-cycle will be detected with constant probability anyway, if it exists. This is summarized by the following lemma. 

\begin{lemma}[\cite{Censor-HillelFG20}]\label{lem:de-Censor-Hillel-et-al}
Let $C=(u_0,u_1,...,u_{2k-1})$ be a $2k$-cycle in~$G$, with $u_0$ heavy, and of maximum degree among the nodes in~$C$. For every $k\in\{2,3,4,5\}$, there exists a constant $\alpha_k>0$, and there exist constant thresholds $T_k(i)$, ${i=1,\dots,k-1}$, such that, even if nodes colored $i$ or $2k-i$ abort the search launched from the set~$W$ of heavy neighbors of~$s\in N_{G}(u_0)$ at the $i$-th step of $\textrm{\sf color-BFS}(k,W)$ whenever they generate a congestion larger than $T_k(i)$, still, for a fraction at least $\alpha_k$ of the neighbors~$s$ of~$u_0$,  the cycle~$C$ will be found, unless $s$  itself belongs to a $2k$-cycle. 
\end{lemma}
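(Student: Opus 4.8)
\emph{Proof plan.}
The plan is to condition on the event $\mathcal{E}$ that the random coloring is \emph{canonical on $C$}, meaning that $u_j$ receives color $j$ for every $j\in\{0,\dots,2k-1\}$; by the standard color-coding bound, $\Pr[\mathcal{E}]\ge (2k)^{-2k}=:\rho$, a positive constant depending only on $k$. Since $u_0$ is heavy and adjacent to $s$, it belongs to $W$, and on $\mathcal{E}$ it is colored $0$ and launches $\textsf{color-BFS}(k,u_0)$; the properly-colored frontier of this search travels along $C$ from $u_0$ toward $u_k$ on both sides, so $u_k$ eventually receives the identifier of $u_0$ from a color-$(k-1)$ neighbor and a color-$(k+1)$ neighbor and rejects --- provided that none of the $2(k-1)$ forwarding nodes $u_1,\dots,u_{k-1},u_{k+1},\dots,u_{2k-1}$ aborts ($u_k$ itself never aborts, since it only checks for a collision and forwards nothing). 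Hence the whole task reduces to choosing the constants $T_k(i)$ so that, for all but an $\varepsilon$-fraction of the neighbors $s$ of $u_0$ that lie on no $2k$-cycle, none of these nodes aborts; for each such $s$ the cycle $C$ is then detected conditionally on $\mathcal{E}$, hence with probability $\ge\rho$, which is exactly what $\mathcal{A}^\star$ needs over its constant number of coloring repetitions, and one sets $\alpha_k=1-\varepsilon$ (the remaining neighbors being absorbed by the clause ``unless $s$ belongs to a $2k$-cycle'', since those are caught by the earlier $\textsf{color-BFS}(k,s)$).

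The next step is a \emph{deterministic} congestion bound. Fix $j\in\{1,\dots,k-1\}$ and the forwarding node $u_j$. At its relevant step, the identifiers present at $u_j$ come from the $w\in W$ admitting a properly-colored path of length $j$ to $u_j$; such a path is simple because all colors along it are distinct, so, whatever the coloring,
\[ \mathrm{cong}(u_j)\ \le\ \#\{\text{length-}j\text{ paths from } W \text{ to } u_j\}. \]
Summing over $s\in N_G(u_0)$ and prepending the edges $u_0s$ and $sw$ to each such path gives
\[ \sum_{s\in N_G(u_0)}\mathrm{cong}(u_j)\ \le\ \mathcal{W}_j:=\#\{\text{walks } u_0\to s\to w\to\cdots\to u_j\text{ of length }j+2 \text{ in }G,\ w\text{ heavy}\}, \]
and symmetrically for the ``$-$''-branch nodes, with $j$ replaced by $2k-j$.

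The crux is to show $\mathcal{W}_j=O_k(\deg(u_0))$ when $k\le 5$, once the sources $s$ lying on a $2k$-cycle are discarded (those are handled separately). Here each walk has length $j+2\le k+1\le 6$, and I would classify it according to how its message-carrying suffix $(w,\dots,u_j)$ first meets $C\cup\{s\}$. If that suffix is internally disjoint from $C\setminus\{u_0\}$, avoids $s$, and $w\ne u_0$, then together with the $C$-arc $u_ju_{j-1}\cdots u_0$ and the edges $u_0s$, $sw$ it closes into a cycle through $s$ of length $2j+2$: for $j=k-1$ this is a $2k$-cycle through $s$, impossible since $s$ lies on no $2k$-cycle; for $j<k-1$ it is a shorter even cycle through $s$ sharing the edge $u_0u_1$ with $C$, and since $u_0$ has maximum degree in $C$ (so $|N_G(u_m)|\le\deg(u_0)$ for every $u_m$ of $C$), only $O_k(\deg(u_0))$ vertices $s$ can support such a configuration. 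In the remaining case the suffix merges with $C$ or runs through $s$; as the walk has at most six vertices, only boundedly many such ``merge patterns'' occur, and for each I would charge the walk to one of its vertices on $C$ and bound the number of completions by the same maximum-degree argument. Summing over the constantly many cases yields $\mathcal{W}_j=O_k(\deg(u_0))$; choosing each $T_k(i)$ a sufficiently large constant and taking a union bound over the $2(k-1)$ forwarding nodes then leaves at most $\varepsilon\deg(u_0)$ bad sources, which is exactly the reduction demanded above.

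I expect this merge-pattern case analysis to be the main obstacle, and it is precisely where the argument must break down for $k\ge 6$: then the length-$(k+1)$ walks are long enough to realize patterns that create neither a short cycle nor a cycle through $s$, so they can occur $\omega(1)$ times per source even when $C$ is the \emph{unique} $2k$-cycle of $G$ --- the phenomenon underlying Theorem~\ref{theo:does-not-extend}.
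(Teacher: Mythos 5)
There is no proof of this lemma in the paper itself: it is imported as a black box from~\cite{Censor-HillelFG20} (its mechanism is mirrored later, in Section~4 and Appendices~C--D, for the $\{C_{12},C_{14}\}$ case). So your attempt has to be judged on its own, and it has a genuine gap at its central step. Your reduction (condition on the canonical coloring of $C$, note $u_0\in W$, and require only that the relay nodes $u_1,\dots,u_{k-1},u_{k+1},\dots,u_{2k-1}$ do not abort) is fine, but everything then rests on the claim $\mathcal{W}_j=O_k(\deg(u_0))$, i.e.\ that the total number of well-colored walks $u_0\to s\to w\to\cdots\to u_j$, summed over all sources $s$ not on a $2k$-cycle, is linear in $\deg(u_0)$. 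This claim is not proved (you explicitly defer the ``merge-pattern'' analysis, which is the whole difficulty), and it is in fact false. Take $k=5$, $j=2$: let $u_0$ have $D$ neighbors $s_1,\dots,s_D$, give each $s_i$ its own $m=\omega(1)$ private heavy neighbors, and join all these heavy nodes to a single hub $x$ adjacent to $u_2$ (plus pendant nodes to make the private neighbors heavy). Then $u_0$ still has maximum degree on $C$, no $s_i$ lies on a $10$-cycle (every cycle through $s_i$ has length $4$, $6$ or $12$), yet with constant probability over the coloring the number of well-colored walks is $\Theta(Dm)=\omega(\deg(u_0))$. The lemma nevertheless holds in this graph, because the actual congestion at $u_2$ is not governed by the raw walk count: all these walks funnel through the single color-$1$ node $x$, which itself is capped (or aborts) at threshold $T_k(1)$, so $u_2$ receives at most $2T_k(1)$ identifiers. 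In other words, the quantity you bound (paths from $W$ to $u_j$, ignoring what happened at earlier colors) is too lossy, and the averaging-over-$s$ argument built on it cannot work.

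The correct route -- the one taken in~\cite{Censor-HillelFG20} and replayed in this paper's Lemma~\ref{lem-paths-from-single-s} and Lemmas~\ref{lem_i1}--\ref{lem_i456} -- is different in two respects. First, congestion at $u_i$ is controlled through the maximum number of \emph{node-disjoint} well-colored paths from $s$ to $u_i$ multiplied by the threshold already enforced at color $i-1$ (an inductive use of the thresholds, which your deterministic bound discards). Second, one does not bound a global walk count but, for each $i$ separately, the number of \emph{bad sources} $s$ for which this disjoint-path number is large; the bound comes from a case analysis showing that several bad sources with many pairwise-disjoint well-colored paths, combined with arcs of $C$ and with $u_0$, close up into a $2k$-cycle (contradicting that the bad sources lie on no $2k$-cycle), or from degeneracy-type counting against $\deg(u_0)$. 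Your closing intuition about why the argument dies for $k\geq 6$ is the right one, but as it stands the proposal does not contain a proof of the $k\leq 5$ statement.
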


That is, $\mathcal{A}^\star$ sets thresholds (depending on~$k$), and if the volume of communication generated by the $\textsf{color-BFS}(k,W)$ launched in parallel by all the heavy neighbors $W$ of a random source~$s$ exceeds these thresholds, then the search aborts. Yet,  it is proved in~\cite{Censor-HillelFG20} that this does not prevent a $2k$-cycle to be found, if it exists. 

\begin{algorithm}
\caption{Deciding $C_{2k}$-Freeness by the Threshold Algorithm $\mathcal{A}^\star$ from \cite{Censor-HillelFG20}}
\label{algo}
\begin{algorithmic}[1]
\State $\textsf{color-BFS}(k,U)$ in $G[U]$ \Comment{$U=\{u\in V(G)\mid \deg(u)\leq n^{1/k}\}$} \label{inst:1}
\For {$i=1$ \textbf{to} $\Theta(n^{1-1/k})$}
\State $s\leftarrow$ random node in $G$ \Comment{$W=\{v\in N_G(s)\mid \deg(v)> n^{1/k}\}$}
\State $\textsf{color-BFS}(k,s)$ \label{inst:4}
\State $\textsf{color-BFS}(k,W)$ with threshold $T_k(i), i\in\{1,\dots,k-1\}$ \label{inst:5} 
\EndFor
\end{algorithmic}
\end{algorithm}

The algorithm $\mathcal{A}^\star$ is summarized in Algorithm~\ref{algo}. Each \textsf{color-BFS} includes $\sim (2k)^{2k}$ executions of color-coding to guarantee $2k$-cycle detection with probability at least~$2/3$. Instruction~\ref{inst:1} performs in $\Theta(n^{1-1/k})$ rounds because $G[U]$ has maximum degree~$n^{1/k}$. It finds a light $2k$-cycle, if it exists, with probability~$2/3$.  Instruction~\ref{inst:4} performs in $O(1)$ rounds for each constant~$k\leq 5$, and, if $s$ belongs to a $2k$-cycle, it finds such a cycle with probability~$2/3$. Instruction~\ref{inst:5} also performs in $O(1)$ rounds as well, thanks to the thresholds specified in Lemma~\ref{lem:de-Censor-Hillel-et-al}. If there is a heavy $2k$-cycle, and if $s$ does not belong to a $2k$-cycle, then that heavy $2k$-cycle is found with probability~$2/3$. Overall, $\mathcal{A}^\star$ performs in $O(n^{1-1/k})$ rounds, and succeeds with probability~$2/3$.

In the next section, we shall show that thresholds $T_k(i)$, ${i=1,\dots,k-1}$, such as the ones specified in  Lemma~\ref{lem:de-Censor-Hillel-et-al} cannot be set for $k\geq 6$.


\section{Limits of the Threshold-Based Algorithms}

This section is entirely dedicated to the proof of Theorem~\ref{theo:does-not-extend}, which is essentially based on proving the impossibility of setting a constant $T_k(k-3)$ for $k \geq 6$. For this purpose, we exhibit a class of graph $\{G_k \mid k\geq 6\}$ such that each $G_k$ does not contain any light cycle~$C_{2k}$, and contains exactly one heavy cycle~$C_{2k}$. The construction of $G_k$ for $k \geq 6$ is split in two cases: a generic construction, which works for all $k\geq 7$, and a specific construction for~$G_6$. We begin the proof by the generic case. 


Let $k\geq 7$.  The graph $G_k$ is composed of the following nodes (see Fig.~\ref{Gforkatleast7}), for $N\geq 1$:
\begin{itemize}
\item The $2k$ nodes of the unique $2k$-cycle $C^\star=(u_0,u_1,...,u_{2k-1})$;
\item The set $S=\{s^p, \, p\in\{1,\dots,N\}\}$ of  $N$ neighbors of $u_0$;
\item The set $W=\{w_{k-4}^q, \, q\in\{1,\dots,N\}\}$ of  $N$ neighbors of $u_{k-3}$;
\item For $(p,q)\in\{1,\dots,N\}^2$, the set $\{w_j^{p,q}, \; j\in\{0,\dots,k-5\}\}$ of the nodes on a path from node $s^p$ to node in~$w^q_{k-4}$;
\item For $(p,q)\in\{1,\dots,N\}^2$, the set $\{v_0^{p,q,r}, \; r\in\{1,\dots,N\}\}$ of private neighbors of node~$w_0^{p,q}$ (these nodes are added in order to ensure that $w_0^{p,q}$ is heavy);
\item For $(p,q)\in\{1,\dots,N\}^2$, the set $\{v_{k-5}^{p,q,r} : r\in\{1,\dots,N\}\}$ of private neighbors of node~$w_{k-5}^{p,q}$ (as above, this makes node $w_{k-5}^{p,q}$ heavy).
\end{itemize}
The number of nodes in $G_k$ is $n=\Theta(N^3)$.  

\begin{figure}
\centering
\makebox[\textwidth][c]{\includegraphics[scale=0.8]{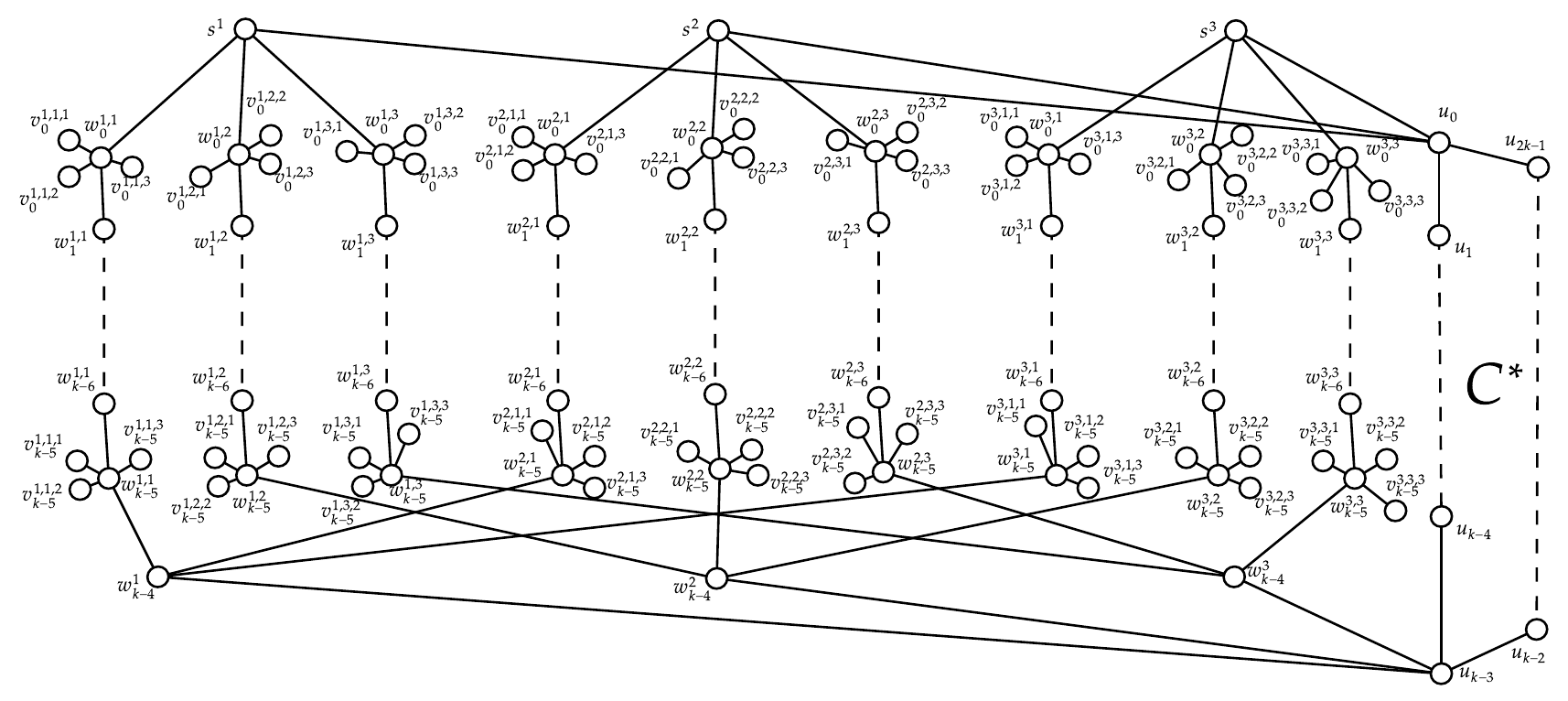}}
\caption{The graph $G_k$ for $k\geq 7$ and $N=3$.}
\label{Gforkatleast7}
\end{figure}

The proof of the following result can be found in Appendix~\ref{app:lem:C*unique}. 

\begin{lemma}\label{lem:C*unique}
For every $k\geq 7$, $C^\star$ is the unique $2k$-cycle in $G_k$, and is a heavy cycle.
\end{lemma}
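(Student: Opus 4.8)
\textbf{Proof plan for Lemma~\ref{lem:C*unique}.}

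The plan is to show that $C^\star$ is the only $2k$-cycle in $G_k$, and that it is heavy (i.e., contains a node of degree exceeding $n^{1/k}$). The second part is immediate: by construction $u_0$ has the $N$ neighbors of $S$ (plus its two cycle-neighbors), and $n = \Theta(N^3)$, so $\deg(u_0) \geq N = \Theta(n^{1/3}) \gg n^{1/k}$ for $k \geq 7$ and $N$ large. Hence it suffices to prove uniqueness, and the heaviness of $C^\star$ then follows for free. I would also record for later use that nodes $w_0^{p,q}$ and $w_{k-5}^{p,q}$ are heavy (they each have $N$ private neighbors), and that every other node is light.

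For uniqueness, the main idea is a careful case analysis on which ``gadget'' vertices a putative $2k$-cycle $C$ can use, exploiting that many vertices of $G_k$ are \emph{pendant} or nearly so. First I would dispose of the private neighbors: each $v_0^{p,q,r}$ and $v_{k-5}^{p,q,r}$ is a pendant vertex (degree~1), so it cannot lie on any cycle; thus $C$ uses only cycle vertices $u_i$, source vertices $s^p$, sink vertices $w^q_{k-4}$, and path vertices $w^{p,q}_j$. Next, observe that each $s^p$ has only two neighbors that can lie on a cycle, namely $u_0$ and $w^{p,q}_{k-5}$ for the various $q$ — more precisely I need to read off from the construction exactly the adjacencies of $s^p$; the point is that $s^p$ is a degree-2-on-cycles vertex whose only ``cycle-relevant'' neighbors are $u_0$ and the first vertex of each $(p,q)$-path. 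Similarly $w^q_{k-4}$ has cycle-relevant neighbors $u_{k-3}$ and the last path vertex $w^{p,q}_{k-6}$ (or $w^{p,q}_0$ depending on the indexing of the path — I would fix the orientation from Fig.~\ref{Gforkatleast7}). The path vertices $w^{p,q}_j$ for $1 \leq j \leq k-6$ are internal to their path and have degree~2, hence any cycle entering such a path must traverse the whole path from $s^p$ to $w^q_{k-4}$.

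The crux of the argument is then a length/parity count. A cycle $C$ in $G_k$ either stays entirely within $C^\star$ (in which case $C = C^\star$ since $C^\star$ is an induced cycle — $G_k$ adds no chords among the $u_i$), or it leaves $C^\star$ through some $s^p$ adjacent to $u_0$. In the latter case, to return to $C^\star$ it must come back through $u_{k-3}$ via some $w^q_{k-4}$, traversing a full $(p',q')$-path of length $k-4$ (counting edges $s \to w_0 \to \dots \to w_{k-5} \to w^q_{k-4}$, which I will verify gives exactly $k-4$ edges, hence a walk from $u_0$ to $u_{k-3}$ through the gadget of length $k-2$). Combined with the cycle arc of $C^\star$ from $u_{k-3}$ back to $u_0$, whose two options have lengths $k-3$ and $k+3$, the only way to close up a cycle of total length $2k$ is... — and here is where one checks that $k-2$ plus the short arc $k-3$ gives $2k-5 \neq 2k$, while $k-2$ plus the long arc $k+3$ gives $2k+1 \neq 2k$, so in fact \emph{no} such cycle of length $2k$ exists through the gadget, forcing $C = C^\star$. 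The main obstacle will be bookkeeping the exact path lengths and adjacencies from the (figure-based) construction — in particular getting the index ranges $j \in \{0,\dots,k-5\}$ to translate into the right edge count — and ruling out ``partial'' excursions that leave through one $s^p$, wander, and re-enter $C^\star$ at $u_0$ again rather than at $u_{k-3}$; that sub-case is handled by noting $u_0$'s only neighbors outside $C^\star$ are the $s^p$'s, each of which is a cut vertex separating $u_0$ from the rest of the gadget except through a full path to $u_{k-3}$, so no short return is possible.
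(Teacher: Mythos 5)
Your reduction of the lemma to uniqueness, the heaviness of $u_0$, and the elimination of the pendant private neighbors are all fine, and length-counting is indeed the right strategy (the paper does the same after contracting each $s^p$-to-$w^q_{k-4}$ path into a single weighted edge). However, your case analysis has a genuine gap: the gadget formed by the $N^2$ paths $s^p,w^{p,q}_0,\dots,w^{p,q}_{k-5},w^q_{k-4}$ is a subdivided complete bipartite graph $K_{N,N}$ between $S$ and $W$, not a tree hanging off $C^\star$. Hence your dichotomy ``either $C\subseteq C^\star$, or $C$ leaves via some $s^p$ and must return through $u_{k-3}$'' is false, and so is the claim that each $s^p$ is a cut vertex forcing a full traversal to $u_{k-3}$. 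For instance, $u_0,s^p$, the $(p,q)$-path, $w^q_{k-4}$, the $(p',q)$-path reversed, $s^{p'},u_0$ is a cycle of length $2k-4$ through $u_0$ that never meets $u_{k-3}$; symmetrically there are such cycles through $u_{k-3}$ avoiding $u_0$; and four paths joining $s^p,s^{p'}$ to $w^q_{k-4},w^{q'}_{k-4}$ form a cycle of length $4(k-3)$ that avoids $C^\star$ altogether. These missed families are exactly where the work lies: one must verify that every cycle other than $C^\star$ has length $2+(2x+1)(k-3)+(k-3)$ or $2+(2x+1)(k-3)+(k+3)$ (using an arc of $C^\star$), or $2+(2x+2)(k-3)$ (through exactly one of $u_0,u_{k-3}$), or $2(x+2)(k-3)$ (inside the gadget), none of which equals $2k$; the last family is $\geq 4k-12>2k$ only because $k\geq 7$, which is precisely why the paper needs a separate $C_4$-free bipartite construction for $k=6$. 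Your proposal never performs this check.

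Moreover, even in the single-traversal case you do treat, the arithmetic is off by one: the path $s^p,w^{p,q}_0,\dots,w^{p,q}_{k-5},w^q_{k-4}$ has $k-3$ edges (there are $k-4$ internal $w$-vertices), so a $u_0$-to-$u_{k-3}$ excursion through the gadget has length $k-1$, not $k-2$, and the candidate cycle lengths are $2k-4$ and $2k+2$ rather than $2k-5$ and $2k+1$. The conclusion happens to survive, but you must also account for cycles traversing the gadget several times (an even or odd number of $S$-to-$W$ traversals, depending on which of $u_0,u_{k-3}$ the cycle visits); the paper's parity count on the number of bipartite edges used in the contracted weighted graph handles all of these cases uniformly, and some version of that count is needed to make your plan complete.
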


As a consequence of Lemma~\ref{lem:C*unique}, a $2k$-cycle in~$G_k$ can only be detected if the algorithm picks the random source~$s$ in $N_{G_k}(C^\star)$, i.e., it must pick $s \in S\cup W\cup C^\star$. Also, if $s\in S\cup W$, then $s$ does not belong to a $2k$-cycle, and thus $s$ will initiate the search for $C_{2k}$ from each of its heavy neighbors. 

\begin{lemma}\label{lem:lowsuccessfork7}
Let $T\in o(n^{1/3}/\log n)$, and let us set ${T_k(k-3)=T}$ in the threshold-based algorithm. If $s\in S$ (resp., $s\in W$), then the probability that~$u_{k-3}$ (resp.,~$u_0$) forwards at most $T$ messages during a search phase from heavy nodes is $\exp(-\Theta(n^{1/3}))$.
\end{lemma}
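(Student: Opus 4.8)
The plan is to analyze the combinatorial structure of the searches initiated from the heavy neighbors of the random source and show that the node $u_{k-3}$ (or $u_0$) is overwhelmed. Consider the case $s=s^p\in S$; the case $s\in W$ is symmetric by the evident automorphism of $G_k$ swapping the two ``ends'' of the construction. When $s^p$ launches the parallel searches, it sends its identifier to all its heavy neighbors $W$; here every node $w_0^{p,q}$ for $q\in\{1,\dots,N\}$ is a heavy neighbor of $s^p$ (it is heavy because of the $N$ private neighbors $v_0^{p,q,r}$), and there are $N=\Theta(n^{1/3})$ of them. Thus $\Theta(n^{1/3})$ distinct searches are launched along the paths $w_0^{p,q},w_1^{p,q},\dots,w_{k-5}^{p,q}$, one per value of $q$. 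Each such search travels along its private path and reaches $w_{k-4}^q$, and then $w_{k-4}^q$ forwards to $u_{k-3}$. So $u_{k-3}$ is poised to receive one message for each of the $\Theta(n^{1/3})$ values of $q$, i.e. $\Theta(n^{1/3})$ messages, which is $\omega(T)$ since $T\in o(n^{1/3}/\log n)$.

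The subtlety is the color-coding: a given search survives the color-coding only if every node on its path is colored appropriately — node $s^p$ colored $0$, then the path nodes $w_0^{p,q},\dots,w_{k-5}^{p,q}$ colored $1,2,\dots,k-4$ respectively, and $w_{k-4}^q$ colored $k-3$. (I am using the convention from \textsf{color-BFS} that the search proceeds by incrementing colors along one side of the cycle.) For a fixed $q$, the probability that this particular chain of colors occurs is a constant $c=c(k)>0$ (roughly $(2k)^{-(k-3)}$), since the relevant path has constant length $k-4$. Moreover — and this is the key point making the events nearly independent — the paths $w_\bullet^{p,q}$ for distinct $q$ are \emph{vertex-disjoint} (except they share the source $s^p$), so conditioned on $s^p$ being colored $0$, the success events for different $q$ depend on disjoint sets of nodes' colors and are therefore mutually independent. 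Hence the number $X$ of searches that reach $u_{k-3}$ stochastically dominates a Binomial$(N,c)$ random variable (conditioned on the constant-probability event that $s^p$ is colored $0$), and by a Chernoff bound $\Pr[X\le T]\le \Pr[\mathrm{Bin}(N,c)\le T]=\exp(-\Theta(N))=\exp(-\Theta(n^{1/3}))$ whenever $T\le cN/2$, which holds for all large $n$ since $T=o(n^{1/3}/\log n)=o(N)$. Since a \textsf{color-BFS} consists of only $\Theta((2k)^{2k})=O(1)$ repetitions of color-coding, a union bound over these repetitions (and over the constant number of search phases relevant to this source) still leaves the failure probability $\exp(-\Theta(n^{1/3}))$, and in each repetition $u_{k-3}$ forwards at most $T$ messages only on this rare event. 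This gives the claimed bound; the statement ``$u_{k-3}$ forwards at most $T$ messages'' is then the complement event, occurring with probability $\exp(-\Theta(n^{1/3}))$.

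I would organize the write-up as follows. First, fix $s=s^p$ and identify explicitly the set of heavy neighbors and the vertex-disjoint paths along which searches propagate, pointing out that there are exactly $N$ of them and that they are internally vertex-disjoint. Second, for each $q$ define the ``good coloring'' event $E_q$ for that path and show $\Pr[E_q\mid s^p \text{ colored } 0]=c$ for a constant $c>0$ depending only on $k$, and that under the conditioning the $E_q$ are independent because they involve disjoint vertex sets. Third, observe that $E_q$ forces the search indexed by $q$ to reach $u_{k-3}$ via $w_{k-4}^q$ (here I use Lemma~\ref{lem:C*unique} only implicitly — the point is just that the path structure forces the message to arrive, no threshold on the intermediate path nodes being triggered since each intermediate node of a path lies on a single path and hence forwards a single message). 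Fourth, apply a Chernoff/Binomial tail bound to conclude that at least $T+1$ searches reach $u_{k-3}$ except with probability $\exp(-\Theta(N))=\exp(-\Theta(n^{1/3}))$, and finish with the $O(1)$-fold union bound over color-coding repetitions. The symmetric case $s\in W$, with $u_0$ in place of $u_{k-3}$, is handled identically by the automorphism of $G_k$.

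The main obstacle I anticipate is the bookkeeping around the color-coding conditioning and independence: one must be careful that the ``good coloring'' events for different $q$ really do depend on disjoint vertex sets (they share only $s^p$, which is handled by conditioning), and that the intermediate path nodes $w_j^{p,q}$ are not themselves overwhelmed in a way that would abort the search before it reaches $u_{k-3}$ — this is where it matters that each $w_j^{p,q}$ lies on exactly one search path and so forwards at most one message regardless of the threshold. A secondary point requiring a sentence of care is that the heavy neighbors of $s^p$ are precisely the $w_0^{p,q}$'s (and not, say, $u_0$ or $u_1$, which have bounded degree $\Theta(N)$... wait — $u_0$ has degree $\Theta(N)$ so it is heavy, but $u_0$ is a neighbor of $s^p$ only if... actually $s^p\in S$ is a neighbor of $u_0$, so $u_0\in W_{\text{heavy}}(s^p)$ too); one should note that a search launched from $u_0$ travels around $C^\star$ and thus also eventually contributes to the load at $u_{k-3}$ from the other direction, which only helps, or can simply be ignored since the $N$ searches from the $w_0^{p,q}$'s already suffice. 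I would keep the argument clean by lower-bounding the load at $u_{k-3}$ using only those $N$ disjoint-path searches.
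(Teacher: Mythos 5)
Your proposal matches the paper's proof: the paper likewise defines, for each $q$, the event that the path $(w_0^{p,q},\dots,w_{k-5}^{p,q},w_{k-4}^q)$ is well-colored, uses that these $N=\Theta(n^{1/3})$ events are independent (the paths being disjoint), and bounds the lower tail of the resulting Binomial$(N,(1/2k)^{k-3})$ variable, the only cosmetic difference being that the paper computes the tail directly (obtaining $N^T e^{-\Theta(N)}$, which is why the hypothesis $T=o(N/\log N)$ is used) whereas you invoke a Chernoff bound. One small convention point: in the paper the searches are launched by the heavy neighbors of $s$ that are colored~$0$, with no color constraint on $s$ itself, so the conditioning on the color of $s^p$ in your argument is unnecessary and the tail bound holds unconditionally, exactly as the lemma states.
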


\begin{proof}
By the symmetry of~$G_k$, the roles of $S\cup\{u_0\}$ and $W\cup\{u_{k-3}\}$ are identical. We shall thus prove the lemma only for $s\in S$, i.e., $s=s^p$ for some $p\in\{1,\dots,N\}$. For every $q\in \{1,\dots,N\}$, let $X_q$ be the following Bernoulli random variable, assuming each node picks a color in $\{0,\dots,2k-1\}$ uniformly at random. We say that the path $w_0^{p,q},\dots,w_{k-5}^{p,q},w_{k-4}^q$ is well-colored if, for every $i\in\{0,\dots,k-5\}$, $w_i^{p,q}$ is colored~$i$, and $w_{k-4}^q$ is colored $k-4$. We define
\[
X_q = \left\{\begin{array}{ll}
        1 & \mbox{if the path }(w_0^{p,q},\dots,w_{k-5}^{p,q},w_{k-4}^q)\textrm{ is well-colored;}\\
        0 & \mbox{otherwise.}
    \end{array}\right.
\]
Let then $X = \sum_{q=1}^N X_q$ be the random variable that counts the number of identifiers different from $\id(u_0)$ that $u_{k-3}$ has to forward, that is, the identifiers of all the nodes $w_0^{p,q}$ satisfying $X_q=1$.
$X$ follows a Binomial law of parameters $N$ and $r = (\frac{1}{2k})^{k-3}$, so its expectation is $E(X) = N r$.
Since every node $w_0^{p,q}$ has a unique path with length $k-3$ to node $u_{k-3}$, we get that 

\begin{align*}
\Pr[X \leq T] &= \sum_{t=0}^{T}\Pr[X=t] =\sum_{t=0}^{T}\binom{N}{t}r^t(1-r)^{N-t}\\
&\leq  \sum_{t=0}^{T}N^t\, r^t e^{(N-t)\ln(1-r)} \leq  (T+1)\, N^T \, r^T \, e^{N \ln(1-r)}\\
&= N^Te^{-\Theta(N)}\textrm{.}
\end{align*}

Therefore, the probability that $u_{k-3}$ has to forward at most $T$ messages is $O(N^Te^{-\Theta(N)})$. If $T=o(N/\log N)\simeq o(n^{1/3}/\log n)$, then this probability is asymptotically equal to $\exp(-\Theta(n^{1/3}))$. 
\end{proof}

To conclude the proof of Theorem~\ref{theo:does-not-extend} for $k\geq 7$ note that, even by fixing all thresholds to $T\in o(n^{1/3}/\log n)$, Algorithm $\mathcal{A}^\star$ fails to detect the unique (heavy) cycle $C^\star$ almost surely. Indeed when picking vertex $s$ in $S$ (or, symmetrically, $s$ in $W$), the algorithm succeeds with probability $\exp(-\Theta(n^{1/3}))$, since vertex $u_{k-3}$ aborts almost surely. The other possibility of detecting the cycle is when the algorithm picks $s$ directly on $C^\star$, which only contains $2k$ vertices, so the success possibility is $O(1/n)$. Hence, although the algorithm makes $\tilde O(n^{1-1/k})$ independent random choices of $s$, the probability of success is only $\tilde{O}(n^{-1/k})$.


The specific case $k=6$ is treated in Appendix~\ref{app:specific-for-k6}, which completes  the proof of Theorem~\ref{theo:does-not-extend}.


\section{Deciding $\{C_{12},C_{14}\}$-Freeness}

This section is entirely dedicated to the proof of Theorem~\ref{theo:small-families}. 
We rely mostly on the threshold algorithm as such, with the following slight modification, for simplifying the analysis. 

\paragraph{\textit{Remark.}}

For exhibiting the thresholds $T_{2k}(i)$, $1\leq i \leq 2k-1$, it is convenient to assume that, instead of repeating $O(n^{1-1/k})$ random choices of~$s$, and then, for each chosen~$s$, repeating $\sim (2k)^{2k}$ random choices of colors (for the color-BFSs), the algorithm proceeds as follows: The outer loop repeats $\sim (2k)^{2k}$ random assignments of colors, and the inner loop repeats $O(n^{1-1/k})$ random choices of~$s$ (for each of the $\sim (2k)^{2k}$  color-assignments). In fact, it  simplifies the presentation even further by assuming that the random colors are in the range ${\{-1,0,\dots,2k-1\}}$. The extra color~$-1$ is used only by~$s$, and $s$ launches $\textsf{color-BFS}(W)$ only under the condition that $s$ has random color~$-1$. None of these changes affect the performances of the algorithm, up to a constant factor in the round-complexity. 

\bigbreak 

The algorithm starts by checking the existence of a light 12-cycle or a light 14-cycle. This is achieved in $O(n^{6/7})$ rounds, by parallel color-BFSs running on the light nodes only (see Section~\ref{subsec:threshold-based-algo}). For detecting heavy cycles, the algorithm proceeds as the threshold algorithm, by repeating the choice of a random node~$s$. For each choice, the chosen node~$s$ checks whether it belongs to a  12-cycle or to a 14-cycle, by performing two series of $\textsf{color-BFS}(s)$, one for detecting a possible 12-cycle passing through~$s$, and one for detecting a  possible 14-cycle. If no such cycles are detected, then $s$ proceeds as follows. 

\begin{description}
\item[Looking for 14-cycles.] Node $s$ launches $\textsf{color-BFS}(W)$, from the set $W$ of all its heavy neighbors, with appropriate thresholds  $T_7(i)$, $1\leq i \leq 6$, that will be specified later. The crucial point here is that if the algorithm proceeds by checking the existence of 12-cycles \emph{and} of 14-cycles. By checking both lengths, we will be able to establish a result similar to Lemma~\ref{lem:de-Censor-Hillel-et-al}, that is, if there is a $14$-cycle in~$G$, say $C=(u_0,\dots,u_{13})$, with $u_0$ heavy,  and of maximum degree among the nodes in~$C$, then there exists a constant $\alpha>0$, such that, even if nodes colored $i$ or $14-i$ abort the search launched from the set~$W$ at the $i$-th step of $\textrm{\sf color-BFS}(W)$ whenever they generate a congestion larger than $T_7(i)$, still, for a fraction at least $\alpha$ of the neighbors~$s$ of~$u_0$,  the cycle~$C$ will be found with probability at least~$2/3$. In other words, if a node rejects during this phase, it is because there is a 12-cycle, or there is a 14-cycle. On the other hand, the fact that all nodes accept during this phase only provides a (statistical) guarantee on the absence of 14-cycles, but provides little information on the absence of 12-cycles. 
  
\item[Looking for 12-cycles.] Again, node $s$ launches $\textsf{color-BFS}(W)$, but for 12-cycles now, with the mere thresholds  $T_6(i)=1$ for all $i=1,\dots,5$. The crucial point here is that, assuming that the graph is $C_{14}$-free,  then a threshold of~1 suffices. There will only ever be one message crossing an edge in a well-colored heavy 12-cycle. This latter fact is easy to establish, so most of the proof consists in proving the existence of the thresholds when looking for 14-cycles. 
\end{description}

\bigbreak

To set the values $T_7(i)$ for $1\leq i\leq 6$, let us define $T_7(0)=1$. Our construction is then inductive, and, for $i>0$, we shall set  $T_7(i) = f(i) \cdot T_7(i-1)$ for appropriate constants $f(i)$. 
Let us assume that the graph contains a 14-cycle, denoted by $C^\star=(u_0,u_1,\dots,u_{13})$, where $u_0$ is of  maximum degree in~$C^\star$, and, for every $i=0,\dots,13$, node~$u_i$ is colored~$i$. 
From now on, we will work only on the nodes $u_0,u_1,\dots,u_7$. By symmetry, the same arguments will apply to nodes $u_0,u_{13},\dots,u_7$. Before further defining the setting of the proof, recall that, as underlined before, the $O(n^{1-1/k})$ drawings of nodes $s$ are performed on a given coloring of the graph with colors in $\{-1,0,\dots,13\}$, and that only a picked node~$s$ colored~$-1$ invokes color-BFS($W$). 

The lemma below is generic, as it applies to all $k\geq 2$. Recall that a  path  $s,w_0,\dots,w_{i-1}, u_i$ from node~$s$ to node~$u_i$ is \emph{well-colored} if $s$ is colored~$-1$, $u_i$ is colored~$i$, and, for every $j=0,\dots,i-1$, node $w_j$ is colored~$j$ by color-coding.

\begin{lemma}\label{lem-paths-from-single-s}
Let $k\geq 2$ be an integer. For every  $i\in \{1,\dots,k-1\}$, let $\rho$ be the maximum number of node-disjoint well-colored paths from $s$ to~$u_i$. If $s$ launches $\textsf{color-BFS}(W)$ from all its heavy neighbors colored~$0$, then $u_i$ cannot receive more than $\rho\cdot T_k(i-1)$ identifiers from nodes colored~$i-1$. 
\end{lemma}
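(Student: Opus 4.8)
The statement is essentially a flow/bandwidth bound: even though many messages may be flying around during $\textsf{color-BFS}(W)$, the number of identifiers reaching $u_i$ from color-$(i-1)$ nodes is controlled by the structure of well-colored paths emanating from the single source~$s$. The plan is to induct on~$i$, using the definition $T_k(i) = f(i)\cdot T_k(i-1)$ only implicitly here — in fact Lemma~\ref{lem-paths-from-single-s} is the tool that will later let us \emph{choose} $f(i)$, so its own proof should not invoke the thresholds beyond the inductive bound $T_k(i-1)$ on what color-$(i-2)$ nodes forward. Concretely, fix a coloring, and observe that every identifier that $u_i$ receives from a color-$(i-1)$ neighbor $w_{i-1}$ travelled along a \emph{well-colored path} $s = w_{-1}?, w_0, w_1, \dots, w_{i-1}, u_i$ where each $w_j$ is colored~$j$; this is exactly how $\textsf{color-BFS}$ propagates (each color-$j$ node forwards only to color-$(j+1)$ neighbors, and only forwards things received from a color-$(j-1)$ neighbor), and all these paths start at~$s$ because $s$ is the only node colored~$-1$ that launched the search.

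The key combinatorial step is to bound the number of distinct identifiers by $\rho \cdot T_k(i-1)$, where $\rho$ is the max number of node-disjoint well-colored $s$--$u_i$ paths. First I would note that every identifier arriving at $u_i$ is $\id(w_0)$ for some heavy neighbor $w_0$ of $s$ colored~$0$ that lies on a well-colored path to $u_i$; so it suffices to bound the number of such $w_0$'s. For this I would build an auxiliary graph/flow argument: consider the (layered) subgraph $H$ on the nodes that actually forwarded this identifier-flow, with layers $0,1,\dots,i-1,i$ according to color, and directed edges in the direction of forwarding. By Menger's theorem, the maximum number of node-disjoint $s$--$u_i$ paths equals the minimum size of an $s$--$u_i$ vertex cut in this layered graph; call a minimum cut~$\mathcal{C}$, so $|\mathcal{C}| \le \rho$ (since paths in $H$ are well-colored, so are cuts one can take). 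Every well-colored $s$--$u_i$ path carrying an identifier to $u_i$ must pass through some node of~$\mathcal{C}$. Now comes the crucial point: a cut node $w_j \in \mathcal{C}$ lies in layer~$j \le i-1$, and by the inductive hypothesis applied at that node, $w_j$ receives at most $T_k(j-1) \le T_k(i-2)$ identifiers from color-$(j-1)$ nodes — but more usefully, the relevant bound is that $w_j$, being a color-$j$ node, can forward at most $T_k(i-1)$ distinct identifiers toward $u_i$ along well-colored continuations, because the threshold regime caps what each node transmits. Summing over the at most~$\rho$ cut nodes gives the bound $\rho \cdot T_k(i-1)$ on the number of identifiers reaching~$u_i$.

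The main obstacle, and the point I would be most careful about, is pinning down exactly \emph{which} threshold bounds a cut node's contribution: the bound $T_k(i-1)$ in the statement refers to the threshold at the $(i-1)$-th step (the nodes colored $i-1$ that feed $u_i$), so I would want to route the argument so that the cut is taken at \emph{layer $i-1$}, i.e.\ at the color-$(i-1)$ neighbors of $u_i$ themselves. With the cut placed at layer $i-1$, each cut node is a color-$(i-1)$ node $w_{i-1}$; there are at most~$\rho$ node-disjoint well-colored $s$--$u_i$ paths, hence by Menger at most~$\rho$ such $w_{i-1}$'s that can actually route an identifier to $u_i$ (any $w_{i-1}$ not separated from $u_i$... — more precisely, the set of color-$(i-1)$ in-neighbors of $u_i$ through which flow arrives is itself an $s$--$u_i$ separator in $H$, so has size at most $\rho$); and each such $w_{i-1}$, by the threshold $T_k(i-1)$ governing step $i-1$ — wait, that threshold is what $w_{i-1}$ \emph{may forward}, not what it forwards after aborting — so each contributes at most $T_k(i-1)$ identifiers, whence the total is at most $\rho\cdot T_k(i-1)$. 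I would need to double-check the edge case $i=1$ (where $T_k(0)=1$ and the well-colored paths are just edges $s$--$w_0$, with $\rho$ being the number of heavy color-$0$ neighbors on a well-colored path to $u_1$, giving exactly $\rho\cdot 1$), and make sure the "node-disjoint" count $\rho$ is correctly invoked — i.e.\ confirm via Menger that the set of penultimate nodes carrying flow is a separator of size $\le \rho$ — which is the one genuinely nontrivial graph-theoretic ingredient.
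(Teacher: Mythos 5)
The step you yourself flag as ``the one genuinely nontrivial graph-theoretic ingredient'' is exactly where the argument breaks, and the check fails. Menger's theorem says that the maximum number of internally node-disjoint $s$--$u_i$ paths equals the size of a \emph{minimum} $s$--$u_i$ separator; an arbitrary separator is bounded from \emph{below} by this number, not from above. The set of color-$(i-1)$ in-neighbors of $u_i$ through which identifiers actually arrive is indeed a separator of the forwarding graph, but it can be far larger than $\rho$: take a single well-colored path $s,w_0,w_1,\dots,w_{i-2}$ where $w_{i-2}$ has $N$ neighbors colored $i-1$, each adjacent to $u_i$. Then $\rho=1$ (every well-colored $s$--$u_i$ path passes through $w_0,\dots,w_{i-2}$), yet all $N$ penultimate nodes forward $\id(w_0)$ to $u_i$. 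The lemma still holds there (only one distinct identifier arrives), but your count ``at most $\rho$ delivering color-$(i-1)$ nodes, each contributing at most $T_k(i-1)$'' is not established, and it is the whole proof. Your earlier variant --- a minimum vertex cut $\mathcal{C}$ taken across arbitrary layers --- can be repaired: every identifier reaching $u_i$ is forwarded by some cut node, and a cut node of color $j$ forwards at most $T_k(j)$ identifiers, so the number of distinct identifiers is at most $\sum_{c\in\mathcal{C}} T_k(j_c)\leq \rho\cdot\max_{j\leq i-1}T_k(j)$; but converting this into $\rho\cdot T_k(i-1)$ requires the thresholds to be nondecreasing in $i$, which is precisely the bookkeeping you were uneasy about (it does hold for the settings used in the paper, since $T_k(i)=f(i)\cdot T_k(i-1)$ with $f\geq 1$, but it is an extra hypothesis not present in the statement).

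The paper's proof needs neither Menger nor threshold monotonicity, and is much shorter. Fix a \emph{maximum} collection $S$ of $\rho$ pairwise node-disjoint well-colored paths from $s$ to $u_i$. If $\id(w_0)$ reaches $u_i$, there is a well-colored path $P$ from $w_0$ to $u_i$; by maximality of $S$, the path $P$ (prepended with $s$) must intersect some $P'=(w'_0,\dots,w'_{i-1})\in S$, and the identifier is then charged to $P'$: it is among the at most $T_k(i-1)$ identifiers that the final node $w'_{i-1}$ of $P'$ may forward to $u_i$. Each of the $\rho$ paths of $S$ thus accounts for at most $T_k(i-1)$ distinct identifiers, giving $\rho\cdot T_k(i-1)$ directly, with the cap $T_k(i-1)$ applied only at color-$(i-1)$ nodes as in the statement. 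So the fix is to replace the min-cut/Menger machinery (and the induction, which is not needed: the abort rule by itself caps what each node forwards) by this maximality-and-charging argument, and in particular to drop the claim that the delivering color-$(i-1)$ nodes number at most $\rho$ --- that claim is false in general.
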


\begin{proof}
Let $S$ be a set of $\rho$ node-disjoint, well-colored paths from $s$ to~$u_i$.
Let $w_0\in W$ be a heavy neighbor of $s$ colored~0, and let us assume that $\id(w_0)$ has reached~$u_i$. It follows that there is a well-colored path~$P$ of length $i$ from $w_0$ to $u_i$. The path~$P$ must intersect some path $P'=\{w'_0,\dots,w'_{i-1}\}$ in~$S$ (perhaps even $P=P'$). As a consequence, $\id(w_0)$ is included in the at most $T_k(i-1)$ identifiers that node $w'_{i-1}$ may forward to~$u_i$. Therefore, the number of identifiers received by $u_i$ during $\textsf{color-BFS}(W)$ does not exceed $\rho\cdot T_k(i-1)$. 
\end{proof}

Given $f:\{1,\dots,6\}\to \mathbb{N}$ to be fixed later, we define, for every $i\in \{1,\dots,6\}$, the  set of nodes
\[
\bad(i)=\{s\in N_G(u_0) \mid (\mbox{color}(s)=-1) \land (s\notin C_{12} \lor C_{14})\land (\rho(s)>f(i))\},
\]
where $\rho(s)$ denotes the maximum number of node-disjoint well-colored paths from~$s$ to node~$u_i$ in the graph.
Thanks to Lemma~\ref{lem-paths-from-single-s}, we have that a neighbor of $u_0$ colored $-1$ and not in any 12- or 14-cycle, causing $u_i$ to receive more than $T_7(i)$ identifiers, is in $\bad(i)$. This set of nodes thus represents the \emph{bad} neighbors of $u_0$, those that will prevent us from detecting any cycle whenever any such neighbor is picked.

The rest of this section will prove that the bad nodes represent only a fraction of the neighbors of $u_0$. It follows that, by performing sufficiently many choices of~$s$, the probability to select a \emph{good} neighbour $s$ of $u_0$, which will not cause congestion, and will thus allow detecting the cycle, is still $\Omega(n^{-6/7})$.  The parameter $f(i)$  makes the connection between the parameter $T_7(i)$ used by the algorithm, and the set of nodes we do not want to pick as the source~$s$. Formally we are aiming at showing the following result.

\begin{proposition}\label{prop_res_1214}
Let us set $f(1)=60$, $f(2)=f(3)=10$, $f(4)=f(5)=5$, and $f(6)=6$. With this setting, we get 
$\big|\bigcup_{i=1}^{6}\bad(i)\big|\leq \frac{35}{72}\deg(u_0)+3$.
\end{proposition}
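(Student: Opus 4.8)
The plan is to bound $|\bad(i)|$ separately for each $i \in \{1,\dots,6\}$ in terms of $\deg(u_0)$, and then combine the bounds via a union bound (being careful about overlaps and small additive slack). The key structural fact I would use is the following: if $s \in \bad(i)$, then there exist $f(i)+1$ node-disjoint well-colored paths from $s$ to $u_i$; concatenating two such paths through $u_i$ would create a closed walk through $s$ of length $2i$, and if $i \le 6$ this closed walk, being a union of two internally disjoint paths of length $i$, is a cycle of length $2i \le 12$ (or decomposes into shorter cycles) — but $s \notin C_{12} \lor C_{14}$, and more importantly we are trying to rule out that such short even cycles pass through $s$. So the real leverage is: a bad node $s$ forces many disjoint paths to $u_i$, and these paths, together with the edge $s u_0$ and the cycle arc $u_0 u_1 \cdots u_i$, must not close up into a $12$- or $14$-cycle in $G$ (since the algorithm would have detected such a cycle when $s$ ran its own \textsf{color-BFS}$(s)$ test, or it is simply assumed not to exist in the relevant regime). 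I would turn this into a counting constraint: roughly, the disjoint paths from distinct bad nodes $s$ to $u_i$ must be "mostly disjoint" from each other and from the cycle $C^\star$, otherwise one exhibits a forbidden cycle, and disjointness caps how many such $s$ there can be relative to $\deg(u_0)$.

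Concretely, I would proceed in the order $i = 6, 5, \dots, 1$ or group by the path-length regime. For each $i$, fix for every $s \in \bad(i)$ a family of $f(i)+1$ node-disjoint well-colored $s$–$u_i$ paths. The neighbors of $u_0$ come in two types relative to $u_i$: those whose shortest route to $u_i$ "uses up" vertices of $C^\star$ and those that go around. A double-counting on the edges incident to $u_i$ colored with color $i-1$, or on the second vertices $w_0$ (heavy neighbors of $s$ colored $0$), should show that $|\bad(i)| \cdot (f(i)+1)$ is at most something like $\deg(u_0)$ plus a constant, because two bad nodes sharing too much of their path systems would create a short even cycle avoiding the length-$14$ obstruction. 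This is where the specific numerical values of $f$ enter: $f(1)=60$ is large because at distance $1$ the paths are single edges and the constraint is weakest, while $f(4)=f(5)=5$, $f(6)=6$ are small because longer paths through heavy vertices are more "expensive" and create length-$12$/$14$ cycles more easily. I expect each $i$ to yield a bound of the shape $|\bad(i)| \le c_i \deg(u_0) + O(1)$ with $\sum_i c_i = 35/72$, the constants $c_i$ being exactly $\frac{1}{f(i)+1}$ up to the careful handling of which colors/edges are "charged."

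The main obstacle, I expect, is controlling the overlaps — both the overlap between the path systems for different source nodes $s$ (needed to make the per-$i$ counting tight) and the overlap $\bigcup_i \bad(i)$ versus $\sum_i |\bad(i)|$. The union-bound step $|\bigcup \bad(i)| \le \sum |\bad(i)|$ is safe but only gives $\sum \frac{1}{f(i)+1}\deg(u_0)$; getting down to exactly $\frac{35}{72}$ with only "$+3$" additive slack means the $c_i$ must be chosen so that $\frac{1}{61} + 2\cdot\frac{1}{11} + 2\cdot\frac{1}{6} + \frac{1}{7}$ — or whatever the precise charged fractions turn out to be — sums to $35/72$, and that the three units of additive error come from at most one exceptional source per "side" of $C^\star$ plus one for the cycle-vertices themselves. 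I would therefore spend most of the proof carefully setting up, for each $i$, which vertex or edge each bad source is charged to, verifying the charging is injective up to a constant multiplicity $f(i)+1$, and checking that a collision in the charging genuinely produces a $C_{12}$ or $C_{14}$ in $G$ (using that $u_0$ has maximum degree on $C^\star$, so the arcs of $C^\star$ are available to close paths into cycles of the right length). The arithmetic $\frac{1}{61}+\frac{2}{11}+\frac{2}{6}+\frac{1}{7} = \frac{35}{72}$ should be verified at the end; if it does not match exactly, the values of $f$ or the charging scheme would need the minor adjustment already baked into the statement's choice $f(1)=60,\ f(2)=f(3)=10,\ f(4)=f(5)=5,\ f(6)=6$.
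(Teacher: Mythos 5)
There is a genuine gap: your proposal is a plan whose central counting step is never carried out, and the structural guesses it rests on do not match what actually happens. First, the forbidden configuration for small $i$ is not ``two node-disjoint well-colored paths from a single bad $s$'': concatenating two such paths gives a cycle of length $2i+2$, which for $i\le 4$ is at most $10$ and is not forbidden at all (the graph may freely contain $C_4,C_6,C_8,C_{10}$). The actual argument has to combine the edge $su_0$, arcs of $C^\star$, and path systems of \emph{several distinct} bad sources: for $i=1$ one needs a degeneracy argument on the bipartite graph between $\bad(1)$ and the color-$0$ vertices adjacent to $u_1$ (a vertex of degree $\ge 12$ in that graph yields a length-$11$ path closing through $u_0$ or $u_1$ into a $C_{12}$), giving $|\bad(1)|\le\frac14\deg(u_0)$; for $i=2,3$ one analyzes how paths from three or four distinct bad sources can ``merge'' at a vertex of color $0$, $1$ or $2$, each merge producing a $C_{12}$ or $C_{14}$, giving $|\bad(2)|\le\frac19\deg(u_0)+2$ and $|\bad(3)|\le\frac18\deg(u_0)$. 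Second, for $i=4,5,6$ the correct bounds are not fractions of $\deg(u_0)$ at all: already one bad node (for $i=5,6$) or two (for $i=4$) closes a $12$- or $14$-cycle with the arc $u_0u_1\cdots u_i$, so $|\bad(4)|\le 1$ and $\bad(5)=\bad(6)=\varnothing$. Your proposed shape $|\bad(i)|\le\frac{1}{f(i)+1}\deg(u_0)+O(1)$ is therefore wrong in both directions, and the arithmetic you suggest ($\frac{1}{61}+\frac{2}{11}+\frac{2}{6}+\frac{1}{7}$) does not equal $\frac{35}{72}$; the actual identity is $\frac14+\frac19+\frac18=\frac{35}{72}$, with the additive $3$ coming from the $+2$ at $i=2$ and the $+1$ at $i=4$.

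Your high-level instincts are sound — bound each $|\bad(i)|$ separately, exploit the many node-disjoint well-colored paths guaranteed by $\rho(s)>f(i)$, and use the fact that closing such paths via $su_0$ and $C^\star$ would create a $C_{12}$ or $C_{14}$ contradicting $s\notin C_{12}\lor C_{14}$ — but the proof lives entirely in the case analysis you deferred: which merges of which numbers of bad sources at which colors force a forbidden cycle, using an auxiliary lemma that lets you repeatedly extract fresh disjoint paths avoiding a small excluded vertex set (this is where the exact values $f(1)=60$, $f(2)=f(3)=10$, $f(4)=f(5)=5$, $f(6)=6$ are consumed). Without that analysis, and with the per-index bounds misidentified, the proposal does not establish the proposition.
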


The thresholds yielded by the function $f$ defined in Proposition~\ref{prop_res_1214} are: 
\[
\begin{array}{lll}
T_7(1)=60 & \;\;\; T_7(2)=600 & \;\;\; T_7(3)=6\,000 \\
T_7(4)=30\,000 & \;\;\;  T_7(5)=150\,000 & \;\;\; T_7(6)=900\,000
\end{array}
\] 
To prove Proposition~\ref{prop_res_1214}, our strategy is to bound each $|\bad(i)|$ separately by a fraction of the degree of $u_0$ (for $i=1,2,3$), or by a constant (for $i=4,5,6$). Let us now consider the values of $i=1,\dots,6$ successively.

\subsection{Case $i=1$.}

Our goal is to bound $|\bad(1)|$, given  $f(1)=60$. To achieve that, we will show that the nodes colored 0 whose identifiers can reach $u_1$ when a node $s\in \bad(1)$ is picked have to be sufficiently many compared to $\bad(1)$ itself. Otherwise a 12-cycle that involves nodes from $\bad(1)$ would appear, which contradicts the definition of the set of bad nodes. 

\begin{lemma}\label{lem_i1}
If $f(1)\geq 60$ then $|\bad(1)|\leq \frac{1}{4}\deg(u_0)$.
\end{lemma}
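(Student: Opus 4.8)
\textbf{Proof plan for Lemma~\ref{lem_i1}.}

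The plan is to argue by contradiction: suppose $|\bad(1)| > \frac14 \deg(u_0)$, and derive the existence of a $12$-cycle passing through a node of $\bad(1)$, which would contradict the very definition of $\bad(1)$ (whose members are required to lie in no $12$- or $14$-cycle). First I would fix attention on the set $A$ of nodes colored $0$ that are heavy neighbors of some $s \in \bad(1)$ and whose identifier can reach $u_1$, i.e.\ the nodes $w_0$ colored $0$ that are adjacent both to some $s \in \bad(1)$ and to $u_1$ (so that the length-$1$ well-colored path $s, w_0, u_1$ exists). By definition of $\bad(1)$, every $s \in \bad(1)$ admits more than $f(1) = 60$ node-disjoint well-colored paths to $u_1$; in the present case $i=1$ those paths have length exactly $1$ after $s$, so each such $s$ has more than $60$ neighbors colored $0$ that are themselves adjacent to $u_1$. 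This gives a lower bound on the number of (ordered) incidences between $\bad(1)$ and $A$, hence, via double counting, a lower bound on $|A|$ relative to $|\bad(1)|$, or alternatively shows that some node of $A$ has many neighbors in $\bad(1)$.

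Next I would locate a short even closed walk. Since $u_0$ is a common neighbor of every $s \in \bad(1)$, any two distinct $s, s' \in \bad(1)$ together with a common neighbor $w_0 \in A$ give a $4$-cycle $u_0, s, w_0, s'$ (using that $s,s',w_0$ are pairwise distinct and these are genuine edges). More usefully, to build a $12$-cycle I would look for the pattern: start at $u_0$, go to some $s \in \bad(1)$, then along a well-colored path of colors $0,1,2,3,4,5$ up to $u_5$ (or symmetrically down the other side of $C^\star$ toward $u_9$), then return along the cycle $C^\star$ back to $u_0$. The key counting point is that the nodes colored $0$ reachable from $\bad(1)$ and adjacent to $u_1$ are "too many" relative to $|\bad(1)|$: if $|\bad(1)|$ were large, the bipartite incidence structure between $\bad(1)$ and $A$ would be dense enough that we can find two internally disjoint well-colored $s$--$u_1$ paths emanating from two \emph{different} members of $\bad(1)$ through \emph{different} colored-$0$ nodes, and splice them with the segment $u_1, u_0$ of length $2$ and a further detour to pad the length up to $12$; alternatively, one exploits that a single node in $A$ is adjacent to two bad sources and closes a short cycle through $C^\star$. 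The precise combinatorial extraction of a $12$-cycle (as opposed to a shorter even cycle, or an odd closed walk that does not yield a cycle) is what the constant $60 = f(1)$ is tuned for, so I would follow the bookkeeping that makes $\tfrac14$ come out: bound the number of "coincidences" that could shorten or break the intended $12$-cycle, and show that exceeding $\tfrac14 \deg(u_0)$ forces enough room to avoid all of them.

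I expect the main obstacle to be exactly this last extraction step: turning an abundance of well-colored length-$1$ paths from many bad sources into a genuine $12$-cycle (a closed walk with no repeated vertices) rather than a shorter cycle or a degenerate walk. One has to (i) guarantee that the two halves of the putative cycle — the detour through $\bad(1)$ and a colored-$0$ node, versus the arc of $C^\star$ — are vertex-disjoint except at their two shared endpoints, which is where node-disjointness of the well-colored paths and the distinctness of the chosen bad sources get used; and (ii) verify that the total length is $12$ and not $10$ or $14$, which is forced by the coloring (colors along the detour are $-1,0,1,\dots$ and along the arc are $2,3,\dots,11$ or the mirrored version), so the length is automatically correct once the vertices are distinct. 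I would therefore spend most of the effort on a clean disjointness argument, counting how many of the $>60$ node-disjoint paths from a given bad $s$ can be "spoiled" by intersecting $C^\star$ or by being shared with another bad source, and checking that $60$ leaves a surplus; the fraction $\tfrac14$ then emerges from a double-counting inequality of the form (incidences $\bad(1)$–$A$) $\le$ (size of $A$) $\times$ (something controlled by $\deg(u_0)$ and the absence of short cycles).
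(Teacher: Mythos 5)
You have the right setup (the bipartite incidence structure between $\bad(1)$ and the set of colored-$0$ common neighbors of bad sources and $u_1$, plus a double-counting inequality), but the step you yourself flag as the crux --- extracting an actual $12$-cycle --- is exactly where the proposal has a genuine gap, and the constructions you sketch for it do not work. Membership in $\bad(1)$ only guarantees many node-disjoint well-colored paths to $u_1$ (i.e.\ many colored-$0$ neighbors adjacent to $u_1$); it gives no well-colored paths to $u_5$, so your pattern ``$u_0$, then $s$, then colors $0,\dots,5$ up to $u_5$, then back along $C^\star$'' is unsupported. Likewise, splicing two disjoint $s$--$u_1$ paths from two bad sources through $u_1$ and $u_0$ only yields a $6$-cycle (or a $4$-cycle through a common colored-$0$ node), and ``a further detour to pad the length up to $12$'' has no mechanism behind it --- short even cycles are not forbidden, so nothing is contradicted. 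Your guess that the incidence bound comes from ``absence of short cycles'' is also off: only $12$- and $14$-cycles through bad nodes are excluded.

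The paper's key idea, which is what you are missing, is a degeneracy argument inside the bipartite graph $H$ between $\bad(1)$ and $W_0$ (the colored-$0$ nodes other than $u_0$ adjacent both to some bad source and to $u_1$). Iteratively delete vertices of degree at most $11$; if anything remained, minimum degree $\geq 12$ would let one greedily grow a path of $11$ vertices ($10$ edges) in $H$, whose two endpoints lie on the \emph{same} side, and this path closes into a $12$-cycle through $u_0$ (if both endpoints are bad sources, all adjacent to $u_0$) or through $u_1$ (if both are in $W_0$) --- a $12$-cycle that zigzags between $\bad(1)$ and $W_0$ and never uses the arc of $C^\star$. This contradicts the definition of $\bad(1)$, so $H$ is $11$-degenerate and $|E(H)| < 12\,(|\bad(1)|+|W_0|)$. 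Combined with the fact that each bad source has degree at least $f(1)=60$ in $H$, this gives $|W_0| > 4\,|\bad(1)|$, and then $|\bad(1)| \leq \frac{1}{4}\deg(u_0)$ follows from $|W_0| \leq \deg(u_1) \leq \deg(u_0)$ (maximality of $\deg(u_0)$ on $C^\star$); this is also where the constant $60$ and the fraction $\frac14$ actually come from, not from a disjointness surplus count over $C^\star$ as you propose.
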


\begin{proof}
    Let $W_0$ denote the set of nodes $x\neq u_0$ colored 0  such that $x$ is a heavy neighbor of a node in $\bad(1)$, and a neighbor of $u_1$. This means that for any node $s\in \bad(1)$ that is picked, the identifiers  that $u_1$ receives are those of $u_0$ and of nodes in $W_0$. Let us then consider the bipartite graph $H$ formed by nodes of $\bad(1)$ and $W_0$, and the edges between $\bad(1)$ and $W_0$. Let $H'$ be the subgraph of $H$ obtained by iteratively deleting all nodes of degree at most 11. If $H'$ is not empty, then,  since all its vertices have degree at least 12, we can construct a path of length 11 starting from any vertex of $H'$. Thanks to the fact that $H'$ is bipartite, this path  has either both endpoints in $\bad(1)$, or both in $W_0$, meaning that they are linked to $u_0$ or $u_1$, creating a 12-cycle with the path. This cannot be true as it would mean that some nodes in $\bad(1)$ are in a 12-cycle. 
It follows that $H'$ is empty. As a consequence, 
    $$ 60\cdot |\bad(1)|\leq f(1) \cdot |\bad(1)|\leq |E(H)|< 12\, (|\bad(1)|+|W_0|),$$
where the second inequality comes the fact that any node in $\bad(1)$ has a degree larger than $f(1)$ in $H$, and the third inequality comes from the fact that our iterative removing of nodes of degree at most 11 in $H$ has removed all of the nodes. This yields  $|W_0|>4|\bad(1)|$.
Under our assumption that $u_0$ has maximum degree in $C^\star$, we then get $\deg(u_0)\geq\deg(u_1)\geq|W_0|\geq 4\;|\bad(1)|$.

\end{proof}

\subsection{Case $i=2$.}

To prove upper bounds on the number of bad nodes for $i=2$, as well as for $i> 2$, we use the following lemma that allows us to assume the existence of node-disjoint well-colored paths from different nodes in $\bad(i)$ to $u_i$.

\begin{lemma}\label{lem_paths_from_B}
Let $b\geq 1$, and let $U$ be a set of nodes. If $f(i)\geq (b-1)i+|U|$ then either $|\bad(i)|<b$, or, for any $c\in\{0,\dots,b-1\}$, any nodes $s^1,\dots,s^c\in \bad(i)$, any collection $\mathcal{C}$ of $c$~node-disjoint well-colored paths from $\bad(i)$ to $u_i$ that do not intersect $U$, and any $s\in \bad(i)\smallsetminus\{s^1,\dots,s^c\}$, there exists well-colored path~$P$
 that does not intersect $U$ nor any path in~$\mathcal{C}$.
\end{lemma}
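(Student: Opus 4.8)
\textbf{Proof plan for Lemma~\ref{lem_paths_from_B}.}

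The plan is to argue by a "maximal disjoint packing" argument, in the spirit of Menger-type reasoning, and to derive a contradiction with the definition of $\bad(i)$ if the conclusion fails. Suppose $|\bad(i)|\geq b$, and fix $c\in\{0,\dots,b-1\}$, nodes $s^1,\dots,s^c\in\bad(i)$, a collection $\mathcal{C}$ of $c$ node-disjoint well-colored paths from $\bad(i)$ to $u_i$ avoiding $U$, and some $s\in\bad(i)\smallsetminus\{s^1,\dots,s^c\}$. We want to produce a well-colored path $P$ from $s$ to $u_i$ avoiding $U$ and avoiding every path in $\mathcal{C}$. The key observation is that $s\in\bad(i)$ means $\rho(s)>f(i)$, i.e., there are more than $f(i)$ node-disjoint well-colored paths from $s$ to $u_i$ in $G$. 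The "forbidden" set of internal nodes we must dodge has size at most $|U|$ (from $U$) plus $c\cdot(i-1)$ (the internal nodes of the $c$ paths in $\mathcal{C}$, each of length $i$ hence $i-1$ internal nodes), plus possibly the endpoints $s^1,\dots,s^c$ — but those are sources distinct from $s$, and a well-colored path from $s$ to $u_i$ cannot reuse another source as an internal node in a problematic way unless we are careful; this bookkeeping is exactly what $f(i)\geq(b-1)i+|U|$ is designed to absorb, since $c\leq b-1$ gives $c\cdot i\leq (b-1)i$.

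The main steps, in order, would be: (1) translate "$s\in\bad(i)$" into the existence of more than $f(i)$ pairwise node-disjoint well-colored $s$–$u_i$ paths in $G$; (2) count the total number of nodes lying on the structures we must avoid, namely $U$ together with the union of the paths in $\mathcal{C}$, bounding this by $|U|+c\cdot i$ (counting generously all nodes of each path of length $i$, or more sharply only the internal ones plus the shared endpoint $u_i$ which does not matter since $P$ also ends at $u_i$); (3) since each of the $>f(i)$ disjoint $s$–$u_i$ paths from step (1) uses at least one node outside $\{s,u_i\}$ and these node-sets are pairwise disjoint, at most $|U|+c\cdot i\leq |U|+(b-1)i\leq f(i)$ of them can be "blocked" by meeting the forbidden set; hence at least one of them is entirely disjoint from $U$ and from $\bigcup\mathcal{C}$ except at the common endpoint $u_i$, giving the desired $P$. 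One has to handle the endpoint $u_i$ (shared by design, and therefore not an obstruction) and the source endpoints $s^1,\dots,s^c$ (distinct from $s$; since the packing in step (1) consists of $s$–$u_i$ paths, each blocked path meets the forbidden set in an internal node, so a single counting argument suffices) with a small amount of care.

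The hard part, or rather the only delicate part, is the accounting in step (2)–(3): making sure the bound $f(i)\geq(b-1)i+|U|$ is exactly strong enough, which requires being precise about whether we count full paths (length $i$, so $i+1$ nodes, but two endpoints are "free") or just internal nodes, and about whether the sources $s^j$ need to be added to the forbidden set. I expect that counting the internal vertices of the $c$ paths ($c(i-1)$ of them) plus the $c$ sources $s^j$ plus $|U|$, i.e., $c\cdot i+|U|\leq(b-1)i+|U|\leq f(i)$, lands exactly on the stated hypothesis, so no slack is wasted. Everything else — the existence of the packing from $\rho(s)>f(i)$, and the pigeonhole step that one disjoint path must survive — is routine once the count is pinned down.
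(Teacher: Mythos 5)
Your plan is the paper's own argument: since $s\in\bad(i)$ gives more than $f(i)$ pairwise node-disjoint well-colored $s$--$u_i$ paths, and each node of the forbidden set ($U$ together with the paths in $\mathcal{C}$) can block at most one of them, the hypothesis $f(i)\geq (b-1)i+|U|\geq ci+|U|$ leaves at least one surviving path; so the approach and the pigeonhole are exactly right. One bookkeeping point should be fixed, because as written your count only lands on $ci+|U|$ by accident: with the paper's convention a well-colored path from a node colored $-1$ to $u_i$ passes through $i$ colored nodes (colors $0,\dots,i-1$), not $i-1$, so each path in $\mathcal{C}$ contributes $i$ colored nodes plus its source $s^j$, and the naive forbidden-set size is $c(i+1)+|U|$, which exceeds $f(i)$ when $c=b-1$ and would break the pigeonhole. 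The resolution (which you gesture at but should state explicitly, as the paper does) is that the sources $s^1,\dots,s^c$ need not be counted at all: they are colored $-1$, every non-source vertex of a well-colored path from $s$ carries a color in $\{0,\dots,i-1\}$, and the source of that path is $s\neq s^j$, so no path in the packing from $s$ can meet any $s^j$. With the forbidden set thus reduced to the $ci$ colored nodes of $\mathcal{C}$ plus $U$, your counting argument goes through verbatim and matches the stated hypothesis with no slack.
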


\begin{proof} 
Let $\mathcal{C}=\big \{P^j=(s^j,w_0^j,\dots,w_{i-1}^j) \mid j=1,\dots,c\big \}$. A well-colored path from any node $s\in \bad(i)$ to $u_i$ cannot go through any other node in $\bad(i)$ as the bad nodes are colored $-1$. Such a path may however contain some nodes in $U$, or in the paths in~$\mathcal{C}$.  There are less than $ci+|U|$ such nodes in total. Since $f(i)\geq ci+|U|$, any node $s\in \bad(i)\smallsetminus\{s^1,\dots,s^c\}$ has at least $ci+|U|+1$ node-disjoint well-colored paths to $u_i$. Therefore, there is a well-colored path from $s$ to $u_i$ that does not contain any node in $U$ nor any node in the paths in~$C$, as claimed. 

\end{proof}

To find upper bounds on the sizes of $\bad(2)$ and $\bad(3)$, the strategy is similar to the case $i=1$. The novelty is to consider each color of the node-disjoint paths from nodes in $\bad(i)$ to $u_i$, and to show that the paths cannot \emph{merge} at nodes of the considered color without making the nodes of $\bad(i)$ appear in a 12- or 14-cycle. 

\begin{lemma}\label{lem_i2}
If $f(2)\geq 10$ then $|\bad(2)|\leq \frac{1}{9}\deg(u_0)+2$.
\end{lemma}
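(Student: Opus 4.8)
The plan is to mimic the argument of Lemma~\ref{lem_i1}, but working one color-layer deeper, so we now have to control how the well-colored paths from nodes of $\bad(2)$ to $u_2$ behave at their intermediate node (the one colored~$0$) as well as at $u_2$ itself. First I would invoke Lemma~\ref{lem_paths_from_B} with $U=\emptyset$ and a suitable value of $b$: since $f(2)\ge 10 = (b-1)\cdot 2 + 0$ holds with $b=6$, either $|\bad(2)|<6\le \frac19\deg(u_0)+2$ and we are done, or every node $s\in\bad(2)$ admits a well-colored path to $u_2$ that avoids any bounded collection of previously-chosen disjoint paths. Using this repeatedly, I would extract, for \emph{each} $s\in\bad(2)$, a well-colored path $P_s=(s,x_s,y_s,u_2)$ where $\mathrm{color}(x_s)=0$ and $\mathrm{color}(y_s)=1$, chosen greedily so that the family $\{P_s\}_{s\in\bad(2)}$ is as ``disjoint as possible''; more precisely, Lemma~\ref{lem_paths_from_B} lets us keep adding a new disjoint path as long as we have picked fewer than $b-1=5$ of them, but crucially each bad node $s$ itself cannot lie on another bad node's path (bad nodes are colored $-1$, paths' interior nodes are colored $0$ or $1$), so the only possible collisions are at the layer-$0$ nodes $x_s$ and the layer-$1$ nodes $y_s$.

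The core of the argument is then a merging analysis at the two intermediate layers. Let $W_1$ be the set of layer-$1$ nodes that are neighbors of $u_2$ and receive some identifier in the process, and for each such $y\in W_1$ let $W_0(y)$ be the layer-$0$ nodes through which identifiers pass en route to $y$. As in Lemma~\ref{lem_i1}, build the relevant bipartite incidence graphs and iteratively delete low-degree vertices; if any dense remnant survives we can trace a path of length $11$ whose two endpoints are forced (by bipartiteness and the layering) to both attach to $u_0$ via $\bad(2)$-nodes, or both to $u_2$, yielding a cycle of length $12$ or $14$ through a node of $\bad(2)$ --- contradicting the definition of $\bad(2)$, which excludes nodes lying on a $C_{12}$ or $C_{14}$. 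Pushing this through two layers (layer $0$ and layer $1$) is where the constant degrades from $\frac14$ to roughly $\frac19$: each merging step costs us a factor, and counting the edges of the incidence graphs against $f(2)\cdot|\bad(2)|$ from below and against $O(|\bad(2)|+|W_0|+|W_1|)$ from above, together with $\deg(u_0)\ge\deg(u_2)\ge|W_1|$, gives $|\bad(2)|\le\frac19\deg(u_0)+O(1)$, with the additive $+2$ absorbing the small-$|\bad(2)|$ case and the boundary effects of the greedy extraction.

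The main obstacle will be making the ``merge at layer $0$ or layer $1$ creates a short cycle'' step fully rigorous: one must check that, when two well-colored paths from distinct bad nodes $s,s'$ share an intermediate node, the resulting closed walk through $u_0$ (or through $u_2$) is genuinely a cycle of length exactly $12$ or $14$ --- i.e., that it is simple and has the right parity --- rather than a shorter cycle or a non-simple walk. This is where the precise choice $f(2)=10$ and the bookkeeping in Lemma~\ref{lem_paths_from_B} matter: we need enough surplus disjoint paths that the two merging paths can be taken internally disjoint except at the single merge vertex, so that concatenating the $s$-to-merge and merge-to-$s'$ segments with the edges $s u_0$, $s' u_0$ (length $2+2+1+1=$ not $12$; rather we must use the correct segment lengths, $2$ from $s$ up to $u_2$ and similarly, giving total $12$) produces a bona fide $C_{12}$, and analogously a $C_{14}$ on the other side. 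I would handle this by carefully accounting path lengths layer by layer and appealing to the same iterative-deletion trick used in Lemma~\ref{lem_i1}, which automatically supplies a simple path of the required length $11$ in the remnant graph.
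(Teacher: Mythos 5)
There is a genuine gap at the heart of your plan: the step ``a merge at layer $0$ or layer $1$ creates a $12$- or $14$-cycle through a bad node'' is false as stated, and you flag it yourself without resolving it. If two well-colored paths from distinct $s,s'\in\bad(2)$ share their color-$1$ node (or their color-$0$ node), closing through $u_0$ yields only a $6$-cycle (or a $4$-cycle), which does not contradict membership in $\bad(2)$, since $\bad(2)$ only excludes nodes lying on a $C_{12}$ or a $C_{14}$. To turn a merge into a forbidden cycle one must bring in \emph{additional} bad nodes and, via Lemma~\ref{lem_paths_from_B} with carefully chosen sets $U$ (containing $u_0$ and the merge vertices), extract fresh well-colored paths disjoint from everything built so far, so that the assembled cycle has length exactly $12$: this is what the paper does in its two claims (merging at color $1$ needs a third bad node $s^3$; merging at color $0$ needs a fourth, and leaves one exceptional color-$0$ node $x_0$, which is the source of the $+2$ in the bound). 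Your sketch never performs these constructions, and the iterative-deletion trick of Lemma~\ref{lem_i1} does not substitute for them: a long alternating path in a three-layer incidence structure does not correspond to a simple cycle in $G$ of length exactly $12$ or $14$, because distinct incidence-edges may reuse the same color-$0$ intermediate vertices and the closing tails to $u_2$ are length-$2$ paths whose disjointness is exactly what needs to be proved.

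Your final counting is also not sound. In Lemma~\ref{lem_i1} the bipartite argument works because the color-$0$ nodes involved are all neighbors of $u_1$, so their number is at most $\deg(u_1)\le\deg(u_0)$. For $i=2$ the color-$0$ intermediate nodes $W_0$ are neighbors of bad nodes, at distance $2$ from $u_2$, and are not contained in the neighborhood of any $u_j$; hence an edge count bounded above by $O(|\bad(2)|+|W_0|+|W_1|)$ gives no bound of the form $|\bad(2)|\le\frac19\deg(u_0)+O(1)$. The paper instead counts, for each bad node, the $f(2)-1$ surviving pairwise non-merging well-colored paths (after discarding those through $u_0$ and the exceptional node $x_0$); their color-$1$ endpoints are distinct neighbors of $u_2$, which yields $(f(2)-1)(|\bad(2)|-2)\le\deg(u_2)\le\deg(u_0)$ and hence the stated bound. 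You would need both the merge-impossibility claims and this endpoint-counting step to complete the proof.
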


\begin{proof}
We say that two well-colored paths from $\bad(i)$ to $u_i$ \emph{merge at color~$j$} if (1)~they are node-disjoint before color $j$, and (2)~they use the same nodes of color~$j$. We first consider paths merging at color~1.

\begin{ourclaim}\label{i2_color1}
If $f(2)\geq 6$, $|\bad(2)|\geq 3$,  and there exist two distinct nodes $s^1,s^2\in \bad(2)$ with paths merging at color~1, then one of those paths goes through $u_0$.
\end{ourclaim}

\begin{claimproof}
For the purpose of contraposition, let us assume that there exist two distinct nodes $s^1,s^2$ in $\bad(2)$ that have well-colored paths $P^1=(s^1,w_0^1,x_1)$, and $P_z=(s^2,z_0,x_1)$ to $u_2$ merging at color 1, with $w_0^1\neq u_0$ and $z_0 \neq u_0$.
Then, by applying Lemma~\ref{lem_paths_from_B} with $b=3$ and $U_1=\{u_0,z_0\}$, we get that, if $|\bad(2)|\geq 3$, then (1)~there exists a well-colored path $P^2=(s^2,w_0^2,w_1^2)$ to $u_2$ that does not intersect $U_1\cup P^1$, and (2)~for all $s^3\in \bad(2)\smallsetminus \{s^1,s^2\}$, there is a well-colored path $P^3=(s^3,w_0^3,w_1^3)$ to $u_2$ that does not intersect $U_1\cup P^1\cup P^2$.
As a consequence,  
$$
(u_0,s^1,w_0^1,x_1,z_0,s^2,w_0^2,w_1^2,u_2,w_1^3,w_0^3,s^3)
$$
is a 12-cycle (see Fig.~\ref{fig_i2}-right), which contradicts $s^1,s^2,s^3\in \bad(2)$. This means that nodes in $\bad(2)$ cannot have paths that merge at color 1 except if one of these paths goes through~$u_0$.
\end{claimproof}

We now consider paths merging at color 0. 

\begin{ourclaim}\label{i2_color0}
Let $s^1,s^2\in \bad(2)$ with $s^1\neq s^2$, and let $x_0\neq u_0$ such that $s^1$ has a well-colored path $P^1=(s^1,x_0,w_1^1)$ to $u_2$, and $s^2\in N(x_0)$. If $f(2)\geq 8$ then there are no two distinct nodes $s^3,s^4\in \bad(2)\smallsetminus\{s^1,s^2\}$ having paths merging at a node colored 0 different from $u_0$ and $x_0$.
\end{ourclaim}

\begin{claimproof}
Let us assume, for the purpose of contradiction, that there exist $s^3,s^4\in \bad(2)\smallsetminus \{s^1,s^2\}$ that have well-colored paths $P^3=(s^3,y_0,w_1^3)$ and $P_y=\{s^4,y_0,w_1^3\}$ to~$u_2$, merging at color~0 with $y_0 \neq u_0, x_0$. 
Applying Lemma~\ref{lem_paths_from_B} with $b=4$ and $U=\{u_0,x_0,y_0\}$, we get that if $|\bad(2)|\geq 4$, then
there is a well-colored path $P^4=(s^4,w_0^4,w_1^4)$ to $u_2$ that does not intersect $U\cup P^1\cup P^2\cup P^3$.
It follows that 
$$
(u_0,s^1,x_0,s^2,w_0^2,w_1^2,u_2,w_1^4,w_0^4,s^4,y_0,s^3)
$$
is a 12-cycle (see Fig.~\ref{fig_i2}-left), which contradicts the fact that $s^1,s^2,s^3,s^4\in \bad(2)$. This means that nodes in $\bad(2)\smallsetminus \{s^1,s^2\}$ cannot have well-colored paths to $u_2$ merging at any node colored 0 different than $u_0$ or $x_0$.
\end{claimproof}

\begin{figure}[!h]
\centering
\includegraphics[scale=0.5]{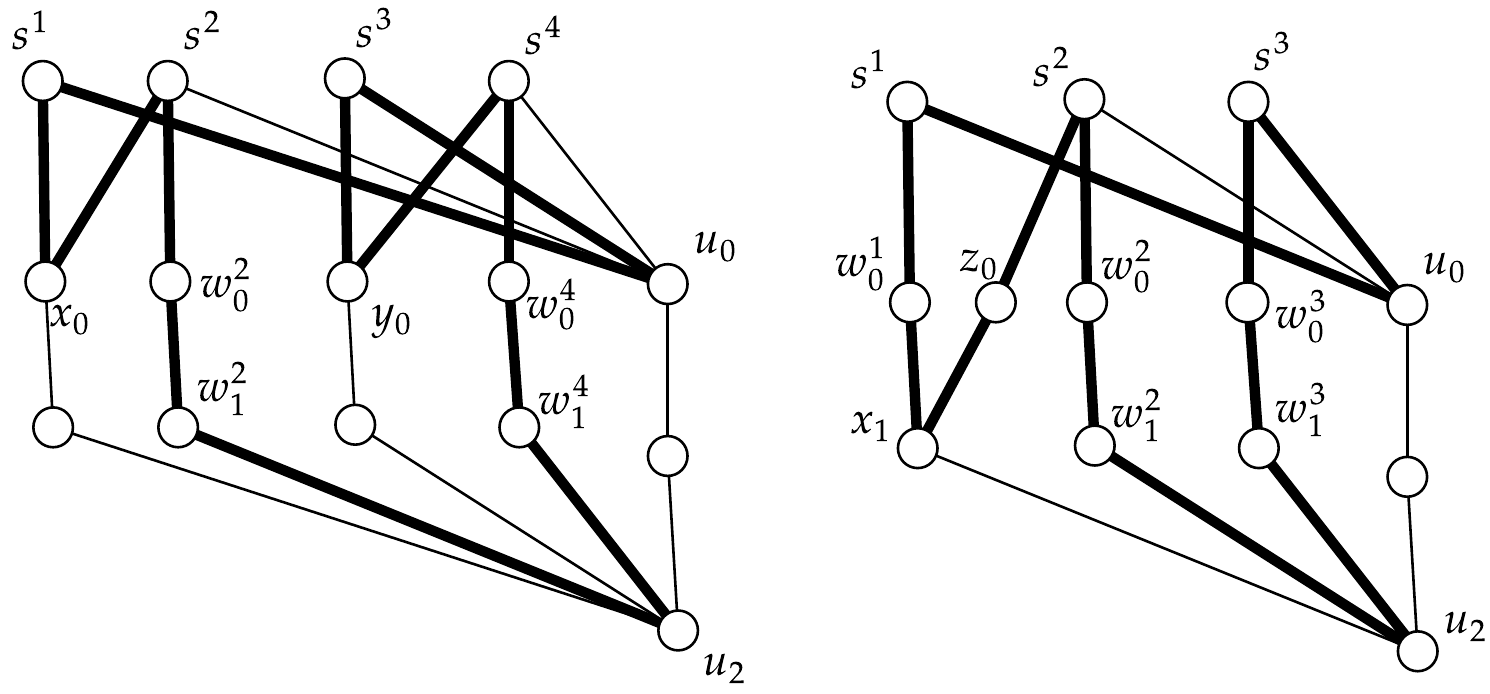}
\caption{Bold cycles are 12-cycles appearing whenever nodes in $\bad(2)$ have merged paths.}
\label{fig_i2}
\end{figure}

In the end, by combining the impossibility results of Claims~\ref{i2_color1} and~\ref{i2_color0}, two situations can occur. The first scenario is that the nodes $s^1$ and $s^2$ defined in Claim~\ref{i2_color0} do not exist, and any two nodes $s,s'\in \bad(2)$ cannot merge their well-colored paths to~$u_2$, except in $u_0$. As every node in $\bad(2)$ has at least $f(2)+1$ node-disjoint well-colored paths to $u_2$. By discarding (if it exists) the one going through $u_0$, we still have $f(2)$ paths not merging with any other well-colored path from $\bad(2)$ to $u_2$. It follows that $f(2)\cdot|\bad(2)|\leq \textrm{deg}(u_2)$.

The other scenario is that the nodes $s^1$ and $s^2$ as in Claim~\ref{i2_color0} do exist. In this case, any  other two nodes  $s,s'\in \bad(2)$ cannot merge their well-colored paths to $u_2$, except in $u_0$ or $x_0$. Discarding paths going through those two nodes, any node $s\in \bad(2)\smallsetminus\{s^1,s^2\}$ still has at least $f(2)-1$ paths not merging with any other well-colored path from $\bad(2)\smallsetminus\{s^1,s^2\}$ to $u_2$. It follows  that 
$$(f(2)-1)\cdot(|\bad(2)|-2)\leq \textrm{deg}(u_2).$$
Therefore, in all cases, we have 
$(|\bad(2)|-2)(f(2)-1) \leq \textrm{deg}(u_2)\leq \deg(u_0),$
which proves Lemma~\ref{lem_i2}.
\end{proof}

\subsection{Case $i=3$.}

We show that the nodes in $\bad(3)$ cannot have their paths merging before reaching $u_3$ (see proof in Appendix~\ref{app:lem_i3}). 

\begin{lemma}\label{lem_i3}
If $f(3)\geq 10$ then $ |\bad(3)|\leq \frac{1}{8}\deg(u_0)$.
\end{lemma}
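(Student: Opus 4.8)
\textbf{Proof plan for Lemma~\ref{lem_i3}.}
The plan is to follow the same merging-analysis strategy as in the case $i=2$, but now tracking possible merges of well-colored paths from $\bad(3)$ to $u_3$ at each of the three intermediate colors $0$, $1$, and $2$. The key point to establish is that, up to discarding a bounded number of exceptional nodes and a bounded number of exceptional ``hub'' vertices, no two bad nodes can merge their well-colored paths before reaching $u_3$; once this is known, each remaining node in $\bad(3)$ contributes at least one private neighbor of $u_3$ (the penultimate vertex of a non-merging path), so $|\bad(3)|$ is bounded by $\deg(u_3)\le\deg(u_0)$ up to a multiplicative constant and small additive terms, giving the claimed $\tfrac18\deg(u_0)$.

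The core of the argument is a sequence of claims, one per color $j\in\{2,1,0\}$, each asserting that if two bad nodes $s,s'$ (outside a bounded exceptional set) have well-colored paths merging at color~$j$, then one can splice together: the $u_0$--$s$ edge, the initial segment of $s$'s path up to the merge vertex, the reversed initial segment of $s'$'s path back to $s'$, and then — using Lemma~\ref{lem_paths_from_B} with an appropriate bound $b$ and exclusion set $U$ (the $u_0$, the merge vertex, and any previously fixed hubs) — a fresh well-colored path from $s'$ (or from a third bad node) to $u_3$ and a fresh well-colored path from yet another bad node back down to $u_3$, closing up into a $12$- or $14$-cycle through $u_0$. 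This contradicts the definition of $\bad(3)$, exactly as in Claims~\ref{i2_color1} and~\ref{i2_color0}. Because a path from $\bad(3)$ has length $3$, each invocation of Lemma~\ref{lem_paths_from_B} costs $3$ per excluded path plus $|U|$; chaining the three colors, the total exclusion budget needed is what forces $f(3)\ge 10$, and it also produces the constant additive slack in the final bound.

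I would carry it out in this order: first fix the notation for well-colored paths of length $3$ and the ``merge at color~$j$'' terminology; then handle color~$2$ (closest to $u_3$), then color~$1$, then color~$0$, each time recording which bounded set of exceptional bad nodes and hub vertices must be set aside; then combine the three no-merge conclusions to argue that, after discarding a constant number of bad nodes and routing around a constant number of hubs, every remaining $s\in\bad(3)$ has a well-colored path to $u_3$ whose length-$2$ prefix is node-disjoint from all others, so the number of such $s$ is at most $\deg(u_3)$ divided by the surviving number of guaranteed disjoint paths per node (which is $f(3)$ minus the constant number of discarded hubs); finally choose constants so that $f(3)=10$ yields $|\bad(3)|\le\tfrac18\deg(u_0)$ up to the additive constant, and check the additive constant is absorbed (or note it explicitly, as in Lemma~\ref{lem_i2}).

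The main obstacle I anticipate is the bookkeeping when merges occur simultaneously at several colors, or when a hub vertex at one color is itself on a path used to rule out merging at another color: one must be careful that the exclusion sets $U$ passed to Lemma~\ref{lem_paths_from_B} stay of bounded size and that the cycles constructed genuinely have length $12$ or $14$ (not something shorter, which would be automatically excluded, or longer). Getting the case analysis to be exhaustive — in particular ensuring that ``one of the merging paths goes through $u_0$'' is the only surviving possibility at each color, and that the $u_0$-exception can be charged away once per color — is the delicate part; the arithmetic for the final constant is routine once the structural claims are in place.
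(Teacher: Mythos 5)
Your overall strategy (per-color merge analysis at colors $2$, $1$, $0$, invoking Lemma~\ref{lem_paths_from_B} to extract fresh disjoint well-colored paths, and splicing them into a forbidden cycle through $u_0$) is indeed the strategy of the paper's proof, but the specific splicing you describe has a genuine gap at color~$2$. You propose to close every cycle with ``a fresh well-colored path from $s'$ to $u_3$ and a fresh well-colored path from yet another bad node back down to $u_3$,'' i.e.\ the closure used in Claims~\ref{i2_color1} and~\ref{i2_color0}. For $i=3$ a well-colored path from a bad node to $u_3$ has $4$ edges, so for a merge at color~$2$ this construction yields a cycle of length $1+3+3+4+4+1=16$, which is neither a $C_{12}$ nor a $C_{14}$ and gives no contradiction. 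The paper's proof instead closes the cycle along the segment $u_3,u_2,u_1,u_0$ of $C^\star$ itself: this gives a $14$-cycle for merges at color~$2$ (Claim~\ref{i3_color2}) and a $12$-cycle for merges at color~$1$ (Claim~\ref{i3_color1}), and only the color-$0$ case (Claim~\ref{i3_color0}) is closed via a third bad node, yielding a $12$-cycle. You flagged the need to check cycle lengths as an anticipated obstacle, but resolving it is precisely the missing idea: without routing through $u_1,u_2,u_3$ the color-$2$ case does not go through.

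This choice of closure also fixes your second loose end, the additive constant. Because the exceptional ``hubs'' turn out to be exactly $u_0,u_1,u_2$ (merges are only possible through these vertices), no exceptional bad nodes need to be discarded at all, and since the $f(3)+1$ guaranteed disjoint paths of each $s\in\bad(3)$ can contain at most one path through each of $u_0,u_1,u_2$, one gets directly $(f(3)-2)\cdot|\bad(3)|\leq \deg(u_3)\leq \deg(u_0)$, i.e.\ $|\bad(3)|\leq\frac18\deg(u_0)$ for $f(3)\geq 10$ with no additive slack. Your plan explicitly allows ``discarding a bounded number of exceptional bad nodes'' and an additive term to be ``absorbed or noted explicitly,'' but the statement of Lemma~\ref{lem_i3} (unlike Lemma~\ref{lem_i2}) has no additive constant, so as written your accounting would prove a weaker statement than the one claimed.
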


\subsection{Cases $i\in \{4,5,6\}$.}

For $i=4,5,6$, with values of $f(i)$ satisfying the inequality in the statement of Lemma~\ref{lem_paths_from_B}, the mere existence of one or two nodes in $\bad(i)$ is impossible, as such nodes would appear in a 12- or 14-cycle. This is shown below (see proof in Appendix~\ref{app:lem_i456}). 

\begin{lemma}\label{lem_i456}
The following holds: 
\begin{itemize}
\item If $f(4)\geq 5$ then $|\bad(4)| \leq 1$.
\item If $f(5)\geq 5$ then $\bad(5)=\varnothing$.
\item If $f(6)\geq 6$ then $\bad(6)=\varnothing$.
\end{itemize}
\end{lemma}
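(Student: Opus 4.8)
The plan is to prove each of the three bullets by contradiction: assume that $\bad(i)$ contains as many source nodes as the claim forbids (two nodes for $i=4$, one node for $i=5$ and $i=6$), and then use Lemma~\ref{lem_paths_from_B} to extract enough node-disjoint well-colored paths from these bad sources to $u_i$ so that, together with the known path along $C^\star$ from $u_0$ through $u_{13},u_{12},\dots,u_i$ (of length $14-i$) and/or the path $u_0,u_1,\dots,u_i$ (of length $i$), one assembles a closed walk of length exactly $12$ or exactly $14$ that is in fact a cycle. Since every source in $\bad(i)$ is, by definition, a neighbor of $u_0$ colored $-1$ that lies on no $12$- or $14$-cycle, producing such a cycle through one of these sources is the desired contradiction.

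**Case $i=6$.** Here the two ``halves'' of $C^\star$ from $u_0$ to $u_6$ have lengths $6$ (via $u_1,\dots,u_5$) and $8$ (via $u_{13},\dots,u_7$). Suppose $s\in\bad(6)$. Since $f(6)\ge 6 = (b-1)\cdot 6 + |U|$ with $b=1$ cannot be used directly, I instead apply Lemma~\ref{lem_paths_from_B} with $b=1$ and $U$ equal to the interior of one half of $C^\star$ — say $U=\{u_1,\dots,u_5\}$, so $|U|=5$ and $f(6)\ge 6 > 5 = (b-1)\cdot 6+|U|$ — to conclude that $s$ has a well-colored path $P$ of length $6$ to $u_6$ avoiding $u_1,\dots,u_5$. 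Then $(u_0,s,w_0,\dots,w_4,u_6,u_7,u_8,\dots,u_{13})$ closes up: the $P$-part contributes $7$ edges ($u_0$–$s$ plus the six edges of $P$), and the $C^\star$-arc $u_6,u_7,\dots,u_{13},u_0$ contributes $8$ edges, for a total of $15$ — so I must be slightly more careful about which arc and which path length to combine; the right combination is $s$'s path of length $6$ to $u_6$ (so $s,w_0,\dots,w_4,u_6$) glued with the arc $u_6,u_7,\dots,u_{13},u_0$ of length $8$ and the edge $u_0 s$, giving $6+8 = 14$ edges, hence a $14$-cycle, once one checks (using the disjointness guaranteed by Lemma~\ref{lem_paths_from_B} plus the fact that $s$ is colored $-1$ and the $w_j$'s are colored $0,\dots,5$) that no vertex is repeated. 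This contradicts $s\in\bad(6)$.

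**Cases $i=5$ and $i=4$.** The same recipe applies, adjusting $b$ and $U$. For $i=5$, assume $s\in\bad(5)$; the two arcs of $C^\star$ have lengths $5$ and $9$. Apply Lemma~\ref{lem_paths_from_B} with a set $U$ consisting of the $9-1=8$ interior vertices $u_6,\dots,u_{13}$ of the long arc — but $f(5)\ge 5 < 8$, so instead I take $U$ to be the $4$ interior vertices $u_1,u_2,u_3,u_4$ of the short arc, $|U|=4\le f(5)=5$, yielding a well-colored path of length $5$ from $s$ to $u_5$ avoiding $u_1,\dots,u_4$; gluing it with the length-$9$ arc $u_5,u_6,\dots,u_{13},u_0$ and the edge $u_0 s$ gives $5+9=14$ edges. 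For $i=4$, assume two distinct $s^1,s^2\in\bad(4)$. Apply Lemma~\ref{lem_paths_from_B} with $b=2$, $U=\{u_1,u_2,u_3\}$ (interior of the short arc, $|U|=3$, and $f(4)\ge 5 = (b-1)\cdot 4+|U|$): this first gives a well-colored path $P^1$ of length $4$ from $s^1$ to $u_4$ avoiding $U$, and then a well-colored path $P^2$ from $s^2$ to $u_4$ avoiding $U\cup P^1$. Combining $u_0,s^1,P^1,u_4,(P^2\text{ reversed}),s^2,u_0$ yields a closed walk using $1 + 4 + 4 + 1 = 10$ edges — a $10$-cycle — which is not of length $12$ or $14$; so one instead routes through a $C^\star$-arc as well: use $u_0,s^1,P^1$ to reach $u_4$ (length $5$), the arc $u_4,u_5,\dots,u_{13},u_0$ has length $10$, total $15$; the correct bookkeeping is to pair $P^1$ (length $4$) and $P^2$ (length $4$) into the $10$-cycle above only if that length is forbidden, which it is not, so the genuine contradiction for $i=4$ comes from pairing one bad-source path of length $4$ with the long arc of length $10$ to get a $14$-cycle through a single bad source, exactly as in case $i=5$; the two-source version then only sharpens the avoidance set. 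I will fix the exact arc/path-length pairing case by case so that the total is precisely $12$ or $14$ and all vertices are distinct.

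**Main obstacle.** The only real work is the bookkeeping: for each $i\in\{4,5,6\}$ one must choose which arc of $C^\star$ (the length-$i$ arc $u_0 u_1\cdots u_i$ or the length-$(14-i)$ arc $u_0 u_{13}\cdots u_i$) to combine with which bad-source path (of length $i$, or, when two bad sources are available, two such paths) so that the edge-count lands exactly on $12$ or $14$, and then verify vertex-disjointness. Vertex-disjointness splits into three easy checks: the bad sources are pairwise distinct and colored $-1$, hence distinct from all $u_j$ and from the interior (color $\ge 0$) vertices of the extracted paths; the extracted paths avoid the chosen interior arc of $C^\star$ by the choice of $U$ in Lemma~\ref{lem_paths_from_B}; and distinct extracted paths are node-disjoint by the same lemma. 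Once the arithmetic of arc-lengths is pinned down (and it is, since $i + (14-i) = 14$ and $i+i$ or $(14-i)+(14-i)$ give the other achievable even lengths), the contradiction is immediate, proving $|\bad(4)|\le 1$, $\bad(5)=\varnothing$, and $\bad(6)=\varnothing$. Together with Lemmas~\ref{lem_i1}, \ref{lem_i2}, and~\ref{lem_i3}, summing the bounds gives $\big|\bigcup_{i=1}^{6}\bad(i)\big|\le \tfrac14\deg(u_0)+\tfrac19\deg(u_0)+\tfrac18\deg(u_0)+2+0+0+1 \le \tfrac{35}{72}\deg(u_0)+3$, establishing Proposition~\ref{prop_res_1214}.
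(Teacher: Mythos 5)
Your high-level strategy is the same as the paper's (extract node-disjoint well-colored paths via Lemma~\ref{lem_paths_from_B} and close them into a 12- or 14-cycle through a bad source), but the bookkeeping that you yourself identify as ``the only real work'' is wrong in all three cases and is explicitly left unfinished, so as written there is a genuine gap. The source of the error is the length of a well-colored path: $s,w_0,\dots,w_{i-1},u_i$ has $i+1$ edges, not $i$. With the correct count, the pairing you dismissed for $i=4$ is exactly the right one: two disjoint paths to $u_4$ plus the two edges to $u_0$ give $1+5+5+1=12$, a 12-cycle (this is the paper's proof, with $U=\{u_0\}$, $b=2$, and $f(4)\geq 5=(b-1)\cdot 4+|U|$), not the 10-cycle you computed. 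For $i=5$ and $i=6$ the correct closure uses the \emph{short} arc of $C^\star$: a single path plus the arc $u_5,u_4,\dots,u_1,u_0$ gives $1+6+5=12$, and a single path plus $u_6,u_5,\dots,u_1,u_0$ gives $1+7+6=14$. Your long-arc gluings give $1+6+9=16$ and $1+7+8=16$ (and $1+5+10=16$ in your $i=4$ fallback): 16-cycles contradict nothing, since membership in $\bad(i)$ only forbids 12- and 14-cycles. Ending the $i=4$ case with ``I will fix the exact arc/path-length pairing case by case'' leaves precisely the content of the lemma unproven.

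A secondary issue: your avoidance sets omit $u_0$. The extracted path could use $u_0$ as its color-0 vertex, in which case the closed walk is not a simple cycle; the paper avoids this by taking $U=\{u_0\}$ for $i=4$, $U=\{u_0,\dots,u_4\}$ for $i=5$, and $U=\{u_0,\dots,u_5\}$ for $i=6$ (note these choices make the hypotheses $f(4)\geq 5$, $f(5)\geq 5$, $f(6)\geq 6$ of Lemma~\ref{lem_paths_from_B} exactly tight, which is why the short-arc closures are the ones the values of $f$ were designed for). Your remark that the long-arc interiors are excluded automatically by their colors is correct, but it does not help, since those closures have the wrong length anyway.
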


Proposition~\ref{prop_res_1214} directly follows from Lemmas~\ref{lem_i1}, and \ref{lem_i2}-\ref{lem_i456}. 
Since, for every  $i\in\{1,\dots,6\}$, $T_7(14-i)=T_7(i)$ induces the same upper bound for $|\bad(14-i)|$ as for $|\bad(i)|$, we get that $\big |\bigcup_{i\in\{1,\dots,13\}\smallsetminus \{7\}} \bad(i)\big|\leq \frac{35}{36}\deg(u_0)+6$.
It follows that  
$
\big |N(u_0)\smallsetminus \bigcup_{i\in\{1,\dots,13\}\smallsetminus \{7\}} \bad(i)\big|\geq \frac{1}{36}\deg(u_0)-6.
$
As a consequence, the number of \emph{good} neighbours $s$ of $u_0$ (not belonging to any $B(i)$ is at least a constant fraction of deg($u_0$). This means that after $\Theta(n^{6/7})$ repetitions of the choice of~$s$,  a node in $N(u_0)\smallsetminus \bigcup_{i\in\{1,\dots,13\}\smallsetminus \{7\}} \bad(i)$ that is colored~$-1$ will be picked with probability at least 2/3. By the previous Lemmas, this node will lead to a rejection, detecting a 12- or 14-cycle.

\subsection{Looking for a $C_{12}$.}

At this point, we can assume that there are no 14-cycle in the graph because, if there were, then the algorithm would have rejected before. Let us then assume the existence of a 12-cycle $C^\star=\{u_0,\dots,u_{11}\}$, that is well colored, where $u_0$ a heavy node. Then, by fixing
$T_6(i) = 1$
for every $i=1,\dots,5$, 
$u_6$ will reject.
Indeed, recall that there are $\Omega(n^{1/6})$ neighbors of $u_0$ colored~$-1$. Therefore, by performing $O(n^{6/7})$ iterations, the probability of picking $s\in N(u_0)$ colored~$-1$ is at least~$2/3$. Whenever such a node~$s$ is picked, $u_0$~sends its identifier. Suppose that $u_i$ is the first node in $C^\star$ to receive 2 identifiers. Then, by Lemma~\ref{lem-paths-from-single-s}, one of these identifiers comes from a path $w_0,w_1,...,w_{i-1}$ that is node-disjoint from $\{u_0,\dots,u_{i-1}\}$. Therefore, 
$ (s,w_0,w_1,...,w_{i-1},u_i,u_{i+1},...,u_{11},u_0)$
is a 14-cycle, a contradiction, which completes the proof of Theorem~\ref{theo:small-families}.

\paragraph{\textit{Remark.}}

A threshold algorithm deciding $\{C_{10},C_{12}\}$-freeness in $O(n^{1-1/6})$ rounds is given in Appendix~\ref{app:C10C12}.




\section{Conclusion} 

The threshold-based approach, as used in~\cite{Censor-HillelFG20}, is appealing  for the design of efficient \CONGEST\/ algorithms deciding $C_{2k}$-freeness, for arbitrary $k\geq 2$. It was successfully applied to $k\in\{2,\dots,5\}$, resulting in algorithms deciding $C_{2k}$-freeness in $O(n^{1-1/k})$ rounds. We have shown that it is hopeless to use the threshold-based approach as such for $k\geq 6$. 

Nevertheless, we have also shown that, despite this limit, the threshold-based approach can be used to design algorithms for deciding $\{C_{12},C_{14}\}$-freeness in $n^{1-\frac{1}{7}}$ rounds, even if neither $C_{12}$-freeness nor $C_{14}$-freeness can be decided in the same round-complexity by threshold-based algorithms. We do not know whether this is just a specific case, or whether there is an infinite collection of pairs $(k,k')$ with $k>k'$ such that $\{C_{2k'},C_{2k}\}$-freeness can be decided in $O(n^{1-1/k})$ rounds by a threshold-based algorithm. 

So far, the best known generic algorithm, i.e., an algorithm applying to all $k\geq 2$, decides $C_{2k}$-freeness in  $n^{1-1/\Theta(k^2)}$ rounds~\cite{fischer2018}, and it is open whether one can do better in general. It may actually be the case that the round-complexity of deciding $C_{2k}$-freeness is precisely $\Theta(n^{1-1/k})$ for all $k\geq 2$, but this is just a guess. 

\paragraph{Acknowledgements:} The authors are thankful to Pedro Montealegre and Ivan Rapaport for fruitful initial discussions about the power and limit of the threshold-based algorithms from~\cite{Censor-HillelFG20}.

\newpage
\bibliographystyle{splncs04}
\bibliography{biblio-cycle}


\newpage 
\appendix
\centerline{\Large\bf A P P E N D I X}

\section{Proof of Lemma~\ref{lem:C*unique}}
\label{app:lem:C*unique}

We can view $G_k$ as a weighted graph $\widehat{G}_k$, with the following edge-weights (see Fig.~\ref{Gasaweightedgraph}): 
\begin{itemize}
\item For every $p\in\{1,\dots,N\}$, the edge between $u_0$ and $s^p$ has weight~1; Similarly, for every $q\in\{1,\dots,N\}$, the edge between $u_{k-3}$ and $w_{k-4}^q$ has weight~1;
\item All the part of $G_k$ including the nodes $w_j^{p,q}$ for $(p,q)\in\{1,\dots,N\}^2$ and $j\in\{0,\dots,k-5\}$ is replaced by a complete bipartite graph $K_{N,N}$ with partitions $S$ and~$W$, with edge-weights~$k-3$;
\item The path $u_0,\dots,u_{k-3}$ is replaced by two parallel edges $e_1$ and $e_2$ with respective weights~$k-3$ and~$k+3$;
\item The private neighbors of $w_0^{p,q}$ and $w_{k-5}^{p,q}$, $(p,q)\in\{1,\dots,\alpha\}^2$, are simply discarded. 
\end{itemize}
Note that there is a one-to-one correspondence between the cycles in~$G_k$ and the cycles in~$\widehat{G}_k$. Moreover, if the lengths of the cycles in $\widehat{G}_k$ take into account the edge-weights, then this correspondence also preserve the lengths. 

\begin{figure}[!ht]
\centering
\makebox[\textwidth][c]{\includegraphics[scale=0.6]{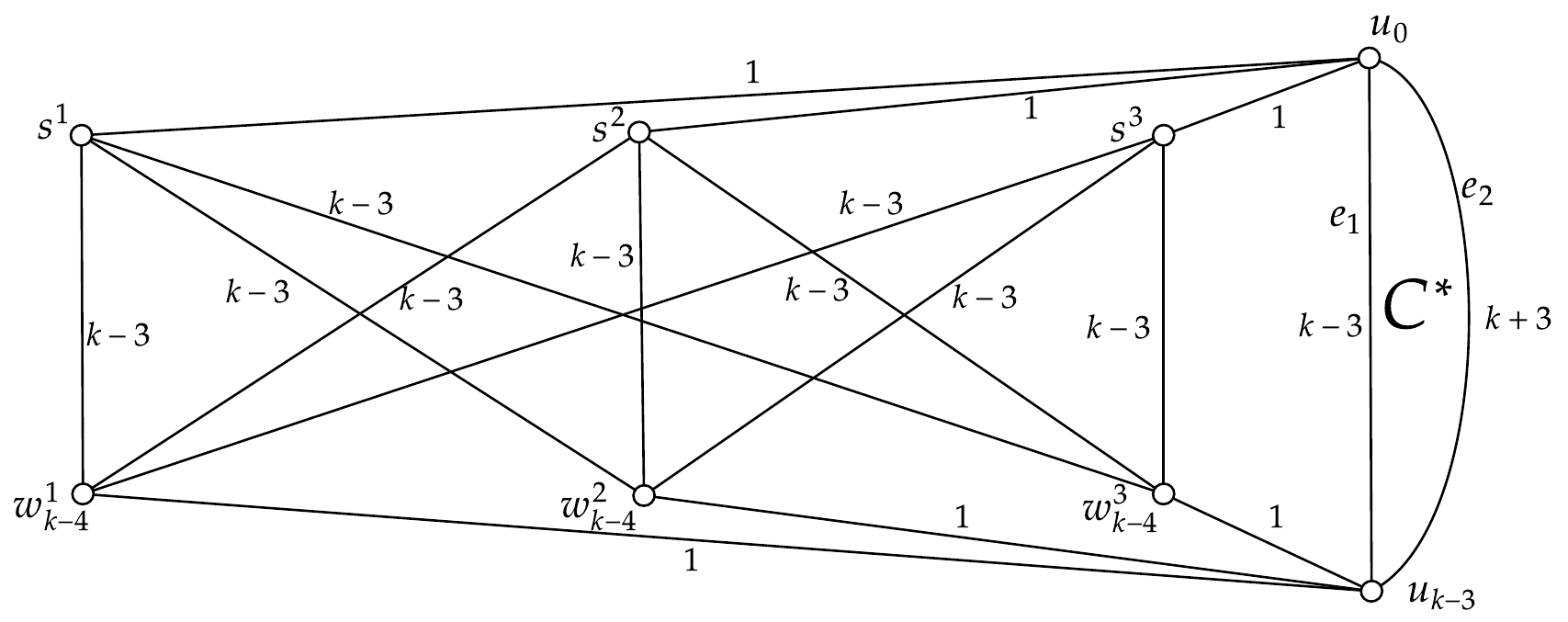}}
\caption{The weighted graph $\widehat{G}_k$ for $k\geq 7$ and $N=3$.}
\label{Gasaweightedgraph}
\end{figure}

Any cycle in $\widehat{G}_k$ different from $C^\star=(e_1,e_2)$, but using $e_2$ has length larger than~$2k$. Any cycle in $\widehat{G}_k$ different from $C^\star=(e_1,e_2)$, but using $e_1$ must contain an odd number $2x+1$ of edges from the complete bipartite subgraph $K_{N,N}$. Therefore it has length $k-3+1+(2x +1)(k-3)+1$ for some integer $x \geq 0$. For $x=0$, this length is~$2k-4$, and, for $x\geq 1$, this length is at least  $4k-10>2k$ for every $k\geq 6$. 
For similar reasons, every cycle passing through $u_0$ or $u_{k-3}$, but not both, is also of length $k-3+1+(2x +1)(k-3)+1$ for some integer $x \geq 0$, and the same analysis holds.
Every cycle passing through $u_0$ and $u_{k-3}$ but not using~$e_1$ nor~$e_2$ contains an even number of edges from the $K_{N,N}$. Thus is has a length of the form $2(k-3)+4+2x(k-3)$ for some integer $x\geq 0$. For $x=0$, this length is~$2k-2$, and, for $x\geq 1$, this length is at least $4k-8>2k$ for any $k\geq 6$.
Finally, every cycle fully included in the complete bipartite graph~$K_{N,N}$ has length of the form $2(x+2)(k-3)$ for some integer $x\geq 0$. This length is at least $4k-12>2k$, for every $k\geq 7$.
\qed

\section{The Specific Construction for $k=6$}
\label{app:specific-for-k6}

Lemma~\ref{lem:C*unique} does not hold for $k=6$. Indeed, any 4-tuple $(s,s',w_2,w_2')\in S^2\times W^2$ induces a 12-cycle in the complete bipartite graph $K_{N,N}$ in $\widehat{G}_6$. Nevertheless, there are no $C_{12}$ in $G_6$ passing through $u_0$ or $u_3$, other than~$C^\star$. We slightly modify~$G_6$ to extend Lemma~\ref{lem:C*unique} to $k=6$. Specifically, we replace the complete bipartite graph  in $\widehat{G_6}$  by a dense bipartite graph with no 4-cycle, using the following lemma.

\begin{lemma}[\cite{Kovari1954}]
There exists an infinite family of $C_4$-free graphs $\{G_d\mid d \;\mbox{prime}\}$ such that, for every prime number~$d$, $G_d$~is a $d$-regular bipartite graph in which each partition has size $d^2$, and $|E(G_d)|=d^3+o(d^3)$. 
\end{lemma}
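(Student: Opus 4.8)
The plan is to prove the Kővári–Sós–Turán style lemma by an explicit construction, the classical \emph{incidence graph of a projective plane} (or, equivalently, the Erdős–Rényi polarity graph / Brown's construction), which is the standard way to produce dense $C_4$-free bipartite graphs. First I would fix a prime $d$ and work over the finite field $\mathbb{F}_d$. Take one side of the bipartition to be (a subset of) points of the affine/projective plane and the other side to be (a subset of) lines, with a point joined to a line when the point lies on the line. Concretely, one clean variant: let both partition classes be copies of $\mathbb{F}_d \times \mathbb{F}_d$, and join $(a,b)$ on the left to $(x,y)$ on the right iff $y = ax + b$. Each left vertex $(a,b)$ is joined to exactly $d$ right vertices (one for each $x$), and symmetrically each right vertex $(x,y)$ is joined to exactly $d$ left vertices (the pairs $(a,b)$ with $b = y - ax$), so the graph is $d$-regular with each side of size $d^2$ and hence $|E| = d^3$.

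The key step is verifying $C_4$-freeness: a $C_4$ would be two left vertices $(a,b) \ne (a',b')$ both adjacent to two distinct right vertices $(x,y)$ and $(x',y')$. Adjacency forces $y = ax+b = a'x+b'$ and $y' = ax'+b' = a'x'+b'$, i.e. the line $t \mapsto at+b$ and the line $t \mapsto a't+b'$ agree at both $x$ and $x'$. Two distinct lines over a field meet in at most one point, so either $(a,b)=(a',b')$ or $x = x'$; either way we do not get a genuine $C_4$. This gives the required family $\{G_d \mid d \text{ prime}\}$, regular of degree $d$, with partitions of size $d^2$ and $|E(G_d)| = d^3$, which is even a little stronger than the $d^3 + o(d^3)$ claimed (so the $o(d^3)$ error term is harmless slack, presumably there to accommodate the projective-plane variant where one works with $d^2+d+1$ points).

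The main obstacle — really the only subtlety — is making sure the statement is phrased for infinitely many $d$ with the exact parameters the rest of the argument needs: one must invoke that there are infinitely many primes (Euclid/Dirichlet) so the family is genuinely infinite, and one must double-check that the construction used downstream only needs the three properties \emph{($d$-regular, each side $d^2$, $\Theta(d^3)$ edges, no $C_4$)} and not, say, connectivity or a specific girth beyond $6$. If a cleaner $n = d^2$ count is wanted on the nose, the $\mathbb{F}_d \times \mathbb{F}_d$ incidence construction above already delivers it; if one prefers to cite it, Brown's theorem or the Erdős–Rényi polarity graph gives the same asymptotics. I would therefore present the explicit $y = ax+b$ graph, verify regularity and the edge count by the counting above, and verify $C_4$-freeness by the two-lines-meet-once argument, and note that ranging $d$ over all primes yields the claimed infinite family.
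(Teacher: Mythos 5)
Your construction is correct, and it verifies exactly the parameters stated: the bipartite graph on two copies of $\mathbb{F}_d\times\mathbb{F}_d$ with $(a,b)\sim(x,y)$ iff $y=ax+b$ is $d$-regular, has $d^2$ vertices per side and $|E|=d^3$ (so the $o(d^3)$ slack is not even needed), and the two-lines-meet-in-at-most-one-point argument rules out $C_4$; infinitude of primes then gives the infinite family. Note that the paper does not prove this lemma at all --- it simply cites K\H{o}v\'ari--S\'os--Tur\'an --- so your proposal goes beyond the paper by supplying the standard explicit affine incidence-graph construction underlying that citation, which is exactly what is needed downstream (only $d$-regularity, side size $d^2$, and $C_4$-freeness are used). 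One cosmetic slip: in the adjacency conditions for the second right vertex you wrote $y'=ax'+b'$ where it should be $y'=ax'+b$; the subsequent ``two distinct lines agree in at most one abscissa'' argument is unaffected, since subtracting the two relations at $x$ and $x'$ still forces $a=a'$ (hence $b=b'$) or $x=x'$ (hence $(x,y)=(x',y')$).
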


Let $N=d^2$. In $\widehat{G}_6$, we replace the complete bipartite graph $K_{N,N}$ by $G_d$. With this modification, Lemma~\ref{lem:C*unique} holds. We now revisit Lemma~\ref{lem:lowsuccessfork7}. $X$~becomes a random variable following a Binomial law with parameters $\sqrt{N}$ and $(\frac{1}{12})^{3}$. As a consequence, 
\[
\Pr[X\leq T] = O(N^T e^{-\Theta(\sqrt{N})}). 
\]
Therefore, the probability that $u_3$ has to forward at most $T$ messages is $O(N^Te^{-\Theta(\sqrt{N})})$. If $T=o(\sqrt{N}/\log N)\simeq o(n^{1/6}/\log n)$, then this probability is asymptotically equal to $\exp(-\Theta(n^{1/6}))$. So for $k=6$ we obtain a result similar to Lemma~\ref{lem:lowsuccessfork7}, by simply replacing the value of threshold $T$ by $o(\sqrt{N}/\log N)\simeq o(n^{1/6}/\log n)$. The proof of Theorem~\ref{theo:does-not-extend} follows as for $k\geq 7$. 
\qed

\section{Proof of Lemma~\ref{lem_i3}}
\label{app:lem_i3}

We first consider paths merging at color 2.

\begin{ourclaim}\label{i3_color2}
    If $f(3)\geq 8$, and if there exist two distinct nodes $s^1,s^2\in \bad(3)$ that have well-colored paths merging at a node of color 2 different from $u_2$, then one of these paths goes through $u_0$ or~$u_1$.
\end{ourclaim}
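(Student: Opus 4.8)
The plan is to follow the same contrapositive strategy used in Claims~\ref{i2_color1} and~\ref{i2_color0}: assume there are two distinct $s^1,s^2 \in \bad(3)$ whose well-colored paths to $u_3$ merge at some node $x_2$ of color $2$ with $x_2 \notin \{u_0,u_1,u_2\}$, and derive a $12$-cycle through $u_0$ (a node on $C^\star$), which contradicts $s^1,s^2 \in \bad(3)$. So write $P^1 = (s^1, w_0^1, w_1^1, x_2)$ and $P_z = (s^2, z_0, z_1, x_2)$ for the two merging paths. First I would use Lemma~\ref{lem_paths_from_B} to extract, from the remaining bad nodes, node-disjoint well-colored paths that avoid both $u_0$ and the already-used nodes: apply the lemma with $b = 3$ (or as large as needed) and $U = \{u_0, z_0, z_1\}$, which requires $f(3) \geq (b-1)\cdot 3 + |U| = 2\cdot 3 + 3 = 9$ — hence the hypothesis $f(3) \geq 8$ in the claim is slightly too weak for the naive count and one must be careful about exactly which nodes go into $U$; I expect the actual argument shaves this by noting that $z_1$ is adjacent to $x_2$ which is shared, or by only needing to avoid $\{u_0\}$ plus one of the $z$-nodes. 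This bookkeeping of which nodes must be excluded, and checking the resulting inequality on $f(3)$, is the main obstacle.

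Once the disjoint paths are in hand, the second step is to assemble the cycle. Using $s^3 \in \bad(3) \setminus \{s^1, s^2\}$ with a well-colored path $P^3 = (s^3, w_0^3, w_1^3, w_2^3)$ to $u_3$ that is disjoint from $U \cup P^1$, and the extra path $P_z$, one forms a closed walk
\[
(u_0, s^1, w_0^1, w_1^1, x_2, z_1, z_0, s^2, \dots)
\]
and by counting the edges — each bad node contributes the edge to $u_0$ or a short path to $x_2$ or $u_3$ — one checks the total length is exactly $12$, so it is a $12$-cycle unless some coincidence of vertices occurs. The disjointness guaranteed by Lemma~\ref{lem_paths_from_B} rules out such coincidences among the path-interiors; the only vertices that could still collide are those deliberately placed in $U$, which is why $U$ must be chosen to contain precisely the nodes that would otherwise be reused (here $u_0$ and the $z$-nodes). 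If a well-colored path from $s^3$ is forced to pass through $u_1$ or $u_2$, that is exactly the escape clause "one of these paths goes through $u_0$ or $u_1$" — so I would organize the case analysis so that the $12$-cycle is produced whenever \emph{none} of the involved paths touches $\{u_0, u_1\}$, and otherwise the conclusion of the claim holds trivially.

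After the claim for color $2$, the full proof of Lemma~\ref{lem_i3} would continue with analogous claims for paths merging at colors $1$ and $0$ (as in the $i=2$ case, each successive claim needing a slightly larger slice of $\bad(3)$ to be set aside, which is affordable since we only need the bound when $|\bad(3)|$ is large), and then conclude by the same counting: after discarding the at most constantly many bad nodes whose paths are allowed to merge through the $O(1)$ nodes of $C^\star$ that appear, every remaining node of $\bad(3)$ contributes $f(3)$ pairwise non-merging well-colored paths into $u_3$, giving $f(3)\cdot(|\bad(3)| - O(1)) \leq \deg(u_3) \leq \deg(u_0)$, hence $|\bad(3)| \leq \frac{1}{f(3)-1}\deg(u_0) + O(1) \leq \frac{1}{8}\deg(u_0)$ for $f(3) \geq 10$. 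The one genuinely delicate point throughout is tracking, for each merging color, exactly how many distinct non-$C^\star$ vertices must be forbidden and verifying that the stated value $f(3) = 10$ (and the weaker $f(3) \geq 8$ in the intermediate claim) actually suffices for the invocations of Lemma~\ref{lem_paths_from_B}.
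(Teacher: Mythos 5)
Your plan follows the template of Claim~\ref{i2_color1} too literally, and the central step fails. With a third bad node $s^3$, the natural closed walk for $i=3$ is $(u_0,s^1,w_0^1,w_1^1,x_2,z_1,z_0,s^2,w_0^2,w_1^2,w_2^2,u_3,w_2^3,w_1^3,w_0^3,s^3)$, whose length is $1+3+3+4+4+1=16$; a $16$-cycle yields no contradiction with $s^1,s^2,s^3\in\bad(3)$, so your assertion that ``one checks the total length is exactly $12$'' does not hold, and no re-choice of $U$ repairs this. The paper's proof uses only the two nodes $s^1,s^2$ and a different closure: it applies Lemma~\ref{lem_paths_from_B} with $b=2$ and $U=\{u_0,u_1,u_2,z_0,z_1\}$ (the requirement is $f(3)\geq 1\cdot 3+5=8$, which matches the claim's hypothesis exactly and dissolves the $f(3)\geq 9$ tension you flagged), extracts a well-colored path $P^2=(s^2,w_0^2,w_1^2,w_2^2)$ to $u_3$ avoiding $U\cup P^1$, and then closes the cycle through the segment $u_3,u_2,u_1,u_0$ of $C^\star$ itself, obtaining the $14$-cycle $(u_0,s^1,w_0^1,w_1^1,x_2,z_1,z_0,s^2,w_0^2,w_1^2,w_2^2,u_3,u_2,u_1)$. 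The missing idea in your write-up is precisely this: for $i=3$ the contradiction must come from a $14$-cycle --- which the algorithm also forbids, and this is exactly where deciding $\{C_{12},C_{14}\}$-freeness jointly pays off --- closed along $C^\star$, not from a $12$-cycle through a third bad node. This is also why $u_1$ and $u_2$ must be placed in $U$ (the auxiliary path $P^2$ has to avoid the portion of $C^\star$ used to close the cycle), and why the escape clause of the claim concerns only the two merging paths hitting $u_0$ or $u_1$: a hit at $u_2$ is impossible since the merge node $x_2\neq u_2$ is the only color-$2$ vertex on those paths, so your suggestion to route the escape clause through the path of $s^3$ is misplaced.

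Your concluding sketch for Lemma~\ref{lem_i3} (discard paths through the $O(1)$ vertices of $C^\star$ that can host merges and count node-disjoint paths into $u_3$) is in the right spirit --- the paper discards paths through $u_0,u_1,u_2$ and gets $(f(3)-2)\cdot|\bad(3)|\leq \deg(u_3)\leq\deg(u_0)$ --- but it rests on the three merging claims, and as it stands your argument for the color-$2$ claim does not go through.
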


\begin{claimproof}
Suppose that there exist $s^1,s^2\in \bad(3)$ that have well-colored paths $P^1=(s^1,w_0^1,w_1^1,x_2)$ and $P_z=(s^2,z_0,z_1,x_2)$ to $u_3$ merging at color 2 with $u_0\notin\{z_0, w_0^1\}$, $u_1\notin\{z_1,w_1^1\}$, and $x_2\neq u_2$.
Then, applying Lemma~\ref{lem_paths_from_B} with $b=2$ and $U=\{u_0,u_1,u_2,z_0,z_1\}$, we get that if $|\bad(3)|\geq 2$, then there is a well-colored path $P^2=(s^2,w_0^2,w_1^2,w_2^2)$ to $u_3$ that does not intersect $U\cup P^1$.
Therefore 
$$ (u_0,s^1,w_0^1,w_1^1,x_2,z_1,z_0,s^2,w_0^2,w_1^2,w_2^2,u_3,u_2,u_1)$$ 
is a 14-cycle (see Fig.~\ref{fig_i3}-right), which contradicts $s^1,s^2\in \bad(3)$. This means that nodes in $\bad(3)$ cannot have well-colored paths to $u_3$ merging at any node colored 2 (except if they go through $u_0$ or $u_1$, or merge at $u_2$).
\end{claimproof}

We now consider paths merging at color 1.

\begin{ourclaim}\label{i3_color1}
    If $f(3)\geq 7$, and if there exist two distinct nodes $s^1,s^2\in \bad(2)$ that have well-colored paths merging at a node of color~1 different from $u_1$, then one of these paths goes through~$u_0$, or they both go through~$u_2$.
\end{ourclaim}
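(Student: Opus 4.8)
The plan is to mimic exactly the structure of the proof of Claim~\ref{i2_color1} and Claim~\ref{i3_color2}, but now tracking paths that merge at color~1. Suppose, for contradiction, that there are two distinct nodes $s^1,s^2\in\bad(3)$ with well-colored paths $P^1=(s^1,w_0^1,x_1,w_2^1)$ and $P_z=(s^2,z_0,x_1,z_2)$ to $u_3$ that are node-disjoint before color~1, use the common node $x_1\neq u_1$ at color~1, and such that neither path passes through $u_0$, and not both pass through $u_2$ (so at least one of $w_2^1,z_2$ is different from $u_2$). The idea is that the "merge" at $x_1$ gives us a detour $w_0^1,x_1,z_0$ of length~$2$ between $s^1$ and $s^2$, and then we spend the remaining budget producing two fresh node-disjoint tails to $u_3$ so as to close a cycle through $u_0$ of total length exactly $12$ or $14$.

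First I would apply Lemma~\ref{lem_paths_from_B} to manufacture the extra node-disjoint well-colored paths I need. Concretely, take $b=2$ and $U=\{u_0,u_1,u_2,z_0\}$ (adjusting $U$ to include whichever of $w_2^1,z_2$ should be avoided, keeping $|U|\leq 5$ so that $f(3)\geq (b-1)\cdot 3 + |U| = 8$ suffices — hence the hypothesis $f(3)\geq 7$ may need to be $f(3)\geq 8$, matching Claim~\ref{i3_color2}; I would double-check which is actually needed and state it consistently). This yields, assuming $|\bad(3)|\geq 2$, a well-colored path $P^2=(s^2,w_0^2,w_1^2,w_2^2)$ from $s^2$ to $u_3$ avoiding $U\cup P^1$. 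Then the closed walk
\[
(u_0,\,s^1,\,w_0^1,\,x_1,\,z_0,\,s^2,\,w_0^2,\,w_1^2,\,w_2^2,\,u_3,\,u_2,\,u_1)
\]
visits $12$ pairwise-distinct vertices — $u_0,u_1,u_2,u_3$ on $C^\star$, the bad nodes $s^1,s^2$ colored $-1$, and $w_0^1,x_1,z_0,w_0^2,w_1^2,w_2^2$ of colors $0,1,0,0,1,2$ — so it is a genuine $12$-cycle, contradicting $s^1,s^2\in\bad(3)$ (a bad node is not on any $C_{12}$). If instead the configuration forces one of the paths through $u_2$, I would route the cycle through $u_2$ appropriately to land on a $14$-cycle instead, exactly as in the right-hand picture of Fig.~\ref{fig_i3}; the extra two vertices $u_1,u_2$ (or a longer tail) absorb the length difference.

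The main obstacle, as in the $i=2$ and $i=3$ color-$2$ cases, is bookkeeping the exclusion sets so that all listed vertices are genuinely distinct and the resulting cycle has length precisely $12$ (or $14$), not something shorter that would not contradict $\bad(3)$: I must be careful that the "merge point" $x_1$ is distinct from every $u_j$ and from the interior of $P^2$, that $w_0^1\neq u_0$ and $z_0\neq u_0$ (which is why the hypothesis explicitly excludes paths through $u_0$), and that the freshly produced path $P^2$ avoids not just $U$ and $P^1$ but also the segment $z_0,x_1$ of $P_z$ that I actually use. Handling the sub-case where forbidding $u_0$ alone is not enough — i.e. where the only way to avoid a degenerate short cycle is that both offending paths pass through $u_2$ — is the delicate part, and I would dispatch it by a short case analysis on which of $w_2^1,z_2$ equals $u_2$, closing the cycle via $u_3,u_2,u_1$ when exactly one does and via a length-$14$ closure when neither does (contradicting $\bad(3)\cap C_{14}=\varnothing$). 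Everything else is the routine path-surgery already carried out twice above.
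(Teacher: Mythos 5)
Your construction is exactly the paper's: apply Lemma~\ref{lem_paths_from_B} with $b=2$ and $U=\{u_0,u_1,u_2,z_0\}$ (so $f(3)\geq 3+4=7$ suffices, and there is no need to enlarge $U$) to obtain $P^2$ avoiding $U\cup P^1$, and close the 12-cycle $(u_0,s^1,w_0^1,x_1,z_0,s^2,w_0^2,w_1^2,w_2^2,u_3,u_2,u_1)$, contradicting $s^1,s^2\in\bad(3)$. The only wrinkle is your concluding case analysis, which is both unnecessary and inverted: the cycle never uses $w_2^1$ or $z_2$, so whether either equals $u_2$ is irrelevant to the closure; the clause ``or they both go through $u_2$'' in the statement exists only so that, after possibly swapping the roles of $s^1$ and $s^2$, the retained path $P^1$ avoids $u_2$ and hence all of $U$, as formally required to invoke Lemma~\ref{lem_paths_from_B} with $\mathcal{C}=\{P^1\}$. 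In particular, no ``length-14 closure'' is needed (and none is readily available) when neither path passes through $u_2$ --- that is precisely the generic case in which your 12-cycle already applies.
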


\begin{claimproof}
Suppose that there exist $s^1,s^2\in \bad(3)$ that have well-colored paths $P^1=(s^1,w_0^1,x_1,w_2^1)$ and $P_z=(s^2,z_0,x_1)$ to $u_3$ merging at color~1, with $u_0\notin\{w_0^1,z_0\}$ and $w_2^1\neq u_2$. 
Applying Lemma~\ref{lem_paths_from_B} with $b=2$ and $U=\{u_0,u_1,u_2,z_0\}$, we get that  if $|\bad(3)|\geq 2$,  then there is a well-colored path $P^2=(s^2,w_0^2,w_1^2,w_2^2)$ to $u_3$ that does not intersect $U\cup P^1$.
This implies that 
$$ (u_0,s^1,w_0^1,x_1,z_0,s^2,w_0^2,w_1^2,w_2^2,u_3,u_2,u_1)$$ 
is a 12-cycle (see Fig.~\ref{fig_i3}-center), which contradicts $s^1,s^2\in \bad(3)$. This means that nodes in $\bad(3)$ cannot have well-colored paths to $u_3$ merging at any node colored~1 other than $u_1$, except if it goes through $u_0$ or~$u_2$.
\end{claimproof}

We finally consider paths merging at color 0.
\begin{ourclaim}\label{i3_color0}
    If $f(3)\geq 8$ and $|\bad(3)|\geq 3$, then there are no two distinct nodes $s^1,s^2\in \bad(2)$ that have paths merging at a node of color~0 different from $u_0$.
\end{ourclaim}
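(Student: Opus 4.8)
The plan is to follow the same template used in Claims~\ref{i2_color1}, \ref{i2_color0}, \ref{i3_color2}, and~\ref{i3_color1}: assume for contradiction that two distinct nodes of $\bad(3)$ have well-colored paths to $u_3$ that merge at a common node $y_0$ of color~$0$ with $y_0\neq u_0$, and then extract a short cycle (a $C_{12}$ or $C_{14}$) through nodes of $\bad(3)$, contradicting the definition of bad nodes. First I would name the configuration: suppose $s^1,s^2\in\bad(3)$, $s^1\neq s^2$, both have well-colored paths $P_y^1=(s^1,y_0,w_1,w_2)$ and $P_y^2=(s^2,y_0,w_1',w_2')$ to $u_3$ sharing the color-$0$ node $y_0\neq u_0$; note that $s^1,s^2$ are colored $-1$, so they cannot themselves lie on another bad node's path. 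Since $|\bad(3)|\geq 3$, pick a third node $s^3\in\bad(3)\smallsetminus\{s^1,s^2\}$.

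Next I would invoke Lemma~\ref{lem_paths_from_B}, exactly as in the earlier claims, with $b=3$ and $U=\{u_0,y_0\}$ (and whatever few extra nodes of the two fixed half-paths need to be avoided — this costs a constant, and is why the hypothesis $f(3)\geq 8$ is enough). Concretely: applying Lemma~\ref{lem_paths_from_B} once gives a well-colored path $P^2=(s^2,w_0^2,w_1^2,w_2^2)$ to $u_3$ avoiding $U$ and the relevant fixed path; applying it again (with $c=2$) gives a well-colored path $P^3=(s^3,w_0^3,w_1^3,w_2^3)$ to $u_3$ avoiding $U$, $P^2$, and the fixed path. Then I would stitch together the cycle
\[
(u_0,\,s^1,\,y_0,\,s^2,\,w_0^2,\,w_1^2,\,w_2^2,\,u_3,\,w_2^3,\,w_1^3,\,w_0^3,\,s^3),
\]
which has $12$ vertices and hence is a $C_{12}$ (one can picture it as the left-hand configuration of Fig.~\ref{fig_i3} but with the merge happening at color~$0$ rather than color~$1$). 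This $12$-cycle passes through $s^1,s^2,s^3\in\bad(3)$, contradicting the fact that bad nodes do not belong to any $C_{12}$. Hence under $|\bad(3)|\geq 3$ no such pair $s^1,s^2$ can exist, proving the claim.

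The only delicate point — and the thing worth double-checking rather than a genuine obstacle — is bookkeeping the node-disjointness: one must be sure the two invocations of Lemma~\ref{lem_paths_from_B} produce paths that, together with the single shared prefix edge $s^1y_0$, the edge $y_0 s^2$, and the final segments into $u_3$, really do form a \emph{simple} cycle of length exactly $12$, with no accidental repetition among $\{u_0,s^1,y_0,s^2,s^3\}$ and the internal path vertices. This is handled by the same argument as before: $u_0$ and $y_0$ are explicitly placed in $U$; the $s^j$ are colored $-1$ and thus appear on no well-colored path; and the color-$j$ vertices on $P^2$ and $P^3$ are forced distinct because $P^3$ is chosen to avoid $P^2$. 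Once that is in place the cycle has one vertex of each color $-1,0,1,2$ from the $s^j$-side contributions plus $u_0$ (color $0$) and $u_3$ (color $3$), totalling $12$, and the contradiction with $s^1,s^2,s^3\in\bad(3)$ closes the claim. Note the claim only rules out merging at color~$0$ away from $u_0$; combined with Claims~\ref{i3_color2} and~\ref{i3_color1}, this is exactly what is needed to conclude, as in the case $i=2$, that the bad nodes contribute node-disjoint paths to $u_3$ up to a constant number of exceptions, yielding the bound $|\bad(3)|\leq\frac{1}{8}\deg(u_0)$ of Lemma~\ref{lem_i3}.
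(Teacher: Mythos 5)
Your proof is correct and follows essentially the same route as the paper's: the same application of Lemma~\ref{lem_paths_from_B} with $b=3$ and $U=\{u_0,x_0\}$ (your $y_0$), yielding paths $P^2$ and $P^3$ and the identical $12$-cycle $(u_0,s^1,x_0,s^2,w_0^2,w_1^2,w_2^2,u_3,w_2^3,w_1^3,w_0^3,s^3)$ contradicting $s^1,s^2,s^3\in\bad(3)$. The disjointness bookkeeping you flag is handled exactly as you describe (bad nodes are colored $-1$, $u_0$ and the merge node sit in $U$, and each successive path avoids the previous ones), so nothing further is needed.
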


\begin{claimproof}
Suppose that there exists $s^1,s^2\in \bad(3)$ that have well-colored paths $P^1=(s^1,x_0,w_1^1,w_2^1)$ and $P_x = \{s^2,x_0,w_1^1,w_2^1\}$ to $u_3$ merging at color~0, with $x_0\neq u_0$.
Applying Lemma~\ref{lem_paths_from_B} with $b=3$ and $U=\{u_0,x_0\}$, we get that if $|\bad(3)|\geq 3$, then (1)~there is a well-colored path $P^2=(s^2,w_0^2,w_1^2,w_2^2)$ to $u_3$ that does not intersect $U\cup P^1$, and (2)~for all $s^3\in \bad(3)\smallsetminus \{s^1,s^2\}$, there is a well-colored path $P^3=(s^3,w_0^3,w_1^3,w_2^3)$ to $u_3$ that does not intersect $U\cup P^1\cup P^2$.
Therefore, $$ (u_0,s^1,x_0,s^2,w_0^2,w_1^2,w_2^2,u_3,w_2^3,w_1^3,w_0^3,s^3)$$
is a 12-cycle (see Fig.~\ref{fig_i3}-left), which contradicts $s^1,s^2,s^3\in \bad(3)$. This means that nodes in $\bad(3)$ cannot have well-colored paths to $u_3$ merging at color~0 (except $u_0$).
\end{claimproof}

\begin{figure}[!ht]
\centering
\includegraphics[scale=0.6]{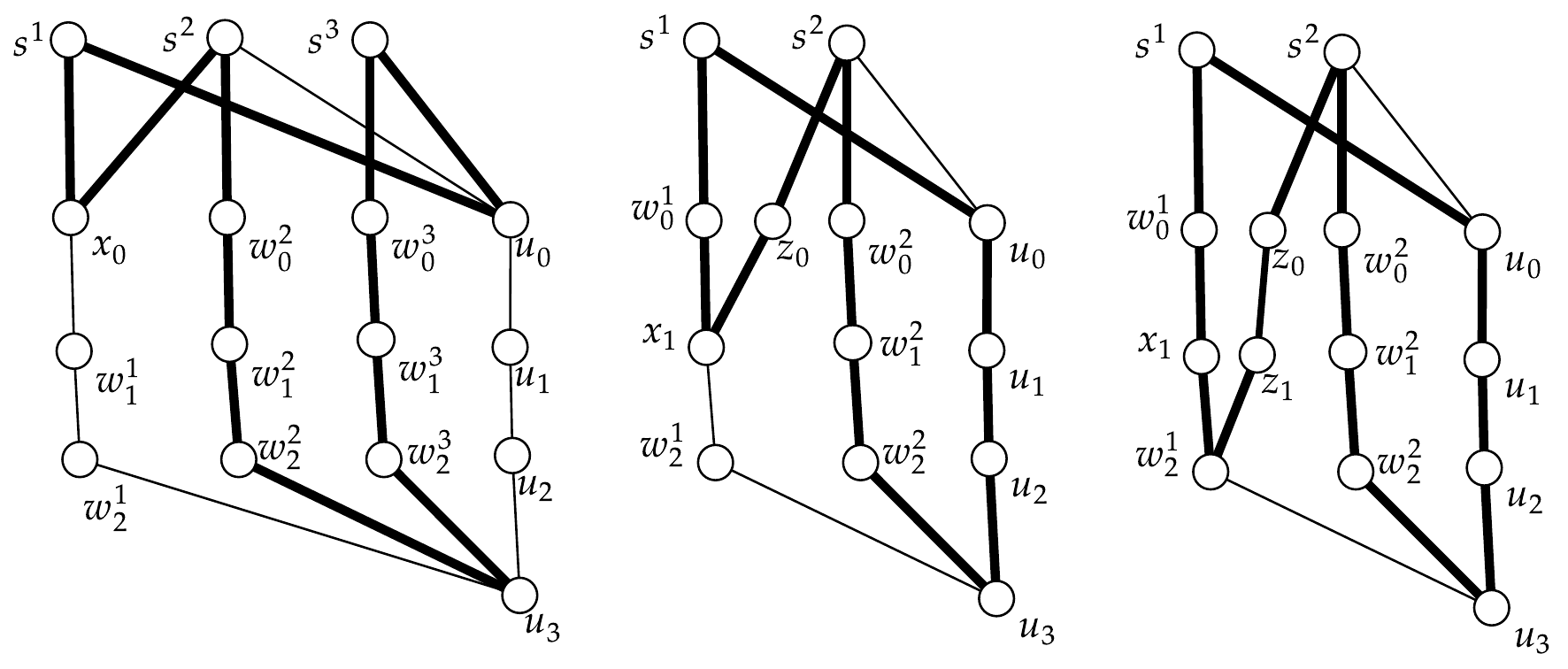}
\caption{Bold cycles are 12- and 14-cycles appearing whenever nodes in $\bad(3)$ have merged paths.}
\label{fig_i3}
\end{figure}

By combining the impossibility results of Claims~\ref{i3_color2},~\ref{i3_color1} and~\ref{i3_color0}, it follows that paths from $\bad(3)$ to $u_3$ can only merge if they go through $u_0$, $u_1$ or~$u_2$. Let us discard paths passing through those nodes. As each node in $\bad(3)$ has at least $f(3)+1$ node-disjoint paths to~$u_3$, at least $f(3)-2$ of these paths do not intersect any other path from~$\bad(3)$. Therefore, $(f(3)-2)\cdot |\bad(3)|\leq \textrm{deg}(u_3)\leq \textrm{deg}(u_0)$, which proves Lemma~\ref{lem_i3}.
\qed

\section{Proof of Lemma~\ref{lem_i456}}
\label{app:lem_i456}

We treat each $i=4,\dots,6$ sequentially. 

\begin{ourclaim}\label{lem_i4}
If $f(4)\geq 5$ then $|\bad(4)| \leq 1$.
\end{ourclaim}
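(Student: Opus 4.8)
The plan is to mirror the strategy used for $i=2$ and $i=3$, but now the budget $f(4)\geq 5$ will be large enough to rule out the existence of even two bad nodes outright. First I would suppose, for contradiction, that $|\bad(4)|\geq 2$, and pick two distinct nodes $s^1,s^2\in\bad(4)$. The aim is to assemble, from the many node-disjoint well-colored paths that each of these nodes has to $u_4$ (there are at least $f(4)+1\geq 6$ of them per node), a short cycle — a $12$- or $14$-cycle — through $u_0$, contradicting the defining property that bad nodes lie on no such cycle.

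The key steps would be: (1) invoke Lemma~\ref{lem_paths_from_B} with $b=2$ and an appropriate small set $U$ (containing $u_0$ and possibly a few nodes of $C^\star$ such as $u_1,u_2,u_3$, giving $|U|\leq 4$, which is within the budget since $f(4)\geq 5\geq (b-1)\cdot 4+|U|$ once one checks the arithmetic carefully — this is exactly why $f(4)$ is set to $5$); (2) conclude that, after reserving one well-colored path $P^1=(s^1,w_0^1,w_1^1,w_2^1,w_3^1)$ from $s^1$ to $u_4$ avoiding $U$, the node $s^2$ still has a well-colored path $P^2=(s^2,w_0^2,w_1^2,w_2^2,w_3^2)$ to $u_4$ node-disjoint from $P^1\cup U$; (3) then splice $P^1$ and $P^2$ together at $u_4$ and close through $u_0$: the walk $(u_0,s^1,w_0^1,w_1^1,w_2^1,w_3^1,u_4,w_3^2,w_2^2,w_1^2,w_0^2,s^2)$ is a $12$-cycle. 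Since $s^1,s^2\in N_G(u_0)$ and everything is node-disjoint by construction, this is a genuine $12$-cycle through $s^1$ and $s^2$, contradicting $s^1,s^2\in\bad(4)$. Hence $|\bad(4)|\leq 1$.

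One subtlety to handle is whether $P^1$ and $P^2$ might pass through $u_0$ itself or through nearby nodes of $C^\star$; this is precisely what putting $u_0$ (and, to be safe, the relevant initial segment of $C^\star$) into the set $U$ of Lemma~\ref{lem_paths_from_B} is for, and one must double-check that $|U|$ stays small enough that the hypothesis $f(4)\geq (b-1)i+|U|$ of that lemma is met with $b=2$, $i=4$. I would also note that $\bad(4)$ could still contain a single node, which is why the bound is $|\bad(4)|\leq 1$ rather than $\bad(4)=\varnothing$: with only one bad node there is no second family of paths to splice, so no contradiction arises.

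The main obstacle is the bookkeeping of which nodes of $C^\star$ need to be excluded so that the assembled cycle is genuinely a $12$-cycle (no repeated vertices, correct length) — in particular making sure the path from $s^2$ avoids not only $P^1$ and $u_0$ but also any vertices that would shorten or lengthen the resulting cycle — while simultaneously keeping $|U|$ within the slack allowed by $f(4)=5$. Once the choice of $U$ is pinned down, the rest is a direct application of Lemma~\ref{lem_paths_from_B} and an explicit cycle, exactly as in Claims~\ref{i2_color0} and~\ref{i3_color0}. The cases $i=5$ and $i=6$ would then follow by the same template with $b=2$ (giving $\bad(5)=\varnothing$, $\bad(6)=\varnothing$), using the larger path-length to absorb a correspondingly larger excluded set.
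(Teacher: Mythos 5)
Your overall strategy is exactly the paper's: assume $|\bad(4)|\geq 2$, apply Lemma~\ref{lem_paths_from_B} with $b=2$ to get two node-disjoint well-colored paths $P^1,P^2$ from $s^1,s^2$ to $u_4$ avoiding $u_0$, and splice them into the $12$-cycle $(u_0,s^1,w_0^1,w_1^1,w_2^1,w_3^1,u_4,w_3^2,w_2^2,w_1^2,w_0^2,s^2)$, contradicting $s^1,s^2\in\bad(4)$. However, the one piece of bookkeeping you explicitly leave open is handled incorrectly in your sketch: with $b=2$ and $i=4$, the hypothesis of Lemma~\ref{lem_paths_from_B} reads $f(4)\geq (b-1)\cdot 4+|U|=4+|U|$, so $f(4)=5$ forces $|U|\leq 1$. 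Your proposed $U$ ``containing $u_0$ and possibly $u_1,u_2,u_3$'' with $|U|\leq 4$ would require $f(4)\geq 8$, and the inequality you assert, $5\geq (b-1)\cdot 4+|U|$, is false for any $|U|\geq 2$. The resolution is that the extra exclusions are unnecessary, not that the budget accommodates them: the paper takes $U=\{u_0\}$ only. Distinctness of the vertices on the spliced cycle is automatic from the coloring — $s^1,s^2$ have color $-1$, the $w_j$'s have colors $0,\dots,3$, $u_4$ has color $4$, and $u_0$ is excluded via $U$ — and it does not matter whether $P^1$ or $P^2$ happens to pass through $u_1,u_2,u_3$, since those vertices are not used explicitly in the cycle. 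So the length-$12$ count and vertex-disjointness hold with $|U|=1$, which is precisely why $f(4)=5$ is tight.

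A minor side remark: your closing claim that $i=5,6$ follow ``by the same template with $b=2$'' is off; there the paper closes the cycle through the segment $u_1,\dots,u_{i}$ of $C^\star$ using a single bad node, i.e., $b=1$ with $U=\{u_0,\dots,u_{i-1}\}$, which is what makes $\bad(5)=\bad(6)=\varnothing$ rather than merely of size at most one.
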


\begin{claimproof}
Let $U=\{u_0\}$. According to Lemma~\ref{lem_paths_from_B} applied with $b=2$, we have that if $|\bad(4)|\geq 2$ then (1)~for all $s^1\in \bad(4)$, there is a well-colored path $P^1=(s^1,w_0^1,w_1^1,w_2^1,w_3^1)$ to $u_4$ that does not intersect~$U$, and (2)~for all $s^2\in \bad(3)\smallsetminus \{s^1\}$, there is a well-colored path $P^2=(s^2,w_0^2,w_1^2,w_2^2,w_3^2)$ to $u_4$ that does not intersect $U\cup P^1$.
It follows that $$ (u_0,s^1,w_0^1,w_1^1,w_2^1,w_3^1,u_4,w_3^2,w_2^2,w_1^2,w_0^2,s^2)$$ 
is a 12-cycle (see Fig.~\ref{fig_i456}-left), which contradicts $s^1,s^2\in \bad(4)$. 
\end{claimproof}

The arguments for $i=5,6$ are a reformulation of Observation~2 in~\cite{Censor-HillelFG20}, we give them for the sake of completeness.

\begin{ourclaim}\label{lem_i5}
If $f(5)\geq 5$ then $\bad(5)=\emptyset$.
\end{ourclaim}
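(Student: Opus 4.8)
The plan is to argue by contradiction, in the spirit of the proof of Claim~\ref{lem_i4}, but exploiting that for $i=5$ a \emph{single} bad node already forces a forbidden cycle — which is why the conclusion is $\bad(5)=\varnothing$ rather than $|\bad(5)|\leq 1$. So I would suppose $\bad(5)\neq\varnothing$ and produce a 12-cycle through a node of $\bad(5)$, contradicting the very definition of this set (its members lie in no 12- or 14-cycle).

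First I would apply Lemma~\ref{lem_paths_from_B} with $b=1$ and $U=\{u_0,u_1,u_2,u_3,u_4\}$, so that $|U|=5$ and the hypothesis $f(5)\geq (b-1)\cdot 5+|U|=5$ holds exactly. The lemma then gives the dichotomy: either $|\bad(5)|<1$, i.e.\ $\bad(5)=\varnothing$ and we are done; or, with $c=0$ and the empty collection $\mathcal{C}$, every $s\in\bad(5)$ admits a well-colored path $P=(s,w_0,w_1,w_2,w_3,w_4)$ to $u_5$ avoiding $u_0,u_1,u_2,u_3$ and $u_4$. Assuming this second alternative, pick any $s^1\in\bad(5)$ with its path $P^1=(s^1,w_0^1,\dots,w_4^1)$ to $u_5$; then I would check that
$$ (u_0,s^1,w_0^1,w_1^1,w_2^1,w_3^1,w_4^1,u_5,u_4,u_3,u_2,u_1) $$
is a simple 12-cycle: the edge $u_0s^1$ exists since $s^1\in N_G(u_0)$, the edges along $P^1$ exist by construction, the edges $u_5u_4,u_4u_3,u_3u_2,u_2u_1,u_1u_0$ belong to $C^\star$, and the twelve listed vertices are pairwise distinct because $P^1$ is a simple path avoiding $\{u_0,\dots,u_4\}$, because $s^1$ is colored $-1$ (so $s^1\neq u_0$), and because $u_5$ is the endpoint of $P^1$ and therefore does not reappear among its internal vertices. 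This 12-cycle passes through $s^1$, contradicting $s^1\notin C_{12}$; hence the second alternative is impossible, and $\bad(5)=\varnothing$.

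The only delicate point — and the step I would double-check — is the disjointness bookkeeping: one must rule out that a well-colored $s$-to-$u_5$ path secretly uses $u_1,u_2,u_3$ or $u_4$, which is exactly why those four vertices (which carry the colors $1,2,3,4$ used internally by such a path) are placed into $U$, while adding $u_0$ to $U$ keeps the resulting cycle from closing prematurely. Once this is set up, the rest is a direct substitution into Lemma~\ref{lem_paths_from_B} followed by an edge-count, so I do not expect any genuine obstacle.
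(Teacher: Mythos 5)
Your proposal is correct and follows essentially the same route as the paper: apply Lemma~\ref{lem_paths_from_B} with $b=1$ and $U=\{u_0,u_1,u_2,u_3,u_4\}$ to obtain, for any $s^1\in\bad(5)$, a well-colored path to $u_5$ avoiding $U$, and then close the $12$-cycle $(u_0,s^1,w_0^1,\dots,w_4^1,u_5,u_4,u_3,u_2,u_1)$ to contradict $s^1\in\bad(5)$. The extra bookkeeping you add (distinctness via the coloring and the explicit dichotomy from the lemma) is consistent with the paper's argument.
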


\begin{claimproof}
Let $U=\{u_0,u_1,u_2,u_3,u_4\}$. By Lemma~\ref{lem_paths_from_B} applied with $b=1$, if $\bad(5)$ is not empty, then for every $s^1\in \bad(5)$, there is a well-colored path $P^1=(s^1,w_0^1,w_1^1,w_2^1,w_3^1,w_4^1)$ to $u_5$ that does not intersect $U$.
Therefore, $$ (u_0,s^1,w_0^1,w_1^1,w_2^1,w_3^1,w_4^1,u_5,u_4,u_3,u_2,u_1)$$ 
is a 12-cycle (see Fig.~\ref{fig_i456}-center), which contradicts $s^1\in \bad(5)$. 
\end{claimproof}

\begin{figure}
\centering
\includegraphics[scale=0.6]{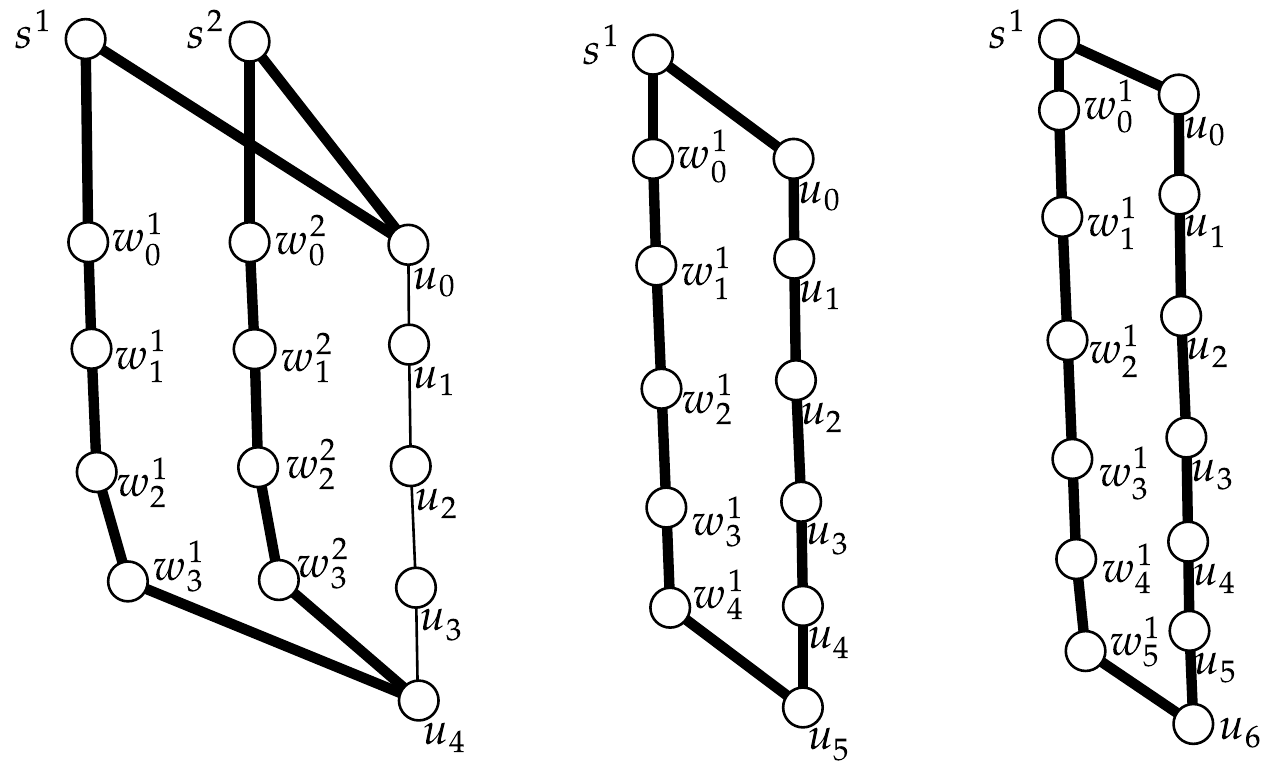}
\caption{Bold cycles are 12- and 14-cycles appearing whenever there are two nodes in $\bad(4)$, or one node in $\bad(5)$, $\bad(6)$.}
\label{fig_i456}
\end{figure}

\begin{ourclaim}\label{lem_i6}
If $f(6)\geq 6$ then $\bad(6)=\emptyset$.
\end{ourclaim}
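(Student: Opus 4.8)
The plan is to mimic the structure of Claims~\ref{lem_i4} and~\ref{lem_i5}, which is the natural template for this range of~$i$: show that assuming $\bad(6)\neq\varnothing$ forces the existence of a short cycle through $u_0$, contradicting the definition of $\bad(6)$. So first I would set $U=\{u_0,u_1,u_2,u_3,u_4,u_5\}$, which has $|U|=6$, and note that the hypothesis $f(6)\geq 6=(b-1)\cdot 6+|U|$ with $b=1$ is exactly what is needed to invoke Lemma~\ref{lem_paths_from_B} with $b=1$ and this set $U$ (here $(b-1)i+|U| = 0\cdot 6 + 6 = 6 \leq f(6)$).

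Next, I would pick any $s^1\in\bad(6)$ (possible since $\bad(6)\neq\varnothing$) and apply Lemma~\ref{lem_paths_from_B} with $b=1$, $c=0$ (so the collection $\mathcal{C}$ is empty) and the above $U$. This yields a well-colored path $P^1=(s^1,w_0^1,w_1^1,w_2^1,w_3^1,w_4^1,w_5^1)$ from $s^1$ to $u_6$ that avoids all of $u_1,\dots,u_5$ (it automatically avoids $u_0$ too since $u_0\in U$, but in any case $s^1\in N(u_0)$). Then I would exhibit the cycle
\[
(u_0,s^1,w_0^1,w_1^1,w_2^1,w_3^1,w_4^1,w_5^1,u_6,u_5,u_4,u_3,u_2,u_1),
\]
which traverses $s^1$ and its length-$6$ well-colored path to $u_6$, then walks back along $C^\star$ from $u_6$ through $u_5,u_4,u_3,u_2,u_1$ to $u_0$. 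Counting vertices: $u_0,s^1$, then $w_0^1,\dots,w_5^1$ (six nodes), then $u_6,u_5,u_4,u_3,u_2,u_1$ (six nodes), for a total of $14$ distinct vertices; disjointness of $P^1$ from $\{u_1,\dots,u_6\}$ (the latter because $P^1$ ends at $u_6$ and the internal nodes of $P^1$ are colored $0,\dots,5$ while $u_6$ is colored $6$, plus $P^1$ avoids $U$) guarantees it is a genuine $14$-cycle. But $C^\star$ is well-colored, so this is a well-colored 14-cycle through $s^1$, contradicting $s^1\notin C_{12}\lor C_{14}$ from the definition of $\bad(6)$. Hence $\bad(6)=\varnothing$, proving the claim.

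I do not expect a real obstacle here, since this is essentially Observation~2 of~\cite{Censor-HillelFG20} transcribed into the present notation; the only points requiring a little care are checking that the arithmetic in the hypothesis of Lemma~\ref{lem_paths_from_B} lines up ($f(6)\geq 6$ is tight for $b=1$ and $|U|=6$), and verifying that all $14$ named vertices are pairwise distinct, which follows from node-disjointness of $P^1$ from $U\cup\{u_6\}$ together with the coloring constraints (the $w_j^1$ have colors $0$ through $5$, hence differ from the $u_j$ with $j\geq 6$ and from one another). With that, combining this claim with Claims~\ref{lem_i4} and~\ref{lem_i5} completes the proof of Lemma~\ref{lem_i456}.
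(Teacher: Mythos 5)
Your proof is correct and follows exactly the paper's own argument: apply Lemma~\ref{lem_paths_from_B} with $b=1$ and $U=\{u_0,\dots,u_5\}$ (the hypothesis $f(6)\geq 6$ matches $(b-1)i+|U|$), then close the well-colored path from $s^1$ to $u_6$ through $u_5,\dots,u_1,u_0$ into a 14-cycle containing $s^1$, contradicting $s^1\in\bad(6)$. The additional checks on vertex distinctness and the arithmetic are fine but add nothing beyond the paper's proof.
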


\begin{claimproof}
Let $U=\{u_0,u_1,u_2,u_3,u_4,u_5\}$. Applying Lemma~\ref{lem_paths_from_B} with $b=1$, we get that if $\bad(6)$ is not empty, then, for every  $s^1\in \bad(6)$, there is a well-colored path $P^1=(s^1,w_0^1,w_1^1,w_2^1,w_3^1,w_4^1,w_5^1)$ to $u_6$ that does not intersect~$U$.
As a consequence, $$ (u_0,s^1,w_0^1,w_1^1,w_2^1,w_3^1,w_4^1,w_5^1,u_6,u_5,u_4,u_3,u_2,u_1)$$ 
is a 14-cycle (see Fig.~\ref{fig_i456}-right), which contradicts $s^1\in \bad(6)$.
\end{claimproof}

The lemma directly follows from Claims~\ref{lem_i4}-\ref{lem_i6}. 
\qed

\section{Deciding $\{C_{10},C_{12}\}$-Freeness}
\label{app:C10C12}

We begin by deciding $C_{10}$-freeness on its own. This is doable using the threshold approach, with the threshold $T_5(i)$, $i=1,\dots,4$, given in~\cite{Censor-HillelFG20}. If a heavy 10-cycle $C=(u_0,\dots,u_9)$ exists, then, by repeating $O(n^{4/5})$ times the choice of~$s$, the probability that $C$ is detected is at least~9/10.
If no node has rejected during the search for a 10-cycle, then we can assume that the graph is $C_{10}$-free. The search for a 12-cycle is still performed by the threshold algorithm. Note that we have proved that deciding $C_{12}$-freeness cannot be done by a threshold algorithm. However, we are now working under the assumption that the graph is $C_{10}$-free. 
Searching for light 12-cycles can be done in $O(n^{5/6})$ rounds. Let us assume that  there exists a heavy 12-cycle $C^\star = \{u_0,\dots, u_{11}\}$ where $u_0$ has maximum degree in the cycle, and, for every~$i=0,\dots,11$, node~$u_i$ has picked color~$i$. Let us fix $T_6(i) = T_5(i)$ for $i=1,\dots,4$. Then there is at least a constant fraction of neighbors $s$ of $u_0$ such that, if $s$ is chosen by the algorithm, then $u_i$ and $u_{12-i}$ receive at most $T_6(i)$ identifiers, for all $1\leq i\leq 4$.

Finally let $T_6(5) = T_6(4)$. This is sufficient because, by Lemma~\ref{lem-paths-from-single-s}, if $u_5$ receives more than $T_6(4)$ identifiers, then there exist two node-disjoint well-colored paths from $s$ to~$u_5$. The combination of these two paths is a 12-cycle. Therefore a node $s$ picked by the algorithm is either in a 12-cycle of his own or will not cause $u_5$ to receive more than $T_6(4)$ identifiers.

\section{Proof of Theorem~\ref{theo:one-every-four}}
\label{sec_incremental}
\label{app:theo:one-every-four}

This section is dedicated to the proof of Theorem~\ref{theo:one-every-four}. In essence, the algorithm deciding $\{C_{4\ell}\mid 1\leq \ell\leq k\}$-freeness performs by successively deciding $C_{4\ell}$-freeness, for $\ell=1,\dots,k$, using threshold-based algorithms. This sequence of algorithms runs in $O(n^{1-1/2k})$ rounds. The same holds for deciding $\{C_{4\ell+2}\mid 1\leq \ell\leq k\}$-freeness, with round-complexity $O(n^{1-1/(2k+1)})$. The proof mostly consists in showing that thresholds can be defined with the guarantee that, if the graph contains a $4\ell$-cycle (or a $(4\ell+2)$-cycle), then this cycle is detected with constant probability. Let us start with $\mathcal{F}_k=\{C_{4\ell}\mid 1\leq \ell\leq k\}$, $k\geq 1$. 

The base case is $\ell=1$, i.e., deciding $C_4$-freeness. It was shown in~\cite{Censor-HillelFG20} that, in this case, a threshold $T_2(1)=1$ suffices.  Let us now assume that we have set up appropriate thresholds for deciding $\mathcal{F}_{k-1}$-freeness. For deciding $\mathcal{F}_k$-freeness, it is sufficient to decide $\mathcal{F}_{k-1}$-freeness first, and then deciding $C_{4k}$-freeness. 
Therefore it is sufficient that the algorithm deciding $C_{4k}$-freeness succeeds whenever the graph is $\mathcal{F}_{k-1}$-free. 
So, from this point on, we assume that the graph has no $4\ell$-cycles, for all $\ell=1,\dots,k-1$. To decide $C_{4k}$-freeness under this latter hypothesis, we set the thresholds as $T_{2k}(0)=1$, and, for $i\geq 1$, 
\[
T_{2k}(i) = \left\{\begin{array}{ll}
        T_{2k}(i-1) & \mbox{if $i$ is odd,}\\
        (i+1) \cdot T_{2k}(i-1) & \mbox{if $i$ is even.}
    \end{array}\right.
\]

To prove that the thresholds work, let $C=(u_0,\dots,u_{4k-1})$  be a $4k$-cycle in the graph, where $u_0$ is the heavy node with largest degree in~$C$. Let us assume that, for every $i=0,\dots,4k-1$, node~$u_i$ is colored~$i$. Note that this occurs with probability at least $1/(4k+1)^{4k}$. Let $s$ be a neighbor of $u_0$ not belonging to any $4k$-cycle, and let us assume that $s$ is colored~$-1$. We define a well-colored path from $s$ to a node $u_i$ as a path of the form $s,w_0,\dots,w_{i-1},u_i$ where, for every $j=0,\dots,i-1$, node $w_j$ is colored~$j$.

To prove that our thresholds are sufficient, we now consider separately the odd and even indices. 

\paragraph{Odd indices.} 

Let $i$ be an odd index with $1\leq i\leq 2k-1$, and let $\rho$ be the maximum number of node-disjoint, well-colored paths from $s$ to~$u_i$. Let us prove by contradiction that  $\rho=1$ (see Fig.~\ref{fig_incremental}(left) for an illustration of the case $\rho \geq  2$).
Indeed, $\rho\geq 1$ as $\{u_0,\dots,u_{i-1}\}$ is a well-colored path from $s$ to~$u_i$.
Suppose now that there exist two node-disjoint well-colored paths from $s$ to $u_i$, denoted by $P$ and~$P'$. Then $\{s\}\cup P\cup \{u_i\}\cup P'$ is a $(2i+2)$-cycle. As $i$ is odd, we get that $2i+2$ is a multiple of 4. If $i\leq 2k-3$, this contradicts our assumption that the graph is $\mathcal{F}_{k-1}$-free. If $i=2k-1$, then $s$ is in a $4k$-cycle that the algorithm finds when it performs color-BFS($s$).
Since $\rho=1$, by Lemma~\ref{lem-paths-from-single-s}, node~$u_i$ receives at most $T_{2k}(i-1)$ identifiers.

\paragraph{Even indices.} 

Let $i$ be an even index with $1\leq i \leq 2k-2$, and let $\rho$ be the maximum number of node-disjoint well-colored paths from $s$ to~$u_i$. 
Let us assume that $s$ is such that $u_i$ receives more than $(i+1) \cdot T_{2k}(i-1)$ identifiers from nodes colored~$i-1$. By Lemma~\ref{lem-paths-from-single-s}, $\rho\geq i+2$. Therefore, there exists a well-colored path from $s$ to $u_{i}$ that does not go through $u_0$. Let us denote this path $P=\{w_0,\dots,w_{i-1}\}$.
Let $s'$ be the next neighbor of $u_0$ colored $-1$ that the algorithm picks. Let us define $\rho'$ as the maximum number of node-disjoint well-colored paths from $s'$ to $u_i$.  If for the source node~$s'$, node $u_i$ also receives more than $(i+1) \cdot T_{2k}(i-1)$ identifiers from nodes colored $i-1$, then, thanks to Lemma~\ref{lem-paths-from-single-s}, $\rho'\geq i+2$. It follows that there exists a well-colored path from $s'$ to $u_i$ that does not go through any of the $i+1$ nodes $u_0, w_0,\dots, w_{i-1}$. Let us denote this path by~$P'$.
As a consequence, $\{s,u_0,s'\}\cup P'\cup \{u_i\} \cup P$ is a $(2i+4)$-cycle (see Fig.~\ref{fig_incremental}(right)). Since $i$ is even, $2i+4$ is a multiple of 4. If $i\leq 2k-4$, this contradicts the assumption that the graph is $\mathcal{F}_{k-1}$-free. If $i=2k-2$, then $s$ is in a $4k$-cycle, which is detected by~$s$.
Therefore, if the algorithm picks a neighbor $s$ of $u_0$ colored $-1$ that causes $u_i$ to receive more than $T_{2k}(i)$ identifiers, then no other neighbor $s'$ of $u_0$ can also cause $u_i$ to receive more than $T_{2k}(i)$ identifiers, as $s$ would then be in a cycle that would be detected by~$s$.

\begin{figure}
\centering
\includegraphics[scale=0.65]{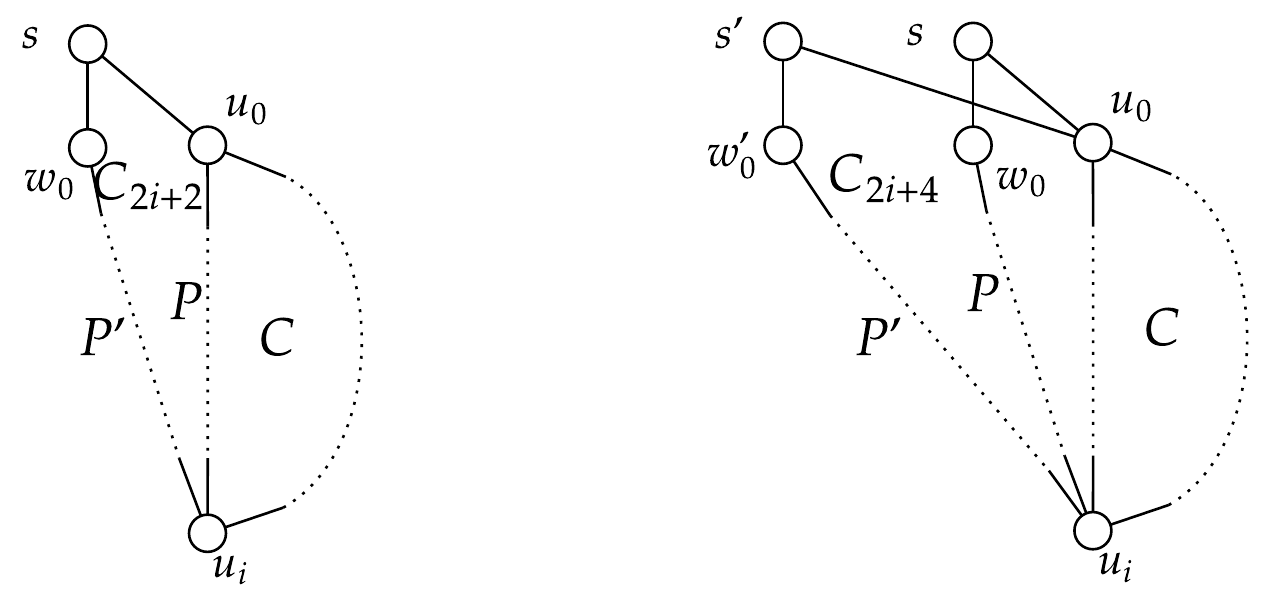}
\caption{Cycles of $\mathcal{F}_{k-1}$ or $4k$-cycles appearing whenever neighbors of $u_0$ have node-disjoint well-colored paths to $u_i$. On the left, $i$ is odd; On the right, $i$ is even.}
\label{fig_incremental}
\end{figure}

\paragraph{Wrap up.}

It results from our analyses for odd and even indices that at most  $k-1$ neighbors $s$ of $u_0$ colored $-1$ may prevent the detection of the $4k$-cycle $(u_0,\dots,u_{4k-1})$, namely at most one source~$s$ for each node $u_{2i}$, for $1\leq i\leq k-1$. Since there are $\Omega(n^{1/2k})$ neighbors of $u_0$ colored~$-1$, the algorithm will  randomly choose at least $k$ different neighbors of~$u_0$ whenever $\Theta(n^{1-1/2k})$ random choices are performed. This completes the proof for $\{C_{4\ell}\mid 1\leq \ell\leq k\}$-freeness, $k\geq 1$. 

\paragraph{$\{C_{4\ell+2}\mid 1\leq \ell \leq k\}$-freeness.}

The proof for $\{C_{4\ell+2}\mid 2\leq \ell \leq k\}$-freeness can be adapted with very little changes from the proof for $\{C_{4\ell}\mid 2\leq \ell\leq k\}$-freeness by inverting the roles of odd and even values of $i$ to compute $T_{2k+1}(i)$.

Indeed, let $i$ be an even index. Similarly to the case of odd indices for $\{C_{4\ell}\mid 2\leq \ell \leq k\}$-freeness, if there are two node-disjoint well-colored paths from a picked $s$ to $u_i$, then $s$ is in a $(2i+2)$-cycle (see Fig.~\ref{fig_incremental}(left)). With $i$ being even, the cycle belongs to $\{C_{4\ell +2}\mid 2\leq \ell \leq k\}$.

On the other hand, let $i$ be an odd index. Similarly to the case of even indices for $\{C_{4\ell}\mid 2\leq \ell\leq k\}$-freeness, if $u_i$ receives more than $(i+1)T_{2k}(i-1)$ identifiers for two different source nodes $s$ and $s'$, then $s$ is in a $(2i+4)$-cycle (see Fig.~\ref{fig_incremental}(right)). With $i$ being odd, the cycle belongs to $\{C_{4l+2}|2\leq l\leq k\}$.

Consequently, the algorithm works by fixing $T_{2k}(0)=1$ and, for $i\geq 1$, 
\[
T_{2k}(i) = \left\{\begin{array}{ll}
        (i+1) \cdot T_{2k}(i-1) & \mbox{if $i$ is odd,}\\
         T_{2k}(i-1) & \mbox{if $i$ is even.}
    \end{array}\right.
\]
This completes the proof of Theorem~\ref{theo:one-every-four}. \qed

\end{document}